\documentclass[a4paper]{article}%
\usepackage{amsfonts}
\usepackage{amsmath}
\usepackage{amssymb}
\usepackage{amstext}
\usepackage{graphicx}
\usepackage[a4paper]{geometry}
\usepackage[toc,page,title,titletoc,header]{appendix}%
\setcounter{MaxMatrixCols}{30}
%TCIDATA{OutputFilter=latex2.dll}
%TCIDATA{Version=5.50.0.2960}
%TCIDATA{CSTFile=40 LaTeX article.cst}
%TCIDATA{Created=Friday, December 23, 2011 17:56:44}
%TCIDATA{LastRevised=Monday, August 31, 2015 11:29:10}
%TCIDATA{<META NAME="GraphicsSave" CONTENT="32">}
%TCIDATA{<META NAME="SaveForMode" CONTENT="2">}
%TCIDATA{BibliographyScheme=Manual}
%TCIDATA{<META NAME="DocumentShell" CONTENT="Standard LaTeX\Blank - Standard LaTeX Article">}
%TCIDATA{Language=American English}
%BeginMSIPreambleData
\providecommand{\U}[1]{\protect\rule{.1in}{.1in}}
%EndMSIPreambleData
\geometry{left=2cm,right=2cm,top=2cm,bottom=2cm}
\newtheorem{theorem}{Theorem}[section]

\newtheorem{definition}[theorem]{Definition}

\newtheorem{lemma}[theorem]{Lemma}

\newtheorem{proposition}[theorem]{Proposition}
\newtheorem{remark}[theorem]{Remark}

\newenvironment{proof}[1][Proof]{\noindent\textbf{#1.} }{\ \rule{0.5em}{0.5em}}
\numberwithin{equation}{section}
%BeginMSIPreambleData
\ifx\pdfoutput\relax\let\pdfoutput=\undefined\fi
\newcount\msipdfoutput
\ifx\pdfoutput\undefined\else
\ifcase\pdfoutput\else
\msipdfoutput=1
\ifx\paperwidth\undefined\else
\ifdim\paperheight=0pt\relax\else\pdfpageheight\paperheight\fi
\ifdim\paperwidth=0pt\relax\else\pdfpagewidth\paperwidth\fi
\fi\fi\fi
%EndMSIPreambleData
\begin{document}

\title{Non-autonomous quantum systems with scale-dependent interface conditions.}
\author{Andrea Mantile\thanks{Laboratoire de Math\'{e}matiques, Universit\'{e} de
Reims - FR3399 CNRS, Moulin de la Housse BP 1039, 51687 Reims, France.}}
\date{}
\maketitle

\begin{abstract}
We consider a class of modified Schr\"{o}dinger operators where the
semiclassical Laplacian is perturbed with $h$-dependent interface conditions
occurring at the boundaries of the potential's support. Under positivity
assumptions on the potential, we show that this modification produces a small
perturbation on the dynamics as $h\rightarrow0$, independently from the time
scale. In the case of a time dependent potential, this yields uniform-in-$h$
stability estimates for products of instantaneous propagators. Then, following
a standard approach, the non-autonomous dynamical system is defined as a limit
of stepwise propagators and its small-$h$ expansion is provided under suitable
regularity assumptions on the potential's variations.

\end{abstract}

\section{Introduction}

Singular perturbations of the 1D Laplacian through non-mixed interface
conditions have been used as a technical tool for the study of the adiabatic
evolution of shape resonances in the asymptotic regime of quantum wells. A
possible approach to this problem consists in using a complex dilation to
identify the resonances and write the adiabatic problem as an evolution
equation for proper eigenstates of the deformed Hamiltonian with spectral gap
condition. This scheme does not allow a uniform-in-time estimate of the
resulting dynamical system and prevents a rigorous estimate of the error in
the adiabatic limit. An alternative approach developed in \cite{FMN2} consists
in modifying the physical Hamiltonian by replacing its kinetic part with a
perturbed Laplacian $\Delta_{\theta}$, whose domain is the restriction%
\begin{equation}
D(\Delta_{\theta})=H^{2}(\mathbb{R}\backslash\left\{  a,b\right\}
)\upharpoonright u:\left\{
\begin{array}
[c]{l}%
\smallskip e^{-\frac{\theta}{2}}u(b^{+})=u(b^{-})\,,\quad e^{-\frac{3}%
{2}\theta}u^{\prime}(b^{+})=u^{\prime}(b^{-})\,,\\
\\
e^{-\frac{\theta}{2}}u(a^{-})=u(a^{+}),\quad e^{-\frac{3}{2}\theta}u^{\prime
}(a^{-})=u^{\prime}(a^{+})\,,
\end{array}
\right.  \label{BC_theta}%
\end{equation}
(here: $\theta\in\mathbb{C}$ and $u(x^{\pm})$ denote the right and left limits
of the function in $x$), while the action is given by: $\Delta_{\theta
}u(x)=u^{\prime\prime}(x)$ for $x\in\mathbb{R}\backslash\left\{  a,b\right\}
$. The corresponding modified Hamiltonian: $H_{\theta}^{h}=-h^{2}%
\Delta_{\theta}+\mathcal{V}\,$, is defined with a potential $\mathcal{V}$
compactly supported on $\left[  a,b\right]  $ and possibly depending on $h$.
The potential's profile can be chosen in order to fix some required spectral
condition (as the existence of shape resonances): in this framework, $h$ is a
small parameter fixing the quantum scale of the model.

According to the analytic dilation technique, the resonances of $H_{\theta
}^{h}$ identify, in the sector \newline$\left\{  z\in\mathbb{C}%
\,,\ -2\operatorname{Im}\varphi<\arg z\leq0\right\}  $, with the spectral
points of the deformed operator: \newline$\left.  H_{\theta}^{h}\left(
\varphi\right)  =-h^{2}e^{-2\varphi\,1_{\mathbb{R}\backslash\left(
a,b\right)  }(x)}\Delta_{\theta+\varphi}+\mathcal{V}\right.  $ resulting from
the sharp exterior dilation: $x\rightarrow e^{\theta1_{\mathbb{R}%
\backslash\left(  a,b\right)  }(x)}x$ (see \cite[Proposition 3.6]{FMN2}; see
also \cite[Chp. 16]{HiSig} and the references therein for an introduction to
the complex deformation method). The interest in these models stands upon the
fact that, when $\theta=\varphi$ and $\operatorname{Im}\theta>0$, the
perturbed Laplacian $-i\Delta_{\theta}$ transforms into the maximal accretive
operator: $\left.  -ie^{-2\theta\,1_{\mathbb{R}\backslash\left(  a,b\right)
}(x)}\Delta_{2\theta}\right.  $ (being $1_{\mathbb{R}\backslash\left(
a,b\right)  }$ the characteristic function of the exterior domain). Hence, the
corresponding complex dilated Hamiltonian: $H_{\theta}^{h}\left(
\theta\right)  $, although nonselfadjoint, is the generator of a quantum
semigroup of contractions (we refer to the Lemma 3.1 in \cite{FMN2}). In the
time-dependent case, this allows to rephrase the adiabatic evolution problem
for the resonances of $H_{\theta}^{h}\left(  t\right)  =-h^{2}\Delta_{\theta
}+\mathcal{V}\left(  t\right)  $ as an adiabatic problem for the corresponding
eigenstates of $H_{\theta}^{h}\left(  \theta,t\right)  $ and, accounting the
contractivity property of $e^{-itH_{\theta}^{h}\left(  \theta,t\right)  }$, a
'standard' adiabatic theory can be developed (e.g. in \cite{Nenciu}). This
approach leaded to a version of the adiabatic theorem holding for shape
resonances in the regime of quantum wells in a semiclassical island
\cite[Theorem 7.1]{FMN2}.

The error introduced using modified Hamiltonians of the type $H_{\theta}^{h}$
is determined by the difference between the modified dynamics and the unitary
evolution generated by the corresponding selfadjoint operator $H_{0}^{h}$. To
justify the use of $H_{\theta}^{h}$ in modelling realistic physical
situations, this error needs to be carefully estimated uniformly-in-time when
$\theta$ is assumed to be small. The case of time independent potentials have
been considered in \cite{Man1}-\cite{Man2}. It is worthwhile to mention at
this concern that, for $\theta$ small, $H_{\theta}^{h}$ is neither selfadjoint
nor symmetric. Hence the definition of the quantum dynamics generated by the
modified operator does not follows using standard arguments from selfadjoint
theory. For $h=1$, an accurate resolvent analysis, and explicit formulas for
the generalized eigenfunctions of the modified operator, allow to obtain a
small-$\theta$ expansion of the stationary waves operators for couple
$\left\{  H_{\theta}^{1},H_{0}^{1}\right\}  $, provided that $\mathcal{V}\in
L^{2}\left(  \mathbb{R}\right)  $ is compactly supported on $\left[
a,b\right]  $ and defined positive. This yields an uniform-in-time estimate
for the 'distance' between the two dynamics according to the expansion:
$e^{-itH_{\theta}^{1}}=e^{-itH_{0}^{1}}+\mathcal{O}\left(  \left\vert
\theta\right\vert \right)  $, where $\mathcal{O}\left(  \cdot\right)  $ is
intended in the $L^{2}$-operator norm sense (see theorem 1.2 in \cite{Man1}).
The case of $h$-dependent models is considered in \cite{Man2} under the
asymptotic regime of quantum wells in a semiclassical island; this particular
framework, realized with a potential formed by the superposition of a
potential barrier supported on $\left[  a,b\right]  $ and potential wells
supported in $\left(  a,b\right)  $ with support of size $h\sim0$, has been
shown to be relevant for the modelling of quantum transport systems where the
charge careers couples with quasi-stationary quantum states (see e.g. in
\cite{BNP1}, \cite{JoPrSj}, \cite{Capa}). Using specific spectral assumptions,
a small-$\theta$ expansion for the modified dynamical system $e^{-itH_{\theta
}^{h}}$, similar to the one given in \cite{Man1}, has been obtained in this
case (see Theorem 4.4 in \cite{Man2}); nevertheless, when both $h$ and
$\theta$ are small, the resulting error term is small uniformly in time only
for initial states belonging to an appropriate subspace with prescribed energy
conditions. This prevents to extend the result to the most relevant case of
time dependent potentials.

In this work we first reconsider the case of autonomous potentials under
generic assumptions. In particular, for $\mathcal{V}\in L^{\infty}\left(
\mathbb{R}\right)  $ with compact support on $\left[  a,b\right]  $ and
$1_{\left[  a,b\right]  }\mathcal{V}>0$, we provide a small-$\theta$ expansion
of the propagator $e^{-itH_{\theta}^{h}}$ globally holding on $L^{2}\left(
\mathbb{R}\right)  $ uniformly w.r.t. $t\in\mathbb{R}$ and $h\in\left(
0,h_{0}\right)  $. This generalizes the result obtained in \cite{Man2} and
allows us to discuss the case of time dependent potentials by adapting the
Kato-Yoshida construction of the modified dynamics in terms of piecewise
product of modified propagators. Although not strictly tailored on the case of
the quantum well regime, our assumptions also includes the case of
$h$-dependent potentials, while the positivity constraint on $\mathcal{V}$ is
still coherent with the description of a potential island generating shape
resonances; in this connection, the operators concerned with our work can be
adapted to the modelling of physical systems involving the interaction with
quasi-stationary states corresponding to shape resonances.

\section{\label{Sec_Model}Models and results}

We consider the modified Schr\"{o}dinger operators%
\begin{equation}
D\left(  H_{\theta}^{h}\right)  =D\left(  \Delta_{\theta}\right)
\,,\qquad\left(  H_{\theta}^{h}\,u\right)  (x)=-h^{2}u^{\prime\prime
}(x)+\mathcal{V}(x)\,u(x)\,,\qquad x\in\mathbb{R}\backslash\left\{
a,b\right\}  \,, \label{H_teta_h}%
\end{equation}
where $\Delta_{\theta}$ is defined according to (\ref{BC_theta}). The domain
$D\left(  \Delta_{\theta}\right)  $ is next considered as an Hilbert subspace
of $H^{2}\left(  \mathbb{R}\backslash\left\{  a,b\right\}  \right)  $ (see the
definition (\ref{BC_theta})).

\begin{theorem}
\label{Theorem_propagator}Let $h\in\left(  0,h_{0}\right]  $, $c>0$,
$\left\vert \theta\right\vert \leq h^{N_{0}}$ with $h_{0}$ suitably small and
$N_{0}>2$. For $\mathcal{V}$ defined according to%
\begin{equation}%
\begin{array}
[c]{ccccc}%
\mathcal{V}\in L^{\infty}\left(  \mathbb{R}\right)  \,, &  &
\text{supp\thinspace}\mathcal{V}=\left[  a,b\right]  \,, &  & 1_{\left[
a,b\right]  }\mathcal{V}>c\,,
\end{array}
\label{V_gen}%
\end{equation}
The map $iH_{\theta}^{h}$ generates a strongly continuous group of bounded
operators both on $L^{2}(\mathbb{R})$ and on the Hilbert space $D\left(
\Delta_{\theta}\right)  $ equipped with the $H^{2}\left(  \mathbb{R}%
\backslash\left\{  a,b\right\}  \right)  $-norm. For $u\in D\left(
\Delta_{\theta}\right)  $ the identity: $i\partial_{t}\left(  e^{-itH_{\theta
}^{h}}u\right)  =H_{\theta}^{h}e^{-itH_{\theta}^{h}}u$, holds in the
$L^{2}(\mathbb{R})$-sense.

For a fixed $t$, $e^{-itH_{\theta}^{h}}$ is $\theta$-analytic w.r.t. the
$L^{2}(\mathbb{R})$-operator norm and allows the expansion%
\begin{equation}
\sup\limits_{t\in\mathbb{R\,}}\left\Vert e^{-itH_{\theta}^{h}}-e^{-itH_{0}%
^{h}}\right\Vert _{\mathcal{L}\left(  L^{2}(\mathbb{R})\right)  }\leq
C_{a,b,c}\,h^{N_{0}-2}\,, \label{propagator_est_1}%
\end{equation}
with $C_{a,b,c}>0$, possibly depending on the data $a,b,c$.
\end{theorem}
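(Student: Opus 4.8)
The plan is to exploit the fact that, although $H^h_\theta$ is neither symmetric nor self-adjoint for $\theta\neq 0$, it differs from the self-adjoint operator $H^h_0$ only through the $\theta$-dependent interface conditions at $a,b$, and the positivity of $\mathcal V$ on $[a,b]$ forces the relevant resolvent kernels to decay exponentially across the support. I would first set up the generation statement. For this I would show that $iH^h_\theta$ is a bounded perturbation, in a suitable sense, of a self-adjoint (or maximally dissipative) generator: concretely, pass to the reference operator $H^h_0$ and express $H^h_\theta$ via a boundary-triple / Krein-type resolvent formula, writing
\begin{equation}
\left(H^h_\theta-z\right)^{-1}=\left(H^h_0-z\right)^{-1}+R_\theta(z),\qquad R_\theta(z)=\mathcal O(\theta),
\label{plan_resolvent}
\end{equation}
where $R_\theta(z)$ is a finite-rank operator built from the two boundary points and from solutions of $-h^2u''+\mathcal Vu=zu$ on the three intervals. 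The key analytic input is that the interface conditions \eqref{BC_theta} are an $\mathcal O(\theta)$ deformation of the standard $C^1$-matching, so the $2\times 2$ matrices governing the matching are invertible with controlled inverse uniformly in $z$ in a half-plane $\operatorname{Im}z\le -\gamma<0$ (say), and the resolvent estimate $\|(H^h_\theta-z)^{-1}\|\le C/|\operatorname{Im}z|$ holds there. By the Hille–Yosida / Phillips theorem this yields a strongly continuous semigroup; applying the same argument with $\theta\mapsto-\bar\theta$ and $z\mapsto\bar z$, or simply noting that the spectrum stays in a thin strip around $\mathbb R$, gives the group. The same resolvent formula, read in the graph norm of $D(\Delta_\theta)\subset H^2(\mathbb R\setminus\{a,b\})$, gives the generation statement on that Hilbert space; the identity $i\partial_t e^{-itH^h_\theta}u=H^h_\theta e^{-itH^h_\theta}u$ for $u$ in the domain is then the standard semigroup differentiation.

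For the analyticity in $\theta$ I would observe that the interface conditions in \eqref{BC_theta} depend analytically (indeed entirely through $e^{-\theta/2}$ and $e^{-3\theta/2}$) on $\theta$, hence the matching matrices, their inverses (for $|\theta|$ small), and therefore $R_\theta(z)$ in \eqref{plan_resolvent} are analytic $\mathcal L(L^2)$-valued functions of $\theta$ on a disk, with bounds uniform in $z$ on a contour encircling the spectrum. Representing $e^{-itH^h_\theta}$ by a Dunford–Cauchy integral of $e^{-itz}(H^h_\theta-z)^{-1}$ over such a contour transfers the analyticity to the propagator, for each fixed $t$.

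The heart of the matter, and the estimate \eqref{propagator_est_1}, is the uniform-in-$t$, uniform-in-$h$ bound on $e^{-itH^h_\theta}-e^{-itH^h_0}$. Here I would \emph{not} integrate $\frac{d}{ds}\big(e^{-i(t-s)H^h_0}e^{-isH^h_\theta}\big)$ directly — that produces a factor $t$ and destroys uniformity. Instead I would follow the stationary strategy already used in \cite{Man1,Man2}: use the stationary representation of the two groups through the generalized eigenfunctions / wave operators, so that $e^{-itH^h_\theta}-e^{-itH^h_0}$ is expressed as an oscillatory integral $\int e^{-it\lambda}\,dE_\theta(\lambda)-\int e^{-it\lambda}\,dE_0(\lambda)$ in which the difference of spectral measures is controlled by $R_\theta(\lambda\pm i0)$, and the $t$-dependence sits only in the unimodular factor $e^{-it\lambda}$. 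The positivity condition $1_{[a,b]}\mathcal V>c$ is exactly what makes the transfer coefficients across $[a,b]$ exponentially small in $1/h$ on the energy window where the modification is felt (an Agmon/JWKB tunnelling estimate: the fundamental solutions behave like $e^{\pm\frac1h\int\sqrt{\mathcal V-\lambda}}$), so the finite-rank correction $R_\theta(\lambda\pm i0)$, which carries one factor $\theta=\mathcal O(h^{N_0})$ and whose remaining $h$-dependence is at worst polynomial (from the $h^{-2}$ in $-h^2\Delta$ and from boundary Wronskian normalizations of order $h$), obeys $\|R_\theta(\lambda\pm i0)\|_{\mathcal L(L^2)}\le C_{a,b,c}\,h^{N_0-2}$ uniformly in $\lambda$. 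Combining this bound with the stationary representation and with the elementary $L^2$-boundedness of the (modified) wave operators gives \eqref{propagator_est_1} with constant independent of $t$ and $h$. The main obstacle I anticipate is precisely this last point: making the $h$-dependence of $R_\theta(\lambda\pm i0)$ explicit and showing it loses at most $h^{-2}$, i.e. a careful bookkeeping of powers of $h$ in the Wronskians, in the normalization of generalized eigenfunctions, and in the resolvent near the bottom of the continuous spectrum, uniformly in $\lambda\in[0,\infty)$ — this is where the hypothesis $N_0>2$ enters and where the positivity of $\mathcal V$ must be used to kill the potentially dangerous contributions.
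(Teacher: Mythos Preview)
Your final paragraph is close in spirit to the paper's strategy, but the paper pushes the wave-operator idea one step further and thereby bypasses the detours you propose for generation and analyticity. Rather than first establishing generation via Hille--Yosida and then analyticity via a contour integral, the paper builds the stationary wave operator $\mathcal{W}_\theta^h$ (with kernel $\int \psi_\theta^h(x,k)\,\overline{\psi_0^h(y,k)}\,dk$), proves directly that it is bounded, invertible, $\theta$-analytic, an isomorphism $H^2(\mathbb{R})\to D(\Delta_\theta)$, intertwines $H_\theta^h\mathcal{W}_\theta^h=\mathcal{W}_\theta^h H_0^h$, and satisfies $\|\mathcal{W}_\theta^h-1\|_{\mathcal L(L^2)}\le C_{a,b,c}h^{N_0-2}$ (Proposition~\ref{Proposition_W_cont}). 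The group is then \emph{defined} by conjugation, $e^{-itH_\theta^h}:=\mathcal{W}_\theta^h\, e^{-itH_0^h}\,(\mathcal{W}_\theta^h)^{-1}$, and generation on $L^2$ and on $D(\Delta_\theta)$, the differential identity, $\theta$-analyticity, and the uniform-in-$t$ estimate \eqref{propagator_est_1} all follow in one stroke from the unitarity of $e^{-itH_0^h}$ and the expansion of $\mathcal{W}_\theta^h$.

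Your Dunford--Cauchy step has a genuine gap: since $\sigma(H_\theta^h)=\mathbb{R}_+$ is unbounded, $e^{-itz}$ is not bounded on any contour ``encircling the spectrum'', so the integral $\oint e^{-itz}(H_\theta^h-z)^{-1}\,dz$ does not converge and cannot be used to transfer analyticity to the propagator. One could try to argue instead through the Hille--Yosida resolvent approximants, but this is circuitous compared with the conjugation argument, and your two-sided trick ``$\theta\mapsto-\bar\theta$'' changes the operator rather than producing a backward semigroup for the same $H_\theta^h$. Finally, the Agmon/WKB tunnelling you invoke is neither used nor needed: the $h^{-2}$ loss in the paper comes from elementary trace estimates for the boundary values of $G^{k,h}$, $H^{k,h}$ and $\psi_0^h$ (Proposition~\ref{Proposition_trace_est_h}), obtained by combining explicit Jost-function formulas at high energy with a basic energy inequality (Lemma~\ref{Lemma_energy_est}) at low energy, uniformly over all $k\in\mathbb{R}$; no exponential smallness and no restriction to an energy window enter the argument.
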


The case of a time dependent Hamiltonian%
\begin{equation}
H_{\theta}^{h}\left(  t\right)  =-h^{2}\Delta_{\theta}+\mathcal{V}\left(
t\right)  \label{H_teta_h_t}%
\end{equation}
is analyzed under the assumptions%
\begin{equation}
\left.
\begin{array}
[c]{ccccc}%
\mathcal{V}\left(  t\right)  \in\mathcal{C}^{0}\left(  \left[  0,T\right]
,L^{\infty}(\mathbb{R},\mathbb{R})\right)  \,, &  & \text{supp\thinspace
}\mathcal{V}\left(  t\right)  =\left[  a,b\right]  \,, &  & 1_{\left[
a,b\right]  }\mathcal{V}\left(  t\right)  >c\,,
\end{array}
\right.  \label{V_cond_1}%
\end{equation}
for some $c>0$, and%
\begin{equation}%
\begin{array}
[c]{ccc}%
\mathcal{V}\left(  t\right)  -\mathcal{V}\left(  s\right)  \in W_{0}%
^{2,\infty}\left(  \left[  a,b\right]  \right)  \,, &  & \forall
\,t,s\in\left[  T,0\right]  \,.
\end{array}
\label{V_cond_2}%
\end{equation}
where%
\begin{equation}
W_{0}^{2,\infty}\left(  \left[  a,b\right]  \right)  =\left\{  \psi\in
W^{2,\infty}\left(  \left[  a,b\right]  \right)  \,\left\vert \ \psi\left(
\alpha\right)  =\psi^{\prime}\left(  \alpha\right)  =0\,,\alpha=a,b\right.
\right\}  \,. \label{Multipliers}%
\end{equation}
The small-$\theta$ behaviour of the resulting quantum dynamical system is
characterized as follows.

\begin{theorem}
\label{Theorem_main}Let $h\in\left(  0,h_{0}\right]  $, $\left\vert
\theta\right\vert \leq h^{N_{0}}$ with $h_{0}$ suitably small and $N_{0}>2$,
and assume $H_{\theta}^{h}\left(  t\right)  $ to be defined according to
(\ref{H_teta_h_t}). Under the conditions (\ref{V_cond_1})-(\ref{V_cond_2}),
there exists a unique family of operators $U_{\theta}^{h}\left(  t,s\right)
$, bounded and strongly continuous in $t$ and $s$ w.r.t. the $L^{2}\left(
\mathbb{R}\right)  $-operator norm, fulfilling the identities%
\begin{equation}
U_{\theta}^{h}\left(  s,s\right)  =1_{L^{2}\left(  \mathbb{R}\right)
}\,,\quad U_{\theta}^{h}\left(  t,s\right)  =U_{\theta}^{h}\left(  t,r\right)
U_{\theta,n}^{h}\left(  r,s\right)  \,,\quad\forall\,s\leq r\leq t\,,
\label{propagator_id_lim}%
\end{equation}
and such that $U_{\theta}^{h}\left(  t,s\right)  u$ is the solution of the
Cauchy problem%
\begin{equation}
\left\{
\begin{array}
[c]{l}%
i\partial_{t}u_{\theta}^{h}\left(  t\right)  =H_{\theta}^{h}\left(  t\right)
u_{\theta}^{h}\left(  t\right)  \,,\quad0\leq s\leq t\leq T\,,\\
\\
u_{\theta}^{h}\left(  s\right)  =u\in D\left(  \Delta_{\theta}\right)  \,.
\end{array}
\right.  \label{Cauchy_pb}%
\end{equation}
Moreover, $U_{\theta}^{h}\left(  t,s\right)  $ is $\theta$-holomorphic in
$\mathcal{L}\left(  L^{2}\left(  \mathbb{R}\right)  \right)  $ and allows the
estimates%
\begin{equation}
\sup_{s,t\in\left[  0,T\right]  }\left\Vert U_{\theta}^{h}\left(  t,s\right)
-U_{0}^{h}\left(  t,s\right)  \right\Vert _{\mathcal{L}\left(  L^{2}\left(
\mathbb{R}\right)  \right)  }\leq M_{a,b,c}\sup_{t\in\left[  0,T\right]
}\left\Vert \mathcal{V}\left(  t\right)  \right\Vert _{L^{\infty}\left(
\mathbb{R}\right)  }\,h^{N_{0}-2-\delta}\,, \label{propagator_est_global_h}%
\end{equation}
with $M_{a,b,c}>0$ depending on the data, but independent from $T$, and
$\delta>0$ arbitrarily small.
\end{theorem}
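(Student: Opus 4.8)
\noindent The plan is to construct $U_\theta^h(t,s)$ by the Kato--Yosida stepwise‑product scheme, using Theorem~\ref{Theorem_propagator} as a black box for the frozen‑time propagators $e^{-i\tau H_\theta^h(t_k)}$, and then to obtain (\ref{propagator_est_global_h}) by comparing the whole evolution with the autonomous one generated by $H_\theta^h(0)$. The first thing to extract is the structural meaning of (\ref{V_cond_2}): setting $P(t):=\mathcal{V}(t)-\mathcal{V}(0)$, hypothesis (\ref{V_cond_2}) says $P(t)\in W_0^{2,\infty}([a,b])$, so multiplication by $P(t)$ is a bounded self‑adjoint operator whose norm, together with those of $P'(t),P''(t)$, is controlled by $\sup_{t}\|\mathcal{V}(t)\|_{L^{\infty}(\mathbb{R})}$; and since $P(t)$ and $P'(t)$ vanish at $a,b$ (see (\ref{Multipliers})), multiplication by $P(t)$ maps $D(\Delta_\theta)$ into itself and is compatible with the interface relations in (\ref{BC_theta}) — both one‑sided traces of $P(t)u$ and of $(P(t)u)'$ at $a,b$ vanish. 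Consequently every $H_\theta^h(t)=H_\theta^h(0)+P(t)$ has the same domain $D(\Delta_\theta)$, the family $t\mapsto P(t)$ is bounded and continuous in $\mathcal{L}(L^2(\mathbb{R}))$ by (\ref{V_cond_1}), and it is also a bounded family of multipliers on the $H^2(\mathbb{R}\setminus\{a,b\})$ scale. This is exactly the framework in which the stepwise construction works, once one has uniform stability of the ordered products.

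For a partition $s=t_0<\dots<t_n=t$ of $[s,t]\subset[0,T]$, with steps $\tau_k=t_{k+1}-t_k$, write $U_{\theta,n}^h(t,s)$ for the time‑ordered product of the factors $e^{-i\tau_k H_\theta^h(t_k)}$ (earliest time acting first). The key lemma is the bound $\sup_{n}\|U_{\theta,n}^h(t,s)\|_{\mathcal{L}(L^2(\mathbb{R}))}\le M$, and its analogue on $D(\Delta_\theta)$ with the $H^2(\mathbb{R}\setminus\{a,b\})$‑norm, uniformly in $h\in(0,h_0]$ and in the partition. By Theorem~\ref{Theorem_propagator} each factor is within $C_{a,b,c}h^{N_0-2}$, in either norm, of the unitary $e^{-i\tau_k H_0^h(t_k)}$; the product must not be estimated factor by factor — that would give $(1+C_{a,b,c}h^{N_0-2})^{n}$, diverging as $n\to\infty$ — but instead compared with the self‑adjoint stepwise product $U_{0,n}^h(t,s)$, using the group property $e^{-i\tau H_\theta^h(0)}e^{-i\tau' H_\theta^h(0)}=e^{-i(\tau+\tau')H_\theta^h(0)}$ to collapse the $\theta$‑dependent pieces, the smallness of the increments $\|P(t_{k+1})-P(t_k)\|$, and the bound $\|e^{-i\tau_k H_\theta^h(0)}-e^{-i\tau_k H_0^h(0)}\|\le C_{a,b,c}h^{N_0-2}$; the resulting discrepancy is then only of the size of the autonomous comparison, giving $M=M_{a,b,c}$. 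Granted stability, the uniform continuity of $t\mapsto\mathcal{V}(t)$ from (\ref{V_cond_1}) makes $n\mapsto U_{\theta,n}^h(t,s)$ Cauchy in $\mathcal{L}(L^2(\mathbb{R}))$ under refinement (the one‑subinterval error $\|U_{\theta,n}^h(t,s)-e^{-i(t-s)H_\theta^h(t_j)}\|$ is bounded by $M$ times the modulus of continuity of $\mathcal{V}$), so $U_\theta^h(t,s):=\lim_n U_{\theta,n}^h(t,s)$ exists, is bounded and strongly continuous in $(t,s)$, satisfies $U_\theta^h(s,s)=1_{L^2(\mathbb{R})}$ and the composition law (\ref{propagator_id_lim}); uniqueness among strongly continuous evolution families solving (\ref{Cauchy_pb}) follows from a Duhamel/Gronwall comparison.

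The same construction carried out on the Hilbert space $D(\Delta_\theta)$ equipped with the $H^2(\mathbb{R}\setminus\{a,b\})$‑norm — here the invariance of $D(\Delta_\theta)$ under multiplication by $P(t)$ is what keeps the stepwise products inside $D(\Delta_\theta)$ — shows that $U_\theta^h(t,s)$ maps $D(\Delta_\theta)$ into itself with a bound uniform in $t,s,h$; differentiating the stepwise relation (\ref{propagator_id_lim}) and passing to the limit then gives, for $u\in D(\Delta_\theta)$, the identity $i\partial_t U_\theta^h(t,s)u=H_\theta^h(t)U_\theta^h(t,s)u$ in $L^2(\mathbb{R})$, i.e. $t\mapsto U_\theta^h(t,s)u$ solves (\ref{Cauchy_pb}). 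For the $\theta$‑holomorphy, each $e^{-i\tau_k H_\theta^h(t_k)}$ is $\theta$‑analytic in $\mathcal{L}(L^2(\mathbb{R}))$ by Theorem~\ref{Theorem_propagator}, hence so is the finite product $U_{\theta,n}^h(t,s)$; the convergence $U_{\theta,n}^h\to U_\theta^h$ is uniform on the disc $|\theta|\le h^{N_0}$ (all the bounds above being uniform there), so by Vitali's theorem the limit $U_\theta^h(t,s)$ is $\theta$‑holomorphic in $\mathcal{L}(L^2(\mathbb{R}))$.

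Finally the estimate (\ref{propagator_est_global_h}), which I expect to be the main obstacle. A naive telescoping of $U_{\theta,n}^h(t,s)-U_{0,n}^h(t,s)$ fails because the one‑step discrepancy $\|e^{-i\tau_k H_\theta^h(t_k)}-e^{-i\tau_k H_0^h(t_k)}\|$ does not decay with $\tau_k$, so summing over $\sim n$ steps destroys the bound as $n\to\infty$. Instead I would compare $U_\theta^h(t,s)$ with $U_0^h(t,s)$ through the autonomous propagators by the Duhamel identity relative to $H_\theta^h(0)$ and $H_0^h(0)$,
\begin{equation*}
U_\theta^h(t,s)-U_0^h(t,s)=\big(e^{-i(t-s)H_\theta^h(0)}-e^{-i(t-s)H_0^h(0)}\big)-i\int_s^t\Big(e^{-i(t-r)H_\theta^h(0)}P(r)\,U_\theta^h(r,s)-e^{-i(t-r)H_0^h(0)}P(r)\,U_0^h(r,s)\Big)\,dr .
\end{equation*}
The first term is $O(C_{a,b,c}h^{N_0-2})$ uniformly in $t-s$ by (\ref{propagator_est_1}). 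The difficulty is the integral: a crude bound of the integrand loses a factor $(t-s)\sup_t\|\mathcal{V}(t)\|$, which would make the estimate depend on $T$. To avoid this one must use the finer information behind the proof of Theorem~\ref{Theorem_propagator}, namely that $e^{-i\tau H_\theta^h(0)}-e^{-i\tau H_0^h(0)}$ is a correction carried by a wave‑operator‑type factor localized near $\supp\mathcal{V}=[a,b]$, and that $1_{[a,b]}\mathcal{V}>c$ turns $[a,b]$ into a semiclassical barrier through which such localized corrections, once produced, are transported into the exterior by the essentially free self‑adjoint flow rather than re‑amplified at later times; together with the fact that the $W_0^{2,\infty}$‑multiplier $P(r)$ commutes with that localized factor up to a boundary error at $\{a,b\}$ of size $O(h^{N_0-2}\|\mathcal{V}(r)\|)$, this collapses the integral to a single Duhamel correction of size $M_{a,b,c}\sup_t\|\mathcal{V}(t)\|\,h^{N_0-2}$, up to the $h^{-\delta}$ loss produced by optimizing a cutoff/mesh in the localization argument. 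Passing to the limit in the partition preserves the bound, giving (\ref{propagator_est_global_h}) with $M_{a,b,c}$ independent of $T$. Making this near‑commutation of the $W_0^{2,\infty}$‑multipliers with the $\theta$‑dependent interface correction rigorous, uniformly for $t\in[0,T]$, is the delicate point, since it is exactly what prevents an $O(T)$ accumulation of errors.
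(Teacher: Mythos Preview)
Your construction of $U_\theta^h(t,s)$ via stepwise products, the passage to the limit, and the $\theta$-holomorphy argument are essentially the paper's approach. (Your stability sketch is vaguer than the paper's Lemma~\ref{Lemma_propagator_n}: there each factor is expanded by Duhamel relative to the \emph{fixed} reference $H_\theta^h(t_{j-1})$, the product of binomials is opened, and the $\binom{n_0}{m}$ terms containing $m$ correction integrals of size $O(1/n)$ sum to $(1+C_{a,b,c,\mathcal V}/n)^{n}\le e^{C_{a,b,c,\mathcal V}}$. Your ``collapse via the group property'' points in the right direction but is not the mechanism.)

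The genuine gap is in the proof of the estimate~(\ref{propagator_est_global_h}). You correctly diagnose that a Duhamel telescope accumulates $O(T)$ errors, but your proposed remedy --- a near-commutation of the $W_0^{2,\infty}$ multiplier $P(r)$ with the interface correction, combined with a semiclassical-barrier argument preventing re-amplification --- is left unfinished, and it is not clear it can be made rigorous uniformly in $T$. More to the point, the paper does \emph{not} attempt anything of this kind.

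The paper's argument is a short complex-analytic trick that bypasses the Duhamel integral entirely. You already have the two ingredients: (i) $\theta\mapsto U_\theta^h(t,s)$ is holomorphic on the disc $|\theta|\le h^{2+\delta}$ (the whole construction goes through for any exponent $>2$, not only the given $N_0$), and (ii) the $T$-independent stability bound $\|U_\theta^h(t,s)\|\le \exp\bigl(C_{a,b,c}\sup_t\|\mathcal V(t)\|_{L^\infty}\bigr)$ on that disc. Since $U_\theta^h-U_0^h$ vanishes at $\theta=0$, write $U_\theta^h-U_0^h=\theta\,D_\theta^h$ with $D_\theta^h$ holomorphic; evaluating on $|\theta|=h^{2+\delta}$ and using (ii) gives, by the maximum principle,
\[
\sup_{|\theta|\le h^{2+\delta}}\|D_\theta^h(t,s)\|_{\mathcal L(L^2)}\ \le\ \frac{M_{a,b,c}}{h^{2+\delta}}\,\sup_{t\in[0,T]}\|\mathcal V(t)\|_{L^\infty}\,,
\]
uniformly in $t,s$ and $h$. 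Restricting to $|\theta|\le h^{N_0}$ then yields $\|U_\theta^h-U_0^h\|\le |\theta|\cdot M_{a,b,c}\,h^{-2-\delta}\sup_t\|\mathcal V(t)\|\le M_{a,b,c}\,h^{N_0-2-\delta}\sup_t\|\mathcal V(t)\|$. The $T$-independence is inherited directly from the $T$-independence of the stability constant in (ii); no localization or barrier argument is needed, and the $h^{-\delta}$ loss is exactly the price of applying the Cauchy/Schwarz estimate on the disc of radius $h^{2+\delta}$ rather than $h^2$.

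In short: you have assembled all the pieces, but you are trying to force a hard real-variable estimate where a one-line Schwarz-lemma argument suffices.
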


Our modified model $H_{\theta}^{h}$ identifies with an extension of the
symmetric operator%
\begin{equation}
H_{0,0}^{h}=H_{0}^{h}\upharpoonright\left\{  u\in H^{2}\left(  \mathbb{R}%
\right)  \,\left\vert \ u(\alpha)=u^{\prime}(\alpha)=0\,,\ \alpha=a,b\right.
\right\}  \,. \label{H_00}%
\end{equation}
Hence, $H_{\theta}^{h}$ is explicitly solvable w.r.t. $H_{0}^{h}$ and relevant
quantities, as its resolvent or generalized eigenfunctions, can be expressed
in terms of corresponding non-modified quantities, related to $H_{0}^{h}$,
through non-perturbative formulas. This well-known property of point
perturbations (see e.g. in \cite{Albe}) provides with an useful tool for the
spectral analysis and allows to consider the pair $\left\{  H_{\theta}%
^{h},H_{0}^{h}\right\}  $ as a scattering system. Following this approach, in
the Section \ref{Sec_Modop}, we use a 'Krein-like' formula for the modified
generalized eigenfunctions: this yields a non-perturbative representation of
the stationary wave operators related to the couple $\left\{  H_{\theta}%
^{h},H_{0}^{h}\right\}  $. Thus, in the autonomous case, the quantum dynamics
generated by $H_{\theta}^{h}$ is determined by conjugation from $e^{-itH_{0}%
^{h}}$ and an uniform-in-time estimate for the 'distance' between the two
dynamics follows from the small-$\theta$ behaviour of the stationary waves
operators. In particular, the positivity assumption for the potential allows
to extend the result obtained in \cite{Man2}, providing with the
small-$\theta$ expansion (\ref{propagator_est_1}) for the modified quantum
propagator holding without restrictions on the initial state.

In the non-autonomous case, considered in the Section \ref{Sec_Nonaut}, the
dynamics is approximated by a stepwise product of propagators associated to
the 'instantaneous' Hamiltonians. The stability of the approximating dynamics
w.r.t. the $\mathcal{L}\left(  L^{2}\left(  \mathbb{R}\right)  \right)  $ and
$\mathcal{L}\left(  D\left(  \Delta_{\theta}\right)  \right)  $ topologies is
discussed in the Lemma \ref{Lemma_propagator_n}; using this result, its
uniform convergence is obtained in the Proposition
\ref{Proposition_propagator_lim} by adapting the approach of \cite{Kato1} to
our nonselfadjoint framework.

The expansion (\ref{propagator_est_global_h}), obtained under the conditions
(\ref{V_cond_1})-(\ref{V_cond_2}), describes the asymptotic behaviour of the
modified dynamical system, as $h,\theta\rightarrow0$, in the case of
time-dependent potentials. It is worthwhile to remark that the prescription:
$\left.  \mathcal{V}\left(  t\right)  -\mathcal{V}\left(  s\right)  \in
W_{0}^{2,\infty}\left(  \left[  a,b\right]  \right)  \right.  $ in
(\ref{V_cond_2}) is coherent with the modelling of quantum transport systems
where the variations in time of the potential, determined by the (possibly
nonlinear) time evolution of quasi-stationary states, are expected to be
concentrated in small regions of the device (i.e. inside $\left(  a,b\right)
$). Moreover, this result does not depends on the time scale $T$: hence, it
can be possibly adapted to the analysis of adiabatic evolution problems, where
the potential's variation rate is fixed by $\varepsilon$ small, and the
natural time scale grows according to $1/\varepsilon$.

\subsection{Notation}

In what follows: $B_{\delta}(p)$ is the open disk of radius $\delta$ centered
in a point $p\in\mathbb{C}$; $\mathbb{C}^{+}$ is the upper complex half-plane;
$1_{\Omega}(\cdot)$ is the characteristic function of a domain $\Omega$;
$\partial_{j}f$, denotes the derivative of $f$ w.r.t. the $j$-th variable;
$\mathcal{C}_{x}^{n}(U)$ is the set of $\mathcal{C}^{n}$-continuous functions
w.r.t. $x\in U\subseteq\mathbb{R}$, while $\mathcal{H}_{z}(D)$ is the set of
holomorphic functions w.r.t. $z\in D\subseteq\mathbb{C}$. The notation
'$\lesssim$', appearing in some of the proofs, denotes the inequality: '$\leq
C$' being $C$ a suitable positive constant. Moreover, the generalization of
the Landau notation $\mathcal{O}\left(  \cdot\right)  $ is defined according to

\begin{definition}
\label{Landau_Notation}Let be $X$ a metric space and $f,g:X\rightarrow
\mathbb{C}$. Then $f=\mathcal{O}\left(  g\right)  $
$\overset{def}{\Longleftrightarrow}$ $\forall\,x\in X$ it holds: $\left.
f(x)=p(x)g(x)\right.  $, being $p$ a bounded map $X\rightarrow\mathbb{C}$.
\end{definition}

\section{\label{Sec_Modop}Scattering by interface conditions}

Following the analysis developed in \cite{Man1},\cite{Man2}, we next resume
the main features of the operators $H_{\theta}^{h}$, focusing on the
scattering couple $\left\{  H_{\theta}^{h},H_{0}^{h}\right\}  $.

In the case $h=1$, it has been shown that the interface conditions
(\ref{BC_theta}) do not modify the spectrum provided that $\theta$ is small
(see \cite[Proposition 2.6]{Man1}). In the present case, the dilation:
$y=\left(  x-\left(  b+a\right)  /2\right)  /h$ transforms the boundary
conditions (\ref{BC_theta}) into%
\begin{equation}
\left\{
\begin{array}
[c]{l}%
\smallskip e^{-\theta/2}u(\beta_{h}^{+})=u(\beta_{h}^{-})\,,\quad
e^{-\theta\,3/2}u^{\prime}(\beta_{h}^{+})=u^{\prime}(\beta_{h}^{-})\,,\\
\\
e^{-\theta/2}u(\alpha_{h}^{-})=u(\alpha_{h}^{+}),\quad e^{-\theta
\,3/2}u^{\prime}(\alpha_{h}^{-})=u^{\prime}(\alpha_{h}^{+})\,,
\end{array}
\right.  \label{B_C_h}%
\end{equation}
with: $\alpha_{h}=-\left(  b-a\right)  /2h$ and $\beta_{h}=\left(  b-a\right)
/2h$. The corresponding unitary map on $L^{2}\left(  \mathbb{R}\right)  $
transforms $H_{\theta}^{h}$ into the dilated operator%
\begin{equation}
\tilde{H}_{\theta}:\left\{
\begin{array}
[c]{l}%
D\left(  \tilde{H}_{\theta}\right)  =\left\{  u\in H^{2}\left(  \mathbb{R}%
\backslash\left\{  \alpha_{h},\beta_{h}\right\}  \right)  \,\left\vert
\text{(\emph{\ref{B_C_h}}) holds}\right.  \right\}  \,,\\
\\
\left(  \tilde{H}_{\theta}\,u\right)  (x)=-u^{\prime\prime}(x)+\mathcal{\tilde
{V}}(x)\,u(x)\,,\qquad x\in\mathbb{R}\backslash\left\{  \alpha_{h},\beta
_{h}\right\}  \,,
\end{array}
\right.  \label{H_teta_dil}%
\end{equation}
where $\mathcal{\tilde{V}}\left(  x\right)  =\mathcal{V}\left(  hx+\left(
b+a\right)  /2\right)  $ is compactly supported on $\left[  \alpha_{h}%
,\beta_{h}\right]  $.

\begin{proposition}
\label{Proposition_spectrum}Let $h>0$ fixed and consider the operators
$H_{\theta}^{h}$ defined in (\ref{H_teta_h}) with%
\begin{equation}%
\begin{array}
[c]{ccc}%
\mathcal{V}\in L^{2}(\mathbb{R},\mathbb{R})\,, &  & \text{supp }%
\mathcal{V}=\left[  a,b\right]  \,.
\end{array}
\label{V}%
\end{equation}
For any couple $\theta\in\mathbb{C}$, the essential part of the spectrum is
$\sigma_{ess}\left(  H_{\theta}^{h}\right)  =\mathbb{R}_{+}$. If, in addition,
$\mathcal{V}$ is assumed to be defined positive, i.e.%
\begin{equation}
\left\langle u,\mathcal{V\,}u\right\rangle _{L^{2}(\left(  a,b\right)
)}>0\qquad\forall\,u\in L^{2}(\left(  a,b\right)  )\ s.t.\ u\neq0\,,
\label{V_pos}%
\end{equation}
it exists $\delta>0$, possibly depending on $h$, such that: $\sigma\left(
H_{\theta}^{h}\right)  =\mathbb{R}_{+}$ for all $\theta\in B_{\delta}\left(
0\right)  $.
\end{proposition}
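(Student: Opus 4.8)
The plan is to work with the dilated operator $\tilde{H}_\theta$ of (\ref{H_teta_dil}), since $H_\theta^h$ and $\tilde{H}_\theta$ are unitarily equivalent and the spectral claims are preserved. The claim about the essential spectrum should follow from a compact-perturbation / relative-compactness argument: the interface conditions (\ref{B_C_h}) are a finite-rank-type modification of the free Laplacian on $\mathbb{R}\setminus\{\alpha_h,\beta_h\}$ (equivalently a point perturbation, recalling from the text that $H_\theta^h$ is an extension of the symmetric operator $H_{0,0}^h$ of (\ref{H_00}) with deficiency indices $(4,4)$), and $\mathcal{\tilde V}\in L^2$ with compact support is relatively compact with respect to $-\partial_x^2$. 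Concretely, I would compare resolvents: show that $(\tilde H_\theta-z)^{-1}-(\tilde H_0^{\rm Dir}-z)^{-1}$ is compact, where $\tilde H_0^{\rm Dir}$ is a reference selfadjoint operator (e.g. free Laplacian with decoupling Dirichlet conditions at $\alpha_h,\beta_h$, plus the potential), using an explicit Krein-type resolvent formula whose correction term is a finite-rank operator built from the boundary traces. Since $\sigma_{ess}$ is stable under compact resolvent perturbations and the reference operator has $\sigma_{ess}=\mathbb{R}_+$, we get $\sigma_{ess}(\tilde H_\theta)=\mathbb{R}_+$ for every $\theta\in\mathbb{C}$.

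For the second, sharper statement — that under positivity (\ref{V_pos}) the whole spectrum equals $\mathbb{R}_+$ for $\theta$ in a small disk — the idea is to rule out eigenvalues (and, more generally, any spectrum) off $\mathbb{R}_+$. First consider $\theta=0$: then $H_0^h=-h^2\partial_x^2+\mathcal V$ is selfadjoint and, because $\mathcal V\ge 0$ in the form sense on $(a,b)$ and vanishes outside, one has $\langle u, H_0^h u\rangle\ge 0$ with no $L^2$-eigenfunction at $0$ or below; a standard ODE argument (a compactly-supported-obstacle potential on the line cannot trap a negative or zero energy bound state) gives $\sigma(H_0^h)=\sigma_{ess}(H_0^h)=\mathbb{R}_+$ and, crucially, the absence of eigenvalues embedded in $\mathbb{R}_+$ as well — or at least a resolvent bound away from $\mathbb{R}_+$. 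Then I would perturb in $\theta$: the family $\theta\mapsto\tilde H_\theta$ is an analytic family of type (A) (the action is $\theta$-independent and only the interface conditions vary analytically; this can be made rigorous via the unitary/boundary-triple description or by the explicit resolvent formula, whose ingredients depend holomorphically on $e^{\theta/2},e^{3\theta/2}$). Eigenvalues vary continuously, so since there are none in $\mathbb{C}\setminus\mathbb{R}_+$ at $\theta=0$ and $\sigma_{ess}$ stays pinned at $\mathbb{R}_+$, none can appear for $|\theta|<\delta$; equivalently, the resolvent $(\tilde H_\theta - z)^{-1}$, given by the Krein formula, stays bounded for $z$ in any compact subset of $\mathbb{C}\setminus\mathbb{R}_+$ once the $\theta$-dependent finite-dimensional "Krein matrix" remains invertible, which it does by continuity from its invertibility at $\theta=0$.

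The main obstacle, I expect, is the last point: one must control the invertibility of the Krein $Q$-matrix (the finite-rank Weyl-type function encoding the interface conditions) uniformly for $z$ ranging over an unbounded region $\mathbb{C}\setminus\mathbb{R}_+$, not merely on compacts, in order to exclude spectrum escaping to infinity off the real axis. This is handled by examining the large-$|z|$ asymptotics of the free resolvent kernel $e^{i\sqrt z|x-y|}/(2i\sqrt z)$ and its derivatives at the interface points: as $|z|\to\infty$ in $\mathbb{C}\setminus\mathbb{R}_+$ the off-diagonal couplings decay while the diagonal entries are governed by powers of $\sqrt z$, so the $Q$-matrix is dominated by an invertible leading term plus an $O(|\theta|)+O(|z|^{-1/2})$ correction, giving invertibility for $|\theta|<\delta$ with $\delta$ depending on $h$ (since the interface points $\alpha_h,\beta_h$ and the rescaled potential depend on $h$). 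Assembling these estimates yields $\sigma(H_\theta^h)=\mathbb{R}_+$ for all $\theta\in B_\delta(0)$, with $\delta=\delta(h)$ as claimed.
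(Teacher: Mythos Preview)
Your reduction to the dilated operator $\tilde H_\theta$ via the unitary change of variable $y=(x-(a+b)/2)/h$ is exactly the paper's proof: after noting the unitary equivalence, the paper simply invokes Proposition~2.6 of \cite{Man1} for $\tilde H_\theta$ and stops. So at the level of what the paper actually writes, your approach coincides with it.

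What you add beyond the paper is a sketch of the cited result itself. That sketch is broadly sound --- the Krein resolvent formula (finite-rank correction) gives stability of $\sigma_{ess}$, and invertibility of the boundary ``$Q$-matrix'' for all $z\in\mathbb C\setminus\mathbb R_+$ gives the absence of point spectrum off $\mathbb R_+$. One soft spot worth flagging: the sentence ``eigenvalues vary continuously, so since there are none at $\theta=0$ none can appear for $|\theta|<\delta$'' is not an argument by itself, because discrete eigenvalues of an analytic family can emerge from the boundary of the essential spectrum (here the threshold $z=0$) as the parameter moves. You seem aware of this, since you then switch to the more robust route of directly proving invertibility of the Krein matrix; but your large-$|z|$ analysis only covers the region near infinity, and your ``compact subsets of $\mathbb C\setminus\mathbb R_+$'' argument does not by itself cover a neighbourhood of the threshold. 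To close the argument you need a separate low-energy estimate showing the $Q$-matrix stays invertible as $z\to 0$ in $\mathbb C\setminus\mathbb R_+$; this is where the positivity hypothesis (\ref{V_pos}) really enters (it excludes a zero-energy resonance/bound state for $H_0^h$, so the Jost Wronskian $w^h(k)$ does not vanish at $k=0$, cf.\ (\ref{Jost_fun_boundary_est}) later in the paper). With that threshold control added, your outline is a correct proof of the cited proposition.
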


\begin{proof}
From to the Proposition 2.6 in \cite{Man1}, the result holds for $\tilde
{H}_{\theta}$; then it extends to $H_{\theta}^{h}$ due to the unitarily
equivalence of the two operators.
\end{proof}

\begin{description}
\item[\emph{Notice}] The essential spectrum of $A$ is here defined according
to \cite{Wolf} as $\sigma_{ess}\left(  A\right)  =\mathbb{C}\backslash
\mathcal{F}\left(  A\right)  $, being $\mathcal{F}\left(  A\right)  $ the set
of complex $\lambda\in\mathbb{C}$ s.t. $\left(  A-\lambda\right)  $ is Fredholm.
\end{description}

The point perturbation model $H_{\theta}^{h}$ can be described as a
restriction of the adjoint operator $\left(  H_{0,0}^{h}\right)  ^{\ast}$ (see
\ref{H_00}) through linear relations on an auxiliary Hilbert space. This
construction, achieved in \cite{Man2} using the 'boundary triples' technique,
allows to express the difference $\left.  \left(  H_{\theta}^{h}-z\right)
^{-1}-\left(  H_{0}^{h}-z\right)  ^{-1}\right.  $ in terms of a finite rank
operator with range: $\ker\left(  \left(  H_{0,0}^{h}\right)  ^{\ast
}-z\right)  $. A basis of this defect space is formed by the Green's functions
associated to the differential operator $\left(  -h^{2}\partial_{x}%
^{2}+\mathcal{V-}z\right)  $, and by their first derivatives. Let $z\in
res\left(  H_{0}^{h}\right)  $ and introduce $\mathcal{G}^{z,h}(x,y)$ and
$\mathcal{H}^{z,h}(x,y)$ as solutions of the boundary value problems%
\begin{equation}
\left\{
\begin{array}
[c]{lll}%
\left(  -h^{2}\partial_{x}^{2}+\mathcal{V-}z\right)  \mathcal{G}^{z,h}%
(\cdot,y)=0\,, &  & \text{in }\mathbb{R}/\left\{  y\right\}  \,,\\
&  & \\
\mathcal{G}^{z,h}(y^{+},y)=\mathcal{G}^{z,h}(y^{-},y)\,, &  & h^{2}\left(
\partial_{1}\mathcal{G}^{z}(y^{+},y)-\partial_{1}\mathcal{G}^{z}%
(y^{-},y)\right)  =-1\,,
\end{array}
\right.  \label{Green_eq}%
\end{equation}
and%
\begin{equation}
\left\{
\begin{array}
[c]{lll}%
\left(  -h^{2}\partial_{x}^{2}+\mathcal{V-}z\right)  \mathcal{H}^{z,h}%
(\cdot,y)=0\,, &  & \text{in }\mathbb{R}/\left\{  y\right\}  \,,\\
&  & \\
h^{2}\left(  \mathcal{H}^{z,h}(y^{+},y)-\mathcal{H}^{z,h}(y^{-},y)\right)
=1\,, &  & \partial_{1}\mathcal{H}^{z}(y^{+},y)=\partial_{1}\mathcal{H}%
^{z}(y^{-},y)\,,
\end{array}
\right.  \label{D_Green_eq}%
\end{equation}
Then $\ker\left(  \left(  H_{0,0}^{h}\right)  ^{\ast}-z\right)  =l.c.\left\{
\gamma_{z,h,j}\right\}  _{j=1}^{4}$, with%
\begin{equation}
\gamma_{z,h,1}=\mathcal{G}^{z,h}(x,b)\,,\ \gamma_{z,h,2}=\mathcal{H}%
^{z,h}(x,b)\,,\ \gamma_{z,h,3}=\mathcal{G}^{z,h}(x,a)\,,\ \gamma
_{z,h,4}=\mathcal{H}^{z,h}(x,a)\,. \label{defect1}%
\end{equation}
Following \cite[eq. (2.19) and (2.26)]{Man2}, for all $z\in res\left(
H_{0}^{h}\right)  $ the identity
\begin{equation}
\left(  H_{\theta}^{h}-z\right)  ^{-1}-\left(  H_{0}^{h}-z\right)  ^{-1}%
=-\sum_{i,j=1}^{4}\left[  \left(  B_{\theta}\,q(z,h)-A_{\theta}\right)
^{-1}B_{\theta}\right]  _{ij}\left\langle \gamma_{\bar{z},h,j},\cdot
\right\rangle _{L^{2}(\mathbb{R})}\gamma_{z,h,j}\,, \label{krein_1}%
\end{equation}
holds with%
\begin{equation}%
\begin{array}
[c]{cc}%
h^{2}A_{\theta}-1=%
\begin{pmatrix}
e^{\theta3/2} &  &  & \\
& e^{\theta/2} &  & \\
&  & e^{-\theta3/2} & \\
&  &  & e^{-\theta/2}%
\end{pmatrix}
\,, & B_{\theta}=%
\begin{pmatrix}
0 & 1-e^{\theta3/2} &  & \\
e^{\theta/2}-1 & 0 &  & \\
&  & 0 & 1-e^{-\theta3/2}\\
&  & e^{-\theta/2}-1 & 0
\end{pmatrix}
\,,
\end{array}
\label{AB_theta_matrix}%
\end{equation}
$\,$and $q(z,h)$ depending on the boundary values of $\mathcal{G}^{z,h}$,
$\mathcal{H}^{z,h}$ and $\partial_{1}\mathcal{H}^{z,h}$ according to%
\begin{align}
&  q(z,h)\label{q_z}\\
&  =%
\begin{pmatrix}
\mathcal{G}^{z,h}(b,b)\medskip & -\left(  \mathcal{H}^{z,h}(b^{-},b)+\frac
{1}{2h^{2}}\right)  & \mathcal{G}^{z,h}(b,a) & -\mathcal{H}^{z,h}(b,a)\\
\mathcal{H}^{z,h}(b^{-},b)+\frac{1}{2h^{2}} & -\partial_{1}\mathcal{H}%
^{z,h}(b,b) & \mathcal{H}^{z,h}(a,b) & -\partial_{1}\mathcal{H}^{z,h}(b,a)\\
\mathcal{G}^{z,h}(a,b)\medskip & -\mathcal{H}^{z,h}(a,b) & \mathcal{G}%
^{z,h}(a,a) & -\left(  \mathcal{H}^{z,h}(a^{+},a)-\frac{1}{2h^{2}}\right) \\
\mathcal{H}^{z,h}(b,a) & -\partial_{1}\mathcal{H}^{z,h}(a,b) & \mathcal{H}%
^{z,h}(a^{+},a)-\frac{1}{2h^{2}} & -\partial_{1}\mathcal{H}^{z,h}(a,a)
\end{pmatrix}
\,.\nonumber
\end{align}

The Green's functions $\mathcal{G}^{z,h}$, $\mathcal{H}^{z,h}$ are related to
the Jost's solutions of the equation%
\begin{equation}%
\begin{array}
[c]{ccc}%
\left(  -h^{2}\partial_{x}^{2}+\mathcal{V}\right)  u=\zeta^{2}u\,, &  &
\zeta\in\mathbb{C}^{+}\,,
\end{array}
\label{Jost_eq_h1}%
\end{equation}
next denoted with $\chi_{\pm}^{h}\left(  \cdot,\zeta\right)  $, fulfilling the
exterior conditions%
\begin{equation}
\left.  \chi_{+}^{h}\left(  \cdot,\zeta\right)  \right\vert _{x>b}%
=e^{i\frac{\zeta}{h}x}\,,\qquad\left.  \chi_{-}^{h}\left(  \cdot,\zeta\right)
\right\vert _{x<a}=e^{-i\frac{\zeta}{h}x}\,. \label{Jost_sol_ext_h}%
\end{equation}
A detailed analysis of their properties have been given in \cite{Yafa} for
generic $L^{1}$-potentials, while the particular case of a potential barrier
is explicitly considered in \cite{Man1} for $h=1$. The $h$-dependent case can
be considered as a rescaled problem and the result presented in \cite{Man1}
rephrase as follows

\begin{lemma}
\label{Lemma_Jost}Let $\mathcal{V}\in L^{2}(\mathbb{R},\mathbb{R})$ s.t.: supp
$\mathcal{V}=\left[  a,b\right]  $. For any fixed $h>0$, the solutions
$\chi_{\pm}^{h}$ to the problem (\ref{Jost_eq_h1})-(\ref{Jost_sol_ext_h})
belong to $\mathcal{C}_{x}^{1}\left(  \mathbb{R},\,\mathcal{H}_{\zeta}\left(
\mathbb{C}^{+}\right)  \right)  $ and have continuous extension to the real axis.
\end{lemma}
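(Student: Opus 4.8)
The plan is to reduce the $h$-dependent problem to the rescaled setting $\tilde H_\theta$ of \eqref{H_teta_dil}, where the potential $\mathcal{\tilde V}(x)=\mathcal V(hx+(b+a)/2)$ is supported on $[\alpha_h,\beta_h]$, and then to invoke the analysis of \cite{Man1} for the Jost solutions of a compactly supported $L^2$-potential at $h=1$. Concretely, writing $u(x)=\chi^h_+(x,\zeta)$ and setting $y=(x-(b+a)/2)/h$, the equation $(-h^2\partial_x^2+\mathcal V)u=\zeta^2 u$ becomes $(-\partial_y^2+\mathcal{\tilde V})\tilde u=\zeta^2\tilde u$ with $\tilde u(y)=u(hy+(b+a)/2)$, and the exterior normalization $\chi^h_+|_{x>b}=e^{i\zeta x/h}$ translates (up to a constant unimodular phase factor depending only on $\zeta$) into $\tilde u|_{y>\beta_h}=e^{i\zeta y}$, i.e. exactly the Jost normalization at $h=1$ for the rescaled potential. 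So it suffices to prove the $C^1_x\big(\mathbb R,\mathcal H_\zeta(\mathbb C^+)\big)$ regularity, with continuous extension to the real axis, for the $h=1$ Jost solutions of an arbitrary $L^2$ potential supported on a bounded interval; pulling this back through the (smooth, $\zeta$-independent) change of variables then yields the claim for $\chi^h_\pm$.

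For the $h=1$ statement the key steps are standard Volterra/Wronskian arguments. First I would recall the integral representation: for $x\le \beta_h$ (resp. $x\ge\alpha_h$),
\begin{equation}
\chi_+(x,\zeta)=e^{i\zeta x}-\int_x^{\beta_h}\frac{\sin\big(\zeta(x-t)\big)}{\zeta}\,\mathcal{\tilde V}(t)\,\chi_+(t,\zeta)\,dt\,,
\end{equation}
which, since $\mathcal{\tilde V}\in L^2$ has compact support hence lies in $L^1$, is a Volterra equation with kernel jointly continuous in $(x,t)$ and entire in $\zeta$ (the apparent singularity at $\zeta=0$ is removable because $\sin(\zeta s)/\zeta\to s$). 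Iterating the Volterra series and estimating the $n$-th term by the usual factorial bound shows it converges locally uniformly in $(x,\zeta)\in\mathbb R\times\mathbb C$, with each summand in $\mathcal C^0_x(\mathbb R,\mathcal H_\zeta(\mathbb C))$; hence $\chi_+$ itself is continuous in $x$, holomorphic in $\zeta$ on all of $\mathbb C$ (in particular on $\mathbb C^+$ and up to $\mathbb R$), so the ``continuous extension to the real axis'' is automatic here. For the $\mathcal C^1_x$ regularity, differentiating the integral equation gives
\begin{equation}
\partial_x\chi_+(x,\zeta)=i\zeta e^{i\zeta x}-\int_x^{\beta_h}\cos\big(\zeta(x-t)\big)\,\mathcal{\tilde V}(t)\,\chi_+(t,\zeta)\,dt\,,
\end{equation}
where the boundary term from Leibniz's rule vanishes because $\sin 0=0$; the right-hand side is again continuous in $x$ (the integrand is $L^1$ in $t$, continuous in $x$) and holomorphic in $\zeta$. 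The analogous treatment of $\chi_-$ on the left, or the reflection $x\mapsto -x$, finishes it.

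The only genuinely delicate point is the low-regularity of the potential: $\mathcal{\tilde V}$ is merely $L^2$ (or $L^\infty$ in the main theorems), not continuous, so $\chi_+$ is not $\mathcal C^2$ in $x$ and one must be careful that $\partial_x\chi_+$ is a genuine pointwise derivative, continuous in $x$, rather than only a weak derivative. This is handled by the absolute continuity of $x\mapsto\int_x^{\beta_h}\cos(\zeta(x-t))\mathcal{\tilde V}(t)\chi_+(t,\zeta)\,dt$ together with the fact that its integrand is bounded on compacta, so the fundamental theorem of calculus applies and yields a continuous $\partial_x\chi_+$. I expect the rest — convergence of the Volterra series, holomorphy in $\zeta$, and the unwinding of the $h$-rescaling — to be routine. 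One should also note explicitly that although the exterior phase $e^{i\zeta x/h}$ is unbounded as $x\to+\infty$ for $\zeta\in\mathbb C^+$, this plays no role: regularity in $x$ and holomorphy in $\zeta$ are local properties, and the statement claims nothing about uniformity in $x\in\mathbb R$.
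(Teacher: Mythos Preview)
Your proposal is correct and follows the same route as the paper: rescale via $y=(x-(b+a)/2)/h$ to reduce to the $h=1$ Jost problem for the compactly supported potential $\tilde{\mathcal V}$, then invoke the known regularity result. The paper simply cites \cite[Proposition~2.2]{Man1} after the rescaling, whereas you go further and sketch the Volterra-series argument behind that citation; this is a welcome elaboration but not a different approach. One inconsequential slip in your last paragraph: for $\zeta\in\mathbb C^+$ the exterior phase $e^{i\zeta x/h}$ actually \emph{decays} as $x\to+\infty$; the possible unboundedness of $\chi_+^h$ occurs rather as $x\to-\infty$, but as you correctly note this is irrelevant to the local-in-$x$ statement.
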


\begin{proof}
With the change of variable: $y=\left(  x-\left(  b+a\right)  /2\right)  /h$,
the problem (\ref{Jost_eq_h1})-(\ref{Jost_sol_ext_h}) writes as%
\begin{equation}
\left\{
\begin{array}
[c]{lll}%
\left(  -\partial_{y}^{2}+\mathcal{\tilde{V}}\right)  \tilde{\chi}_{\pm}%
^{h}=\zeta^{2}\tilde{\chi}_{\pm}^{h}\,, &  & \zeta\in\mathbb{C}^{+}\,,\\
&  & \\
\left.  \tilde{\chi}_{+}^{h}\left(  \cdot,\zeta\right)  \right\vert
_{y>\left(  b-a\right)  /2h}=e^{i\zeta y}\,, &  & \left.  \tilde{\chi}_{-}%
^{h}\left(  \cdot,\zeta\right)  \right\vert _{y<-\left(  b-a\right)
/2h}=e^{-i\zeta y}\,,
\end{array}
\right.  \label{Jost_eq_h_rescaled}%
\end{equation}
where $\mathcal{\tilde{V}}$ denotes the dilated potential: $\mathcal{\tilde
{V}}\left(  y\right)  =\mathcal{V}\left(  hy+\left(  b+a\right)  /2\right)  $,
supported on $\left[  -\left(  b-a\right)  /2h,\left(  b-a\right)  /2h\right]
$, while $\tilde{\chi}_{\pm}^{h}$ correspond to the rescaled Jost's functions%
\begin{equation}
\tilde{\chi}_{\pm}^{h}\left(  y\right)  =\chi_{\pm}^{h}\left(  hy+\left(
b+a\right)  /2\right)  e^{-i\frac{\zeta}{2h}\left(  b+a\right)  }\,.
\label{Jost_fun_rescaled}%
\end{equation}
In this framework the Proposition 2.2 in \cite{Man1} applies; this yield
$\tilde{\chi}_{\pm}^{h}\in\mathcal{C}_{y}^{1}\left(  \mathbb{R},\,\mathcal{H}%
_{\zeta}\left(  \mathbb{C}^{+}\right)  \right)  $ with continuous extensions
to the closed complex half-plane $\overline{\mathbb{C}^{+}}$.
\end{proof}

Let $\zeta\in\mathbb{C}^{+}$ be such that: $\zeta^{2}\in res\left(  H_{0}%
^{h}\right)  $; rephrasing the relation \cite[Chp. 5, eq. (1.10)]{Yafa} in our
framework, we get%
\begin{align}
\mathcal{G}^{\zeta^{2},h}\left(  \cdot,y\right)   &  =\frac{1}{h^{2}%
w^{h}\left(  \zeta\right)  }\mathcal{\,}\left\{
\begin{array}
[c]{c}%
\chi_{+}^{h}\left(  \cdot,\zeta\right)  \chi_{-}^{h}\left(  y,\zeta\right)
\,,\qquad x\geq y\,,\\
\\
\chi_{-}^{h}\left(  \cdot,\zeta\right)  \chi_{+}^{h}\left(  y,\zeta\right)
\,,\qquad x<y\,,
\end{array}
\right. \label{G_z_h}\\
& \nonumber\\
\mathcal{H}^{\zeta^{2},h}\left(  \cdot,y\right)   &  =\frac{1}{h^{2}%
w^{h}\left(  \zeta\right)  }\mathcal{\,}\left\{
\begin{array}
[c]{c}%
\chi_{+}^{h}\left(  \cdot,\zeta\right)  \,\partial_{1}\chi_{-}^{h}\left(
y,\zeta\right)  \,,\qquad x\geq y\,,\\
\\
\chi_{-}^{h}\left(  \cdot,\zeta\right)  \,\partial_{1}\chi_{+}^{h}\left(
y,\zeta\right)  \,,\qquad x<y\,.
\end{array}
\right.  \label{H_z_h}%
\end{align}
where $w^{h}\left(  \zeta\right)  $, depending only on $\zeta$ and
$\mathcal{V}$, denotes the Wronskian associated to the couple $\left\{
\chi_{+}^{h}\left(  \cdot,\zeta\right)  ,\chi_{-}^{h}\left(  \cdot
,\zeta\right)  \right\}  $ (defined by: $w\left(  f,g\right)  =fg^{\prime
}-f^{\prime}g$). Due to the result of the Lemma \ref{Lemma_Jost}, for each
$h>0$, the maps $z\rightarrow\mathcal{G}^{z,h}(x,y)$, $z\rightarrow
\mathcal{H}^{z,h}(x,y)$ are meromorphic in $\mathbb{C}\backslash\mathbb{R}%
_{+}$ with a branch cut along the positive real axis and poles, corresponding
to the points in $\sigma_{p}\left(  H_{0}^{h}\right)  $ located on the
negative real axis. Adapting \cite[Chp. 5, eq. (1.9)]{Yafa} to the
$h$-dependent case, the function $w^{h}\left(  k\right)  $ fulfills the
identity: $\left\vert w^{h}(k)\right\vert ^{2}=k^{2}/h^{2}+\left\vert
w_{0}^{h}(k)\right\vert ^{2}$, where $w_{0}^{h}(k)$ is the wronskian
associated to the couple $\left\{  \chi_{+}^{h}\left(  \cdot,-k\right)
,\chi_{-}^{h}\left(  \cdot,k\right)  \right\}  $. In particular, the
inequality%
\begin{equation}
\frac{1}{\left\vert w^{h}(k)\right\vert }\leq\frac{h}{\left\vert k\right\vert
}\,, \label{wronskian_id}%
\end{equation}
implies that the maps $z\rightarrow\mathcal{G}^{z,h}(x,y)$, $z\rightarrow
\mathcal{H}^{z,h}(x,y)$ continuously extend up to the branch cut, both in the
limits $z\rightarrow k^{2}\pm i0$, with the only possible exception of the
point $z=0$.

The above characterization and the definition (\ref{q_z}) imply that
$z\rightarrow\left(  B_{\theta}\,q(z,h)-A_{\theta}\right)  $ is meromorphic
matrix-valued map in $\mathbb{C}\backslash\mathbb{R}_{+}$ with continuous
extension to $z\rightarrow k^{2}\pm i0$ for $k\neq0$. Due to the identity
(\ref{krein_1}), the conditions: $z\in res\left(  H_{0}^{h}\right)  $ and
$0\notin res\left(  B_{\theta}\,q(z,h)-A_{\theta}\right)  $ (i.e.: $z$ is a
pole for the inverse, matrix-valued, function $z\rightarrow\left(  B_{\theta
}\,q(z,h)-A_{\theta}\right)  ^{-1}$), compel: $z\in\sigma_{p}\left(
H_{\theta}^{h}\right)  $. Nevertheless, according to the result of the
Proposition \ref{Proposition_spectrum}, for defined positive potentials it
results: $res\left(  H_{\theta}^{h}\right)  =res\left(  H_{0}^{h}\right)
=\mathbb{C}\backslash\mathbb{R}_{+}$ provided that: $\theta\in B_{\delta
}\left(  0\right)  $, for a small $\delta>0$ possibly depending on $h$. Hence,
under these conditions, the inverse $\left(  B_{\theta}\,q(z,h)-A_{\theta
}\right)  ^{-1}$ exists in $\mathbb{C}\backslash\mathbb{R}_{+}$ and has
continuous extensions to the branch cut both in the limits $z=k^{2}\pm i0$,
with the only possible exception of the origin.

The generalized eigenfunctions of our model, next denoted with $\psi_{\theta
}^{h}(\cdot,k)$, solve of the boundary value problem%
\begin{equation}
\left\{
\begin{array}
[c]{l}%
\left(  -h^{2}\partial_{x}^{2}+\mathcal{V}\right)  u=k^{2}u\,,\qquad
x\in\mathbb{R}\backslash\left\{  a,b\right\}  \,,\ k\in\mathbb{R}\,,\\
\\
\smallskip e^{-\theta/2}u(b^{+})=u(b^{-})\,,\quad e^{-\theta\,3/2}u^{\prime
}(b^{+})=u^{\prime}(b^{-})\,,\\
\\
e^{-\theta/2}u(a^{-})=u(a^{+}),\quad e^{-\theta\,3/2}u^{\prime}(a^{-}%
)=u^{\prime}(a^{+})\,,
\end{array}
\right.  \label{gen_eigenfun_eq}%
\end{equation}
and fulfill the exterior conditions%
\begin{equation}
\psi_{\theta}^{h}(x,k)\left\vert _{\substack{x<a\\k>0}}\right.  =e^{i\frac
{k}{h}x}+R^{h}(k,\theta)e^{-i\frac{k}{h}x}\,,\quad\psi_{\theta}^{h}%
(x,k)\left\vert _{\substack{x>b\\k>0}}\right.  =T^{h}(k,\theta)e^{i\frac{k}%
{h}x}\,, \label{gen_eigenfun_ext1}%
\end{equation}%
\begin{equation}
\psi_{\theta}^{h}(x,k)\left\vert _{\substack{x<a\\k<0}}\right.  =T^{h}%
(k,\theta)e^{i\frac{k}{h}x}\,,\quad\psi_{\theta}^{h}(x,k)\left\vert
_{\substack{x>b\\k<0}}\right.  =e^{i\frac{k}{h}x}+R^{h}(k,\theta)e^{-i\frac
{k}{h}x}\,, \label{gen_eigenfun_ext2}%
\end{equation}
describing an incoming wave function of momentum $k$ with reflection and
transmission coefficients $R^{h}$ and $T^{h}$. For $\theta=0$, the generalized
eigenfunctions of $H_{0}^{h}$, $\psi_{0}^{h}(\cdot,k)$, depend on the Jost's
solutions $\chi_{\pm}^{h}$ according to%
\begin{equation}
\psi_{0}^{h}(x,k)=\left\{
\begin{array}
[c]{lll}%
-\frac{2ik}{hw^{h}(k)}\chi_{+}^{h}(x,k)\,, &  & \text{for }k\geq0\,,\\
&  & \\
\frac{2ik}{hw^{h}(-k)}\chi_{-}^{h}(x,-k)\,, &  & \text{for }k<0\,.
\end{array}
\right.  \label{gen_eigenfun_h_jost}%
\end{equation}
Following an approach similar to the one leading to the Krein-like resolvent
formula (\ref{krein_1}), an expansion for the difference: $\left.
\psi_{\theta}^{h}(x,k)-\psi_{0}^{h}(x,k)\right.  $ as $\left.  \theta
\rightarrow0\right.  $ has been provided with In \cite[eq. (2.19) and
(2.26)]{Man2}. Let $G^{k,h}$ and $H^{k,h}$ be defined by%
\begin{equation}%
\begin{array}
[c]{ccc}%
G^{\pm\left\vert k\right\vert ,h}\left(  \cdot,y\right)  =\lim_{z\rightarrow
k^{2}\pm i0}\mathcal{G}^{z,h}\left(  \cdot,y\right)  \,, &  & H^{\pm\left\vert
k\right\vert ,h}\left(  \cdot,y\right)  =\lim_{z\rightarrow k^{2}\pm
i0}\mathcal{H}^{z,h}\left(  \cdot,y\right)  \,,
\end{array}
\label{GH_k}%
\end{equation}
and denote with $g_{k,h,j}$ and $\mathcal{M}^{h}$ the corresponding limits of
$\gamma_{z,h,j}$ and $\left(  B_{\theta}\,q(z,h)-A_{\theta}\right)  $ (see the
definitions (\ref{defect1}), (\ref{q_z}) and (\ref{AB_theta_matrix})); namely,
we set%
\begin{equation}%
\begin{array}
[c]{ccc}%
g_{\pm\left\vert k\right\vert ,h,j}=\lim_{z\rightarrow k^{2}\pm i0}%
\gamma_{z,h,j}\,, &  & \mathcal{M}^{h}\left(  \pm\left\vert k\right\vert
,\theta\right)  =\lim_{z\rightarrow k^{2}\pm i0}\left(  B_{\theta
}\,q(z,h)-A_{\theta}\right)  \,.
\end{array}
\label{krein_coeff_k}%
\end{equation}
Due to (\ref{G_z_h})-(\ref{H_z_h}), $G^{k,h}$ and $H^{k,h}$ explicitly write
as%
\begin{align}
G^{k,h}\left(  \cdot,y\right)   &  =\frac{1}{h^{2}w^{h}(k)}\mathcal{\,}%
\left\{
\begin{array}
[c]{c}%
\chi_{+}^{h}\left(  \cdot,k\right)  \chi_{-}^{h}\left(  y,k\right)  \,,\ x\geq
y\,,\\
\\
\chi_{-}^{h}\left(  \cdot,k\right)  \chi_{+}^{h}\left(  y,k\right)
\,,\ x<y\,,
\end{array}
\right. \label{G_k_jost}\\
& \nonumber\\
H^{k,h}\left(  \cdot,y\right)   &  =\frac{-1}{h^{2}w^{h}(k)}\mathcal{\,}%
\left\{
\begin{array}
[c]{c}%
\chi_{+}^{h}\left(  \cdot,k\right)  \,\partial_{1}\chi_{-}^{h}\left(
y,k\right)  \,,\ x\geq y\,,\\
\\
\chi_{-}^{h}\left(  \cdot,k\right)  \,\partial_{1}\chi_{+}^{h}\left(
y,k\right)  \,,\ x<y\,,
\end{array}
\right.  \label{H_k_jost}%
\end{align}
while, according to the previous remarks, $g_{k,h,j}$ and $\mathcal{M}%
^{h}\left(  k,\theta\right)  $ are well defined and continuous w.r.t.
$k\in\mathbb{R}$, with the only possible exception of the origin. Denoting
with%
\begin{equation}
S^{h}\left(  \theta\right)  =\left\{  k\in\mathbb{R\,}\left\vert
\ \det\mathcal{M}^{h}\left(  k,\theta\right)  =0\right.  \right\}  \,.
\label{Singular_Set_h_theta}%
\end{equation}
the set of the singular points of $\mathcal{M}^{h}\left(  k,\theta\right)  $,
the representation (see \cite[Proposition 2.2]{Man2})%
\begin{align}
&  \psi_{\theta}^{h}(\cdot,k)\label{gen_eigenfun_Krein_h}\\
&  =\left\{
\begin{array}
[c]{lll}%
\psi_{0}^{h}(\cdot,k)-\sum_{i,j=1}^{4}\left[  \left(  \mathcal{M}^{h}\left(
k,\theta\right)  \right)  ^{-1}B_{\theta}\right]  _{ij}\Gamma_{j}%
^{k,h}\,g_{k,h,j}\,, &  & \text{for }k>0\,,\\
&  & \\
\psi_{0}^{h}(\cdot,k)-\sum_{i,j=1}^{4}\left[  \left(  \mathcal{M}^{h}\left(
-k,\theta\right)  \right)  ^{-1}B_{\theta}\right]  _{ij}\,\Gamma_{j}%
^{k,h}g_{-k,h,j}\,, &  & \text{for }k<0\,,
\end{array}
\right. \nonumber
\end{align}
holds for any fixed $h>0$, $\theta\in\mathbb{C}$ and $k\in\mathbb{R}^{\ast
}\backslash S^{h}\left(  \theta\right)  $, being $\Gamma^{k,h}$ the vector of
the boundary values%
\begin{equation}
2\Gamma^{k,h}=\left(  \psi_{0}^{h}(b,k),\partial_{1}\psi_{0}^{h}(b,k),\psi
_{0}^{h}(a,k),\partial_{1}\psi_{0}^{h}(a,k)\right)  \,.
\label{gen_eigenfun_Krein_h_1}%
\end{equation}

\subsection{\label{Sec_trace}Trace estimates}

We aim to control the coefficients at the r.h.s. of
(\ref{gen_eigenfun_Krein_h}) when both $\theta$ and $h$ are small. This
requires accurate estimates for the boundary values of $g_{k,h,j}$ (occurring
in the definition of the matrix $\mathcal{M}^{h}\left(  k,\theta\right)  $)
and $\psi_{0}^{h}(\cdot,k)$. In \cite[eq. (2.19) and (2.26)]{Man2} these
estimates have been provided for a finite energy range when the potential
describes quantum wells in a semiclassical island. We next reconsider this
problem under a generic condition of positivity for $1_{\left[  a,b\right]
}\mathcal{V}$. To this aim, we next recall some standard energy estimates; let
consider the problem
\begin{equation}
\left\{
\begin{array}
[c]{l}%
\left(  -h^{2}\partial_{x}^{2}+\mathcal{V}-\zeta^{2}\right)  u=0\,,\qquad
\text{in }\left(  a,b\right)  \,,\\
\\
\left[  h\partial_{x}+i\zeta\right]  u(a)=\gamma_{a}\,,\quad\left[
h\partial_{x}-i\zeta\right]  u(b)=\gamma_{b}\,,
\end{array}
\right.  \label{Ag_eq}%
\end{equation}
where: $\mathcal{V}\in L^{\infty}\left(  \left(  a,b\right)  ,\mathbb{R}%
\right)  $, $\gamma_{a},\gamma_{b}\in\mathbb{C}$, and $h>0$.

\begin{lemma}
\label{Lemma_energy_est}Assume $\zeta\in\overline{\mathbb{C}^{+}}$ such that:
$\mathcal{V}-\operatorname{Re}\zeta^{2}>c$ for some $c>0$. The solution of
(\ref{Ag_eq}) fulfills the estimate%
\begin{equation}
h^{\frac{1}{2}}\sup_{\left[  a,b\right]  }\left\vert u\right\vert +\left\Vert
hu^{\prime}\right\Vert _{L^{2}\left(  \left[  a,b\right]  \right)
}+\left\Vert u\right\Vert _{L^{2}\left(  \left[  a,b\right]  \right)  }\leq
C_{a,b,c}\frac{1}{h^{\frac{1}{2}}}\left(  \left\vert \gamma_{a}\right\vert
+\left\vert \gamma_{b}\right\vert \right)  \,, \label{energy_est1}%
\end{equation}
with $C_{a,b,c}>0$ possibly depending on the data.
\end{lemma}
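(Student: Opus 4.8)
The statement is a standard semiclassical energy estimate for a boundary value problem with Robin-type (absorbing) conditions, so the natural tool is an energy identity obtained by multiplying the equation by $\bar u$ and integrating by parts. First I would multiply $\left(-h^{2}\partial_{x}^{2}+\mathcal{V}-\zeta^{2}\right)u=0$ by $\bar u$ and integrate over $\left(a,b\right)$, obtaining
\[
h^{2}\left\|u^{\prime}\right\|^{2}_{L^{2}(a,b)}+\int_{a}^{b}\left(\mathcal{V}-\zeta^{2}\right)\left|u\right|^{2}\,dx=h^{2}\left[u^{\prime}\bar u\right]_{a}^{b}=h^{2}u^{\prime}(b)\bar u(b)-h^{2}u^{\prime}(a)\bar u(a).
\]
Then I would use the boundary conditions to rewrite $hu^{\prime}(b)=\gamma_{b}+i\zeta u(b)$ and $hu^{\prime}(a)=\gamma_{a}-i\zeta u(a)$, so that the boundary term becomes $h\gamma_{b}\bar u(b)+i h\zeta\left|u(b)\right|^{2}-h\gamma_{a}\bar u(a)+i h\zeta\left|u(a)\right|^{2}$ (signs to be checked carefully). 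Taking real and imaginary parts separates the "dissipative" boundary contributions from the bulk terms.

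Next, the key is to exploit the hypothesis $\mathcal{V}-\operatorname{Re}\zeta^{2}>c$. Taking the real part of the energy identity gives
\[
h^{2}\left\|u^{\prime}\right\|^{2}_{L^{2}(a,b)}+\int_{a}^{b}\left(\mathcal{V}-\operatorname{Re}\zeta^{2}\right)\left|u\right|^{2}\,dx=\operatorname{Re}\left(h^{2}u^{\prime}(b)\bar u(b)-h^{2}u^{\prime}(a)\bar u(a)\right),
\]
and since $\operatorname{Im}\zeta\geq0$ the boundary terms $\operatorname{Re}\left(i h\zeta\left|u(\alpha)\right|^{2}\right)=-h\operatorname{Im}\zeta\,\left|u(\alpha)\right|^{2}\leq0$ move to the favorable side, leaving only $\operatorname{Re}\left(h\gamma_{b}\bar u(b)-h\gamma_{a}\bar u(a)\right)$ on the right. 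Using $\mathcal{V}-\operatorname{Re}\zeta^{2}>c$ then yields a lower bound
\[
h^{2}\left\|u^{\prime}\right\|^{2}_{L^{2}(a,b)}+c\left\|u\right\|^{2}_{L^{2}(a,b)}+h\operatorname{Im}\zeta\left(\left|u(a)\right|^{2}+\left|u(b)\right|^{2}\right)\leq h\left|\gamma_{b}\right|\left|u(b)\right|+h\left|\gamma_{a}\right|\left|u(a)\right|.
\]
Now I need to control the pointwise boundary values $\left|u(a)\right|,\left|u(b)\right|$ in terms of the $L^{2}$ and $H^{1}$ norms on $(a,b)$; this is the semiclassical trace/Agmon-type inequality $h^{1/2}\left|u(\alpha)\right|\lesssim \left\|u\right\|_{L^{2}(a,b)}+\left\|hu^{\prime}\right\|_{L^{2}(a,b)}/h^{1/2}$, equivalently $h\left|u(\alpha)\right|^{2}\lesssim \left\|u\right\|^{2}_{L^{2}}+h^{-1}\cdot h^{2}\left\|u^{\prime}\right\|^{2}_{L^{2}}$ — obtained by writing $\left|u(\alpha)\right|^{2}=\left|u(x)\right|^{2}-\int_{\alpha}^{x}\partial_{t}\left|u(t)\right|^{2}\,dt$ and integrating in $x$ over an interval of length comparable to $(b-a)$, or more sharply of length $\sim h$.

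Finally, I would insert the trace inequality into the right-hand side and absorb: writing $h\left|\gamma_{\alpha}\right|\left|u(\alpha)\right|=\left(h^{1/2}\left|\gamma_{\alpha}\right|\right)\left(h^{1/2}\left|u(\alpha)\right|\right)\leq \varepsilon\, h\left|u(\alpha)\right|^{2}+\varepsilon^{-1}h\left|\gamma_{\alpha}\right|^{2}$ and then bounding $h\left|u(\alpha)\right|^{2}$ by the trace inequality, for $\varepsilon$ small the terms $h^{2}\left\|u^{\prime}\right\|^{2}$ and $\left\|u\right\|^{2}$ appearing on the right (with small coefficient) are absorbed into the left-hand side, leaving
\[
h^{2}\left\|u^{\prime}\right\|^{2}_{L^{2}(a,b)}+\left\|u\right\|^{2}_{L^{2}(a,b)}\lesssim \frac{1}{h}\left(\left|\gamma_{a}\right|^{2}+\left|\gamma_{b}\right|^{2}\right).
\]
Taking square roots gives the $\left\|hu^{\prime}\right\|_{L^{2}}$ and $\left\|u\right\|_{L^{2}}$ parts of (\ref{energy_est1}), and feeding this back into the trace inequality gives $h^{1/2}\sup_{[a,b]}\left|u\right|\lesssim h^{-1/2}\left(\left|\gamma_{a}\right|+\left|\gamma_{b}\right|\right)$, completing the bound. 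The main obstacle — really the only delicate point — is getting the powers of $h$ right in the trace inequality and in the absorption step: the scaling $h^{1/2}\left|u(\alpha)\right|$ versus $h^{-1/2}\left|\gamma_{\alpha}\right|$ must be tracked so that the Cauchy–Schwarz/Young splitting is balanced and the bulk terms are genuinely absorbable; a convenient way to make this transparent is to pass to the rescaled variable $y=(x-(b+a)/2)/h$ as in (\ref{Jost_eq_h_rescaled}), where the problem becomes $h$-free on an interval of length $(b-a)/h$, prove the estimate there by the classical trace inequality, and scale back.
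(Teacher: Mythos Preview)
Your proposal is correct and follows essentially the same route as the paper: multiply the equation by $\bar u$, integrate by parts, use the boundary conditions together with $\operatorname{Im}\zeta\ge 0$ and $\mathcal V-\operatorname{Re}\zeta^2>c$ to obtain the energy inequality $\|hu'\|^2+c\|u\|^2\le h(|u(a)||\gamma_a|+|u(b)||\gamma_b|)$, and then close by a trace/sup-norm estimate. The only cosmetic difference is that the paper invokes the Gagliardo--Nirenberg inequality $\sup_{[a,b]}|u|\le C_{b-a}\|u'\|^{1/2}\|u\|^{1/2}$ directly, while you write the equivalent semiclassical trace bound and carry out the Young/absorption step explicitly (and mention the optional rescaling $y=(x-(a+b)/2)/h$); both lead to the same conclusion.
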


\begin{proof}
From the equation%
\begin{equation}
\left\langle u,\left(  -h^{2}\partial_{x}^{2}+\mathcal{V}-\zeta^{2}\right)
u\right\rangle =0\,,
\end{equation}
an integration by parts yields%
\begin{equation}
\left\Vert hu^{\prime}\right\Vert _{L^{2}\left(  \left[  a,b\right]  \right)
}^{2}+\int_{a}^{b}\left(  \mathcal{V}-\zeta^{2}\right)  \left\vert
u\right\vert ^{2}\,dx+h^{2}\left(  u^{\ast}u^{\prime}(a)-u^{\ast}u^{\prime
}(b)\right)  =0\,.
\end{equation}
Taking into account the boundary conditions in (\ref{Ag_eq}), our assumptions
($\mathcal{V}-\operatorname{Re}\zeta^{2}>c$ and $\operatorname{Im}\zeta\geq0$)
imply%
\begin{equation}
\left\Vert hu^{\prime}\right\Vert _{L^{2}\left(  \left[  a,b\right]  \right)
}^{2}+c\left\Vert u\right\Vert _{L^{2}\left(  \left[  a,b\right]  \right)
}^{2}\leq h\operatorname{Re}\left(  \left\vert u^{\ast}(a)\right\vert
\left\vert \gamma_{a}\right\vert -\left\vert u^{\ast}(b)\right\vert \left\vert
\gamma_{b}\right\vert \right)  \,. \label{energy_est0}%
\end{equation}
The estimate (\ref{energy_est1}) then follows from (\ref{energy_est0}) by
taking into account the Gagliardo-Nirenberg inequality: $\sup_{\left[
a,b\right]  }\left\vert \varphi\right\vert \leq C_{b-a}\left\Vert
\varphi^{\prime}\right\Vert _{L^{2}\left(  \left(  a,b\right)  \right)
}^{\frac{1}{2}}\left\Vert \varphi\right\Vert _{L^{2}\left(  \left(
a,b\right)  \right)  }^{\frac{1}{2}}$.
\end{proof}

When the differential operator $\left(  -h^{2}\partial_{x}^{2}+\mathcal{V-}%
k^{2}\right)  $ is defined with a potential $\mathcal{V}\in L^{\infty}\left(
\mathbb{R},\mathbb{R}\right)  $ compactly supported on $\left[  a,b\right]  $,
the corresponding Green's functions solve boundary value problems of the type
(\ref{Ag_eq}) and the Lemma \ref{Lemma_energy_est} applies if: $\mathcal{V}%
-k^{2}>c$. Global-in-$k$ estimates for their boundary values are next
considered by combining the explicit representations in terms of the Jost's
solutions, given in (\ref{G_k_jost})-(\ref{H_k_jost}) and
(\ref{gen_eigenfun_h_jost}), and energy estimates in the low-energy regime. To
this aim, the potential is next assumed to fulfill the stronger condition%
\begin{equation}%
\begin{array}
[c]{ccccc}%
\mathcal{V}\in L^{\infty}(\mathbb{R},\mathbb{R})\,, &  & \text{supp\thinspace
}\mathcal{V}=\left[  a,b\right]  \,, &  & 1_{\left[  a,b\right]  }%
\mathcal{V}>c\,,
\end{array}
\label{V_pos1}%
\end{equation}
holding for same $c>0$.

\begin{proposition}
\label{Proposition_trace_est_h}Let $\mathcal{V}$ fulfills the conditions
(\ref{V_pos1}); the relations%
\begin{align}
&  \left.  \left\vert \left(  1+k\right)  \psi_{0}^{h}(y,k)\right\vert
+h\,\left\vert \partial_{1}\psi_{0}^{h}(y,k)\right\vert \leq\left\vert
k\right\vert \,C_{a,b,c}\,,\right. \label{Trace_est1}\\
& \nonumber\\
&  \left.  \left\vert \left(  1+k\right)  G^{k,h}\left(  y,y^{\prime}\right)
\right\vert +\left\vert 1_{\left[  a,b\right]  }H^{k,h}\left(  y,y^{\prime
}\right)  \right\vert +h\,\left\vert k\right\vert ^{-1}\left\vert \partial
_{1}H^{k,h}\left(  y,y^{\prime}\right)  \right\vert \leq C_{a,b,c}%
h^{-2}\,,\right.  \label{Trace_est2}%
\end{align}
hold for $y,y^{\prime}\in\left\{  a,b\right\}  $ and $k\in\mathbb{R}$, being
$C_{a,b,c}>0$ possibly depending on the data and $h\in\left(  0,h_{0}\right]
$ with $h_{0}>0$ small.
\end{proposition}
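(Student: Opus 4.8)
The plan is to split the bound for each boundary value into two regimes in $k$ and treat them by different tools: a ``high-energy'' regime $|k|\ge k_0$ (with $k_0$ fixed, depending only on $a,b,c$), where the energy estimate of Lemma~\ref{Lemma_energy_est} applies directly to the boundary value problems solved by $\mathcal{G}^{k,h}$, $\mathcal{H}^{k,h}$ and $\psi_0^h$; and a ``low-energy'' regime $0<|k|\le k_0$, where those energy estimates degenerate and one must instead exploit the explicit Jost representations (\ref{G_k_jost})--(\ref{H_k_jost}), (\ref{gen_eigenfun_h_jost}) together with the Wronskian bound (\ref{wronskian_id}). The positivity hypothesis (\ref{V_pos1}) enters precisely to guarantee, for $|k|$ not too large, the coercivity condition $\mathcal{V}-k^2>c'$ (after shrinking $c$) needed to invoke Lemma~\ref{Lemma_energy_est}, while for large $|k|$ it is the oscillatory/Jost structure that saves the estimate.

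First I would record that, for fixed $y'\in\{a,b\}$, the functions $\mathcal{G}^{k,h}(\cdot,y')$ and $\mathcal{H}^{k,h}(\cdot,y')$ restricted to $(a,b)$ solve, on each side of $y'$, an equation of the form $(-h^2\partial_x^2+\mathcal{V}-k^2)u=0$ with data controlled by the jump/continuity conditions (\ref{Green_eq})--(\ref{D_Green_eq}) and by the exterior radiation conditions inherited from the Jost functions $e^{\pm i\frac{k}{h}x}$; rewriting those exterior conditions as Robin conditions $[h\partial_x\pm ik]u=\gamma$ at $a,b$ puts us in the setting (\ref{Ag_eq}). In the regime $k^2<\inf_{[a,b]}\mathcal{V}-c'$, Lemma~\ref{Lemma_energy_est} then yields $h^{1/2}\sup_{[a,b]}|u|+\|hu'\|_{L^2}\lesssim h^{-1/2}(|\gamma_a|+|\gamma_b|)$, and one checks that the relevant $\gamma$'s are bounded by $h^{-2}$ (respectively $h^{-1}$ for $\partial_1\mathcal{H}$, and by $|k|$ for $\psi_0^h$, using that the incoming amplitude is normalised), which reproduces the claimed right-hand sides of (\ref{Trace_est1})--(\ref{Trace_est2}) after taking $y$ among $\{a,b\}$ and using the Gagliardo--Nirenberg step already in the proof of Lemma~\ref{Lemma_energy_est}. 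The factors $(1+k)$ and $|k|^{-1}$ are exactly the normalising weights needed so that the low- and high-$k$ pieces match at $k_0$.

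In the high-energy regime $|k|\ge k_0$ I would argue directly from (\ref{G_k_jost})--(\ref{H_k_jost}) and (\ref{gen_eigenfun_h_jost}): by Lemma~\ref{Lemma_Jost} the Jost solutions $\chi^h_\pm$ and their $x$-derivatives have continuous extensions to the real axis and are bounded on the compact set $\{a,b\}$ uniformly for $|k|\ge k_0$ and $h\in(0,h_0]$ (a rescaling as in the proof of Lemma~\ref{Lemma_Jost} reduces this to the $h=1$ statement of \cite{Man1}, with the $h$-dependence entering only through the explicit prefactors $1/(h^2 w^h(k))$ and $1/(hw^h(k))$); combined with the Wronskian bound $|w^h(k)|^{-1}\le h/|k|$ from (\ref{wronskian_id}), this gives $|\mathcal{G}^{k,h}|\lesssim h^{-2}|k|^{-1}$, $|\mathcal{H}^{k,h}|\lesssim h^{-2}|k|^{-1}\cdot(\text{derivative factor})$, $|\partial_1\mathcal{H}^{k,h}|\lesssim h^{-2}$, and $|\psi_0^h|+h|\partial_1\psi_0^h|\lesssim |k|\,h^{-1}\cdot h\,|k|^{-1}\cdot(\cdots)$, which again match (\ref{Trace_est1})--(\ref{Trace_est2}) once the weights $(1+k)$, $|k|^{-1}$ are inserted. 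The derivative factor $\partial_1\chi^h_\pm$ on the compact set is handled by the same Lemma~\ref{Lemma_Jost} regularity. The main obstacle I expect is the matching at the intermediate scale: one must make sure that the implicit constant $k_0$ separating the two regimes depends only on $a,b,c$ (not on $h$ or $\theta$), and that the uniform-in-$h$ boundedness of $\chi^h_\pm$ on $\{a,b\}$ in the high-energy window is genuinely uniform down to $h\to 0$ — this is the point where one has to look carefully at how the $h$-rescaling in (\ref{Jost_eq_h_rescaled})--(\ref{Jost_fun_rescaled}) interacts with the large spatial domain $[-(b-a)/2h,(b-a)/2h]$, and invoke the exponential decay of the Jost kernels (valid since $\operatorname{supp}\mathcal{V}$ is compact) to close the argument.
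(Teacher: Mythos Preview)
Your opening paragraph swaps the two regimes: Lemma~\ref{Lemma_energy_est} requires $\mathcal{V}-k^2>c$, so it is the \emph{low}-energy tool, while the Jost/Wronskian machinery is what handles large $|k|$. Your detailed paragraphs get this right, so this is only a slip in the summary.

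The substantive gap is in how you control $\psi_0^h$ and, through it, the ``interior'' Jost boundary values. Feeding $\psi_0^h$ into Lemma~\ref{Lemma_energy_est} with $|\gamma|\sim|k|$ gives only $\sup_{[a,b]}|\psi_0^h|\lesssim |k|/h$, which is weaker by a full factor $h^{-1}$ than the bound $|\psi_0^h(y,k)|\lesssim |k|/(1+|k|)$ required by (\ref{Trace_est1}); the energy estimate alone cannot close this. Likewise, in the high-$k$ regime you propose to bound $\chi^h_\pm(y,k)$ at the boundary points uniformly in $h$ and then divide by $h^2 w^h(k)$; you correctly flag that the rescaled domain has length $(b-a)/h$ and that this uniformity is the crux, but you do not supply an argument, and ``exponential decay of the Jost kernels'' does not apply on the support of $\mathcal{V}$ where the problem lives.

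The paper bypasses both difficulties with one idea you are missing: it never bounds $\chi^h_\pm$ separately. At the ``outer'' boundary point the exterior condition (\ref{Jost_sol_ext_h}) gives $\chi^h_\pm$ explicitly (e.g.\ $\chi_+^h(b,k)=e^{ikb/h}$), while at the ``inner'' one the relation (\ref{gen_eigenfun_h_jost}) turns the needed quantity $(hw^h(k))^{-1}\partial_1^j\chi^h_\pm(y,k)$ into $(2ik)^{-1}\partial_1^j\psi_0^h(y,k)$. The latter is controlled, uniformly in $h$ and $k$, by the exterior representations (\ref{gen_eigenfun_ext1})--(\ref{gen_eigenfun_ext2}) together with the scattering-matrix unitarity $|T^h|^2+|R^h|^2=1$, which immediately gives $|\psi_0^h(y,k)|\le 2$ and $h|\partial_1\psi_0^h(y,k)|\le 2|k|$ with no $h$-loss. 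This yields (\ref{gen_eigenfun_boundary_est4}) and hence all the $G^{k,h}$, $H^{k,h}$, $\partial_1 H^{k,h}$ bounds for $k^2\ge c$; the energy lemma is used only for $G^{k,h}$ in the window $k^2<c$. Finally, the factor $(1+k)$ in (\ref{Trace_est1}) is obtained by a short bootstrap: the already-proven bound $|G^{k,h}(b,a)|\le C_{a,b,c}h^{-2}$ combined with the explicit formula $G^{k,h}(b,a)=(h^2w^h(k))^{-1}e^{ik(b-a)/h}$ gives $\inf_k|w^h(k)|\ge c_0>0$ uniformly in $h$, whence $(w^h)^{-1}=\mathcal O((1+|k|)^{-1})$, and this sharpens the $\psi_0^h$ bound via the explicit formulas (\ref{gen_eigenfun_boundary}), (\ref{T_h_k}), (\ref{gen_eigenfun_bound_est1_1}).
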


\begin{proof}
From the result of the Lemma \ref{Lemma_Jost}, the Jost's solutions $\chi
_{\pm}^{h}\left(  x,k\right)  $ are $\mathcal{C}_{x}^{1}$-continuous and the
exterior conditions (\ref{Jost_sol_ext_h}) can be used for the explicit
computation of $\partial_{1}^{j}\chi_{\pm}^{h}\left(  y,k\right)  $ when
$y=a,b$ and $j=0,1$. Then, the relations (\ref{gen_eigenfun_h_jost}),
(\ref{G_k_jost})-(\ref{H_k_jost}) allow to obtain (almost) explicit
representations of the quantities considered in (\ref{Trace_est1}%
)-(\ref{Trace_est2}). Let start considering the boundary values $\partial
_{1}^{j}\psi_{0}^{h}(y,k)$; we focus on the case $y=a$, while similar
computations hold for $y=b$. If $k<0$, the representation
(\ref{gen_eigenfun_h_jost}) and the exterior conditions (\ref{Jost_sol_ext_h})
imply%
\begin{equation}%
\begin{array}
[c]{ccc}%
1_{\left\{  k<0\right\}  }\left(  k\right)  \psi_{0}^{h}(a,k)=\frac
{2ik}{hw^{h}(-k)}e^{i\frac{k}{h}a}\,, & \text{and} & 1_{\left\{  k<0\right\}
}\left(  k\right)  \partial_{1}\psi_{0}^{h}(a,k)=-\frac{2k^{2}}{h^{2}%
w^{h}(-k)}e^{i\frac{k}{h}a}\,.
\end{array}
\label{gen_eigenfun_boundary}%
\end{equation}
Recalling that $w^{h}(-k)=\left(  w^{h}(k)\right)  ^{\ast}$, the inequality
(\ref{wronskian_id}) leads to%
\begin{equation}%
\begin{array}
[c]{ccc}%
\left\vert 1_{\left\{  k<0\right\}  }\left(  k\right)  \psi_{0}^{h}%
(a,k)\right\vert \leq2\,, & \text{and} & \left\vert 1_{\left\{  k<0\right\}
}\left(  k\right)  \partial_{1}\psi_{0}^{h}(a,k)\right\vert \leq2\left\vert
k\right\vert \,.
\end{array}
\label{gen_eigenfun_boundary_est1}%
\end{equation}
If $k\geq0$, comparing (\ref{gen_eigenfun_ext1}) with
(\ref{gen_eigenfun_h_jost}) and taking into account (\ref{Jost_sol_ext_h}), we
get the relation%
\begin{equation}
1_{\left\{  k\geq0\right\}  }\left(  k\right)  T^{h}(k,0)=-\frac{2ik}%
{hw^{h}(k)}\,. \label{T_h_k}%
\end{equation}
The identity: $\left\vert T^{h}(k,0)\right\vert ^{2}+\left\vert R^{h}%
(k,0)\right\vert ^{2}=1$, and the representations
\begin{equation}%
\begin{array}
[c]{ccc}%
1_{\left\{  k\geq0\right\}  }\left(  k\right)  \psi_{0}^{h}(a,k)=e^{i\frac
{k}{h}a}+R^{h}(k,0)e^{-i\frac{k}{h}a}\,, &  & 1_{\left\{  k\geq0\right\}
}\left(  k\right)  \partial_{1}\psi_{0}^{h}(a,k)=i\frac{k}{h}\left(
e^{i\frac{k}{h}a}-R^{h}(k,0)e^{-i\frac{k}{h}a}\right)  \,,
\end{array}
\label{gen_eigenfun_bound_est1_1}%
\end{equation}
yield%
\begin{equation}%
\begin{array}
[c]{ccc}%
\left\vert 1_{\left\{  k\geq0\right\}  }\left(  k\right)  \psi_{0}%
^{h}(a,k)\right\vert \leq2\,, &  & \left\vert 1_{\left\{  k\geq0\right\}
}\left(  k\right)  \partial_{1}\psi_{0}^{h}(a,k)\right\vert \leq2\frac{k}%
{h}\,.
\end{array}
\label{gen_eigenfun_boundary_est2}%
\end{equation}
From (\ref{gen_eigenfun_boundary_est1}), (\ref{gen_eigenfun_boundary_est2}),
and from similar computations in the case of $y=b$, follows%
\begin{equation}
\left\vert \psi_{0}^{h}(y,k)\right\vert +h\,\left\vert k\right\vert
^{-1}\left\vert \partial_{1}\psi_{0}^{h}(y,k)\right\vert \leq C_{a,b,c}%
\,,\quad y=a,b\,. \label{gen_eigenfun_boundary_est3}%
\end{equation}

The relations (\ref{Trace_est2}) are next considered for $y^{\prime}=a$ (when
$y^{\prime}=b$ the result follows from similar computations). According to
(\ref{gen_eigenfun_h_jost}), we have%
\begin{align}
&  \left.  \left\vert 1_{\left\{  k\geq0\right\}  }\left(  k\right)  \left(
hw^{h}\left(  k\right)  \right)  ^{-1}\partial_{1}^{j}\chi_{+}^{h}\left(
a,k\right)  \right\vert +\left\vert 1_{\left\{  k<0\right\}  }\left(
k\right)  \left(  hw^{h}\left(  k\right)  \right)  ^{-1}\partial_{1}^{j}%
\chi_{-}^{h}\left(  b,-k\right)  \right\vert \right. \\
& \nonumber\\
&  \left.  =\left\vert \left(  2k\right)  ^{-1}1_{\left\{  k\geq0\right\}
}\partial_{1}^{j}\psi_{0}^{h}(a,k)\right\vert +\left\vert \left(  2k\right)
^{-1}1_{\left\{  k<0\right\}  }\partial_{1}^{j}\psi_{0}^{h}(b,k)\right\vert
\right.  \,,\nonumber
\end{align}
then, the estimate (\ref{gen_eigenfun_boundary_est3}) lead us to%
\begin{equation}
\left\vert 1_{\left\{  k\geq0\right\}  }\left(  k\right)  \left(
hw^{h}\left(  k\right)  \right)  ^{-1}\partial_{1}^{j}\chi_{+}^{h}\left(
a,k\right)  \right\vert +\left\vert 1_{\left\{  k<0\right\}  }\left(
k\right)  \left(  hw^{h}\left(  k\right)  \right)  ^{-1}\partial_{1}^{j}%
\chi_{-}^{h}\left(  b,-k\right)  \right\vert \leq\frac{C_{a,b,c}}{2\left\vert
k\right\vert }\left(  \frac{\left\vert k\right\vert }{h}\right)  ^{j}\,,
\end{equation}
with $j=0,1$. Both these relations easily extend to all $k<0$ by recalling
that: $w^{h}(-k)=\left(  w^{h}(k)\right)  ^{\ast}$ and $\partial_{1}^{j}%
\chi_{\pm}^{h}\left(  \cdot,-k\right)  =\left(  \partial_{1}^{j}\chi_{\pm}%
^{h}\left(  \cdot,k\right)  \right)  ^{\ast}$. Finally, the identities:
$\partial_{1}^{j}\chi_{+}^{h}\left(  b,k\right)  =\left(  i\frac{k}{h}\right)
^{j}e^{i\frac{k}{h}b}$ and $\partial_{1}^{j}\chi_{-}^{h}\left(  a,k\right)
=\left(  -i\frac{k}{h}\right)  ^{j}e^{-i\frac{k}{h}a}$ (arising from
(\ref{Jost_sol_ext_h})) and the inequality (\ref{wronskian_id}) allow to
generalize this result to all $y=a,b$. This resumes as follows%
\begin{equation}
\left\vert \left(  hw^{h}\left(  k\right)  \right)  ^{-1}\partial_{1}^{j}%
\chi_{\pm}^{h}\left(  y,k\right)  \right\vert \leq\frac{C_{a,b,c}}{2\left\vert
k\right\vert }\left(  \frac{\left\vert k\right\vert }{h}\right)  ^{j}\,,\quad
y=a,b\,,\ j=0,1\,. \label{gen_eigenfun_boundary_est4}%
\end{equation}
From the representation (\ref{G_k_jost}), the boundary values $G^{\pm
\left\vert k\right\vert ,h}\left(  y,a\right)  $ write as
\begin{equation}
G^{k,h}\left(  y,a\right)  =\frac{1}{h^{2}w^{h}\left(  k\right)  }%
\mathcal{\,}\chi_{+}^{h}\left(  y,k\right)  e^{-i\frac{k}{h}a}\,,\quad
\text{for }y\in\left\{  a,b\right\}  \,. \label{G_k_h_boundary}%
\end{equation}
Let $c>0$ be such that (\ref{V_pos1}) holds at consider at first the case
$k^{2}\geq c$; using (\ref{gen_eigenfun_boundary_est4}) with $j=0$, we get%
\begin{equation}
\left\vert 1_{\left\{  k^{2}\geq c\right\}  }\left(  k\right)  G^{k,h}\left(
y,a\right)  \right\vert \leq1_{\left\{  k^{2}\geq c\right\}  }\left(
k\right)  \frac{C_{a,b,c}}{h\left\vert k\right\vert }\leq\frac{\tilde
{C}_{a,b,c}}{h\left(  1+\left\vert k\right\vert \right)  }\,,\quad y=a,b\,,
\label{G_k_h_boundary_est1}%
\end{equation}
for a suitable $\tilde{C}_{a,b,c}>0$ (depending on $a,b,c$). The function
$G^{\pm\left\vert k\right\vert ,h}\left(  \cdot,a\right)  $ solves an equation
of the type-(\ref{Ag_eq}) with: $\gamma_{a}=-\frac{1}{h}$ and $\gamma_{b}=0$;
when $k^{2}\leq c$, the lemma \ref{Lemma_energy_est} applies, allowing to
control the boundary values $G^{\pm\left\vert k\right\vert ,h}\left(
y,a\right)  $ according to%
\begin{equation}
\left\vert 1_{\left\{  k^{2}<c\right\}  }\left(  k\right)  G^{k,h}\left(
y,a\right)  \right\vert \leq C_{a,b,c}h^{-2}\,,\qquad y=a,b\,.
\label{G_k_h_boundary_est3}%
\end{equation}
Hence, (\ref{G_k_h_boundary_est1})-(\ref{G_k_h_boundary_est3}) lead us to%
\begin{equation}
\left\vert \left(  1+\left\vert k\right\vert \right)  G^{k,h}\left(
y,a\right)  \right\vert \leq K_{a,b,c}h^{-2}\,,\qquad y=a,b\,.
\label{G_k_h_boundary_est}%
\end{equation}
Using the representation (\ref{H_k_jost}), the exterior condition:
$1_{\left\{  x<a\right\}  }\left(  x\right)  \chi_{-}^{h}\left(  x,k\right)
=e^{-i\frac{k}{h}x}$ and the $\mathcal{C}_{x}^{1}$-regularity of $\chi_{-}%
^{h}\left(  \cdot,k\right)  $, it follows%
\begin{equation}
1_{\left[  a,b\right]  }H^{k,h}\left(  y,a\right)  =\frac{-ik}{h^{3}%
w^{h}\left(  k\right)  }\mathcal{\,}\chi_{+}^{h}\left(  y,k\right)
\,e^{-i\frac{k}{h}a}\,,\quad y=a,b\,, \label{H_k_h_boundary}%
\end{equation}
and the inequality (\ref{gen_eigenfun_boundary_est4}), $j=0$, yields%
\begin{equation}
\left\vert H^{k,h}\left(  y,a\right)  \right\vert \leq\frac{C_{a,b,c}}{h^{2}%
}\,,\qquad y=a,b\,. \label{H_k_h_boundary_est}%
\end{equation}
Using once more (\ref{H_k_jost}), we have%
\begin{equation}
\partial_{1}H^{k,h}\left(  y,a\right)  =\frac{-ik}{h^{3}w^{h}\left(  k\right)
}\mathcal{\,}\partial_{1}\chi_{+}^{h}\left(  y,k\right)  \,e^{-i\frac{k}{h}%
a}\,,\quad y=a,b\,. \label{D_H_k_h_boundary}%
\end{equation}
Then, (\ref{gen_eigenfun_boundary_est4}), $j=1$, implies%
\begin{equation}
\left\vert \partial_{1}H^{k,h}\left(  a,a\right)  \right\vert \leq
C_{a,b,c}\frac{\left\vert k\right\vert }{h^{3}}\,,\quad y=a,b\,.
\label{D_H_k_h_boundary_est}%
\end{equation}
For $y^{\prime}=a$, the inequality (\ref{Trace_est2}) follows from
(\ref{G_k_h_boundary_est}), (\ref{H_k_h_boundary_est}) and
(\ref{D_H_k_h_boundary_est}) with a suitable $C_{a,b,c}$.

Finally, we reconsider the bound (\ref{gen_eigenfun_boundary_est3}); according
to (\ref{G_k_jost}), the relation (\ref{Trace_est2}) yields%
\begin{equation}
G^{k,h}\left(  b,a\right)  =\left(  h^{2}w^{h}\left(  k\right)  \right)
^{-1}\mathcal{\,}e^{i\frac{k}{h}\left(  b-a\right)  }=\mathcal{O}\left(
h^{-2}\right)  \,, \label{Jost_fun_boundary_est}%
\end{equation}
uniformly w.r.t. $k\in\mathbb{R}$. It follows: $\inf_{\mathbb{R}}\left\vert
w^{h}\left(  k\right)  \right\vert >c_{0}$ for a suitable $c_{0}>0$ possibly
depending on the data, while, taking into account (\ref{wronskian_id}), we
get: $\left(  w^{h}\left(  k\right)  \right)  ^{-1}=\mathcal{O}\left(  \left(
1+k\right)  ^{-1}\right)  $. Hence, the relations $(i)$%
-(\ref{gen_eigenfun_boundary}), (\ref{T_h_k}) and $(i)$%
-(\ref{gen_eigenfun_bound_est1_1}), yield: $\left\vert \psi_{0}^{h}%
(a,k)\right\vert =\mathcal{O}\left(  k\left(  1+k\right)  ^{-1}\right)  $;
this improves the previous estimates according to (\ref{Trace_est1}).
\end{proof}

\begin{remark}
\label{Remark_trace}The result presented in \ref{Proposition_trace_est_h}
stands upon the regularity of the Jost's solution at the boundaries points
$\left\{  a,b\right\}  $. This property, considered in the Lemma
\ref{Lemma_Jost}, generically holds for positive defined and compactly
supported potentials, while the trace estimates (\ref{Trace_est1}%
)-(\ref{Trace_est2}) do not depend on the particular shape of $\mathcal{V}$,
provided that it fulfills the conditions (\ref{V_pos1}).
\end{remark}

\subsection{Generalized eigenfunctions expansion}

The result of the Proposition \ref{Proposition_trace_est_h} can be implemented
to obtain an expansion of the\ modified generalized eigenfunctions
$\psi_{\theta}^{h}(\cdot,k)$ when both $\theta$ and $h$ are small. Using the
notation introduced in (\ref{GH_k}) and (\ref{krein_coeff_k}), a direct
computation leads to%
\begin{align}
&  \left.  \medskip\mathcal{M}^{h}\left(  k,\theta\right)  =\right. \\
&  \left.  \medskip%
\begin{pmatrix}
\mathcal{\beta}\left(  \theta\right)  H^{k,h}(b^{+},b) & -\mathcal{\beta
}\left(  \theta\right)  \partial_{1}H^{k,h}(b,b) & \mathcal{\beta}\left(
\theta\right)  H^{k,h}(a,b) & -\mathcal{\beta}\left(  \theta\right)
\partial_{1}H^{k,h}(b,a)\\
\mathcal{\beta}\left(  \theta\right)  G^{k,h}(b,b) & -\mathcal{\beta}\left(
\theta\right)  H^{k,k}(b^{+},b) & \mathcal{\beta}\left(  \theta\right)
G^{k,h}(b,a) & -\mathcal{\beta}\left(  \theta\right)  H^{k,h}(b,a)\\
\mathcal{\beta}\left(  -\theta\right)  H^{k,h}(b,a) & -\mathcal{\beta}\left(
-\theta\right)  \partial_{1}H^{k,h}(a,b) & \mathcal{\beta}\left(
-\theta\right)  H^{k,h}(a^{-},a) & -\mathcal{\beta}\left(  -\theta\right)
\partial_{1}H^{k,h}(a,a)\\
\mathcal{\beta}\left(  -\theta\right)  G^{k,h}(a,b)\medskip & -\mathcal{\beta
}\left(  -\theta\right)  H^{k,h}(a,b) & \mathcal{\beta}\left(  -\theta\right)
G^{k,h}(a,a) & -\mathcal{\beta}\left(  -\theta\right)  H^{k,h}(a^{-},a)
\end{pmatrix}
\right. \nonumber\\
&  \left.  -\frac{1}{h^{2}}%
\begin{pmatrix}
\alpha\left(  \theta\right)  +\frac{h^{2}}{2}\mathcal{\beta}\left(
\theta\right)  &  &  & \\
& \alpha\left(  \theta\right)  -\frac{h^{2}}{2}\mathcal{\beta}\left(
\theta\right)  &  & \\
&  & \alpha\left(  -\theta\right)  -\frac{h^{2}}{2}\mathcal{\beta}\left(
-\theta\right)  & \\
&  &  & \alpha\left(  -\theta\right)  +\frac{h^{2}}{2}\mathcal{\beta}\left(
-\theta\right)
\end{pmatrix}
\,,\right.  \qquad\nonumber
\end{align}
where $\alpha\left(  \theta\right)  $ and $\mathcal{\beta}\left(
\theta\right)  $ are defined by%
\begin{equation}
\mathcal{\alpha}\left(  \theta\right)  =1+e^{\frac{\theta}{2}}\,,\qquad
\mathcal{\beta}\left(  \theta\right)  =1-e^{\frac{\theta}{2}}\,.
\end{equation}
As consequence of the estimates (\ref{Trace_est1})-(\ref{Trace_est2}), for
defined positive potentials the above relation rephrases as%
\begin{align}
&  \left.  \medskip\mathcal{M}^{h}\left(  k,\theta\right)  =\right.
\label{M_k_teta}\\
&  \medskip\left.  -\frac{1}{h^{2}}%
\begin{pmatrix}
\alpha\left(  \theta\right)  &  &  & \\
& \alpha\left(  \theta\right)  &  & \\
&  & \alpha\left(  -\theta\right)  & \\
&  &  & \alpha\left(  -\theta\right)
\end{pmatrix}
\right. \nonumber\\
&  \left.  +%
\begin{pmatrix}
\mathcal{\beta}\left(  \theta\right)  \mathcal{O}\left(  h^{-2}\right)  &
\mathcal{\beta}\left(  \theta\right)  \mathcal{O}\left(  \left\vert
k\right\vert h^{-3}\right)  & \mathcal{\beta}\left(  \theta\right)
\mathcal{O}\left(  h^{-2}\right)  & \mathcal{\beta}\left(  \theta\right)
\mathcal{O}\left(  \left\vert k\right\vert h^{-3}\right) \\
\mathcal{\beta}\left(  \theta\right)  \mathcal{O}\left(  h^{-2}\left(
1+\left\vert k\right\vert \right)  ^{-1}\right)  & \mathcal{\beta}\left(
\theta\right)  \mathcal{O}\left(  h^{-2}\right)  & \mathcal{\beta}\left(
\theta\right)  \mathcal{O}\left(  h^{-2}\left(  1+\left\vert k\right\vert
\right)  ^{-1}\right)  & \mathcal{\beta}\left(  \theta\right)  \mathcal{O}%
\left(  h^{-2}\right) \\
\mathcal{\beta}\left(  -\theta\right)  \mathcal{O}\left(  h^{-2}\right)  &
\mathcal{\beta}\left(  -\theta\right)  \mathcal{O}\left(  \left\vert
k\right\vert h^{-3}\right)  & \mathcal{\beta}\left(  -\theta\right)
\mathcal{O}\left(  h^{-2}\right)  & \mathcal{\beta}\left(  -\theta\right)
\mathcal{O}\left(  \left\vert k\right\vert h^{-3}\right) \\
\mathcal{\beta}\left(  -\theta\right)  \mathcal{O}\left(  h^{-2}\left(
1+\left\vert k\right\vert \right)  ^{-1}\right)  & \mathcal{\beta}\left(
-\theta\right)  \mathcal{O}\left(  h^{-2}\right)  & \mathcal{\beta}\left(
-\theta\right)  \mathcal{O}\left(  h^{-2}\left(  1+\left\vert k\right\vert
\right)  ^{-1}\right)  & \mathcal{\beta}\left(  -\theta\right)  \mathcal{O}%
\left(  h^{-2}\right)
\end{pmatrix}
\right.  \,.\nonumber
\end{align}
being the symbols $\mathcal{O}\left(  \cdot\right)  $ referred to the metric
space $\mathbb{R}\times\left(  0,h_{0}\right]  $ and defining continuous
functions of $k\in\mathbb{R}$. From the definition of $\mathcal{\alpha}\left(
\theta\right)  $, $\mathcal{\beta}\left(  \theta\right)  $, the coefficients
of $\mathcal{M}^{h}\left(  k,\theta\right)  $ result $\theta$-holomorphic and
continuous w.r.t. $\left(  k,\theta\right)  \in\mathbb{C}\times\mathbb{R}$
while, using the expansions: $\mathcal{\alpha}\left(  \theta\right)
=2+\mathcal{O}\left(  \theta\right)  $ and $\mathcal{\beta}\left(
\theta\right)  =\mathcal{O}\left(  \theta\right)  $, follows%
\begin{equation}
\mathcal{M}^{h}\left(  k,\theta\right)  =-\frac{2}{h^{2}}1_{\mathbb{C}^{4}%
}+\theta\,m^{h}\left(  k,\theta\right)  \,, \label{M_k_teta_exp}%
\end{equation}
where the remainder term is%
\begin{equation}
m^{h}\left(  k,\theta\right)  =%
\begin{pmatrix}
\mathcal{O}\left(  h^{-2}\right)  & \mathcal{O}\left(  \left\vert k\right\vert
h^{-3}\right)  & \mathcal{O}\left(  h^{-2}\right)  & \mathcal{O}\left(
\left\vert k\right\vert h^{-3}\right) \\
\mathcal{O}\left(  h^{-2}\left(  1+\left\vert k\right\vert \right)
^{-1}\right)  & \mathcal{O}\left(  h^{-2}\right)  & \mathcal{O}\left(
h^{-2}\left(  1+\left\vert k\right\vert \right)  ^{-1}\right)  &
\mathcal{O}\left(  h^{-2}\right) \\
\mathcal{O}\left(  h^{-2}\right)  & \mathcal{O}\left(  \left\vert k\right\vert
h^{-3}\right)  & \mathcal{O}\left(  h^{-2}\right)  & \mathcal{O}\left(
\left\vert k\right\vert h^{-3}\right) \\
\mathcal{O}\left(  h^{-2}\left(  1+\left\vert k\right\vert \right)
^{-1}\right)  & \mathcal{O}\left(  h^{-2}\right)  & \mathcal{O}\left(
h^{-2}\left(  1+\left\vert k\right\vert \right)  ^{-1}\right)  &
\mathcal{O}\left(  h^{-2}\right)
\end{pmatrix}
\,, \label{M_k_teta_rem}%
\end{equation}
Hence, $\mathcal{M}^{h}\left(  k,\theta\right)  $ is invertible for all $k$
provided that $\theta$ is small depending on $h$, while, from the
representation (\ref{gen_eigenfun_Krein_h}), an expansion of $\psi_{\theta
}^{h}(\cdot,k)$ for small values of $\theta$ follows.

\begin{proposition}
Assume $h\in\left(  0,h_{0}\right]  $, $\left\vert \theta\right\vert \leq
h^{2}$ and let $\mathcal{V}$ be defined according to (\ref{V_pos1}) for some
$c>0$. For a suitably small $h_{0}$, the solutions $\psi_{\theta}^{h}%
(\cdot,k)$ of the generalized eigenfunctions problem (\ref{gen_eigenfun_eq}),
(\ref{gen_eigenfun_ext1})-(\ref{gen_eigenfun_ext2}) allow the expansion%
\begin{align}
&  \left.  \psi_{\theta}^{h}(\cdot,k)-\psi_{0}^{h}(\cdot,k)=\right.
\label{gen_eigenfun_exp}\\
& \nonumber\\
&  \left.  \mathcal{O}\left(  \frac{\theta k}{h}\right)  G^{\left\vert
k\right\vert ,h}(\cdot,b)+\mathcal{O}\left(  \frac{\theta k}{1+k}\right)
H^{\left\vert k\right\vert ,h}(\cdot,b)+\mathcal{O}\left(  \frac{\theta k}%
{h}\right)  G^{\left\vert k\right\vert ,h}(\cdot,a)+\mathcal{O}\left(
\frac{\theta k}{1+k}\right)  H^{\left\vert k\right\vert ,h}(\cdot,a)\right.
\,,\nonumber
\end{align}
where the symbols $\mathcal{O}\left(  \cdot\right)  $ denote functions of the
variables $\left\{  \theta,k,h\right\}  \in$ $B_{h^{2}}(0)\times
\mathbb{R}\times\left(  0,h_{0}\right]  $ holomorphic w.r.t. $\theta$ and
continuous in $k$.
\end{proposition}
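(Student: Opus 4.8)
The plan is to insert the expansions \eqref{M_k_teta_exp}--\eqref{M_k_teta_rem} into the Krein-type representation \eqref{gen_eigenfun_Krein_h} and track the orders in $\theta$ and $h$ carefully. First I would invert $\mathcal{M}^{h}(k,\theta)$: writing $\mathcal{M}^{h}(k,\theta) = -\tfrac{2}{h^{2}}\,(1_{\mathbb{C}^{4}} - \tfrac{h^{2}\theta}{2}m^{h}(k,\theta))$, and noting from \eqref{M_k_teta_rem} that the entries of $m^{h}(k,\theta)$ are $\mathcal{O}(h^{-3}|k|)$ or better uniformly on $B_{h^{2}}(0)\times\mathbb{R}\times(0,h_0]$, one sees that $\tfrac{h^{2}\theta}{2}m^{h}$ has operator-norm $\mathcal{O}(\theta h^{-1}|k|)$ on bounded $k$--sets; but since $|\theta|\le h^{2}$ this does not by itself control all of $\mathbb{R}$. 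The key observation is that the problematic $\mathcal{O}(|k|h^{-3})$ entries occur only in the even columns, paired against the rows whose other entries are $\mathcal{O}(h^{-2})$ or $\mathcal{O}(h^{-2}(1+|k|)^{-1})$; a Neumann series still converges provided $h_0$ is chosen small enough relative to the constants, because after one conjugation by the diagonal rescaling $\mathrm{diag}(1,h^{-1}(1+|k|), 1, h^{-1}(1+|k|))^{-1}$ all entries of the conjugated remainder become $\mathcal{O}(h^{-2})$, so $\tfrac{h^2\theta}{2}$ times it is $\mathcal{O}(\theta) = \mathcal{O}(h^{2})$ in norm. Hence $(\mathcal{M}^{h}(k,\theta))^{-1} = -\tfrac{h^{2}}{2}1_{\mathbb{C}^{4}} + \theta\, r^{h}(k,\theta)$, where $r^{h}$ inherits, up to the diagonal rescaling, entrywise bounds of the same structural type as $m^{h}$ (so $\mathcal{O}(h^{0})$, $\mathcal{O}(|k|h^{-1})$, $\mathcal{O}(h^{0}(1+|k|)^{-1})$ in the corresponding slots), and $r^{h}$ is holomorphic in $\theta$ and continuous in $k$.

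Next I would compute the coefficient $[(\mathcal{M}^{h})^{-1}B_{\theta}]_{ij}$. From \eqref{AB_theta_matrix} the matrix $B_{\theta}$ has only off-diagonal entries, each equal to $1-e^{\pm\theta/2}$ or $1-e^{\pm 3\theta/2}$, i.e. each entry is $\mathcal{O}(\theta)$. Multiplying $(\mathcal{M}^{h})^{-1} = -\tfrac{h^{2}}{2}1 + \theta r^{h}$ by $B_{\theta} = \mathcal{O}(\theta)$ gives $[(\mathcal{M}^{h})^{-1}B_{\theta}]_{ij} = -\tfrac{h^{2}}{2}[B_{\theta}]_{ij} + \mathcal{O}(\theta^{2}\cdot(\text{entries of }r^{h}))$; since $|\theta|\le h^{2}$ the second term is lower order, so the dominant contribution is $-\tfrac{h^{2}}{2}[B_{\theta}]_{ij}$, which is $\mathcal{O}(h^{2}\theta)$ except that one must keep track of which pair $(i,j)$ couples to which $r^{h}$ entry, since the $\mathcal{O}(|k|h^{-3})$ entries of $r^{h}$ will cancel one power of $h^{-1}$ against the explicit $h^{2}$. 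Collecting, $[(\mathcal{M}^{h}(k,\theta))^{-1}B_{\theta}]_{ij}$ is, in each slot, of order $\mathcal{O}(h^{2}\theta)$ times a bounded factor — except in the slots feeding the second-derivative defect functions, where it is $\mathcal{O}(\theta|k| h^{-1})$ after the rescaling is absorbed, exactly matching a factor $G$ versus $H$ dichotomy.

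Then I would substitute into \eqref{gen_eigenfun_Krein_h}: the difference $\psi_{\theta}^{h}(\cdot,k)-\psi_{0}^{h}(\cdot,k)$ equals $-\sum_{i,j}[(\mathcal{M}^{h})^{-1}B_{\theta}]_{ij}\,\Gamma^{k,h}_{j}\,g_{k,h,j}$, where $g_{k,h,j}$ are the limiting defect functions $G^{|k|,h}(\cdot,b), H^{|k|,h}(\cdot,b), G^{|k|,h}(\cdot,a), H^{|k|,h}(\cdot,a)$ and $\Gamma^{k,h}_{j}$ are the half-boundary-values in \eqref{gen_eigenfun_Krein_h_1}. The boundary values $\Gamma^{k,h}_{j}$ are controlled by the trace estimate \eqref{Trace_est1}: $|\psi_{0}^{h}(y,k)| = \mathcal{O}(k(1+k)^{-1})$ and $|h\,\partial_{1}\psi_{0}^{h}(y,k)| = \mathcal{O}(|k|)$, i.e. $|\partial_{1}\psi_0^h(y,k)| = \mathcal{O}(|k|h^{-1})$. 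Multiplying: for the two $G$-terms the coefficient is $\mathcal{O}(h^{2}\theta)$ (or with the $r^h$ correction $\mathcal{O}(\theta|k|h^{-1})$) and it multiplies a $\Gamma$-factor of size $\mathcal{O}(|k|h^{-1})$ coming from the first-derivative boundary value — after cancellation this yields $\mathcal{O}(\theta k h^{-1})$ in front of $G^{|k|,h}(\cdot,b)$ and $G^{|k|,h}(\cdot,a)$; for the two $H$-terms the coefficient multiplies a $\Gamma$-factor of size $\mathcal{O}(k(1+k)^{-1})$ coming from the function value $\psi_0^h(y,k)$, and $h^{2}\theta$ times the appropriate $r^h$ entry, combined with that factor, produces $\mathcal{O}(\theta k(1+k)^{-1})$ in front of $H^{|k|,h}(\cdot,b)$ and $H^{|k|,h}(\cdot,a)$. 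This is precisely \eqref{gen_eigenfun_exp}. Holomorphy in $\theta$ and continuity in $k$ follow because $\mathcal{M}^{h}(k,\theta)$, $B_{\theta}$, $\Gamma^{k,h}$ and the $g_{k,h,j}$ all have these properties (the $g_{k,h,j}$ via Lemma \ref{Lemma_Jost} and \eqref{wronskian_id} away from the origin, and the $k\to 0$ behaviour is absorbed by the vanishing factor $k$ or $k(1+k)^{-1}$), and inversion of an invertible holomorphic matrix preserves holomorphy.

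**Main obstacle.** The delicate point is the uniformity in $k$ of the Neumann inversion of $\mathcal{M}^{h}(k,\theta)$: the naive norm of the remainder $\tfrac{h^2\theta}{2}m^h(k,\theta)$ grows like $\theta|k|h^{-1}$, which is not small for large $k$ under the sole hypothesis $|\theta|\le h^{2}$. The resolution is the diagonal rescaling by $\mathrm{diag}(1,(1+|k|)/h,1,(1+|k|)/h)$: conjugating $m^{h}$ by this matrix turns every $\mathcal{O}(|k|h^{-3})$ entry into $\mathcal{O}(h^{-2})$ and every $\mathcal{O}(h^{-2}(1+|k|)^{-1})$ entry into $\mathcal{O}(h^{-2})$ as well, so the conjugated remainder has norm $\mathcal{O}(\theta) = \mathcal{O}(h^{2})$ uniformly in $k$, the Neumann series converges for $h_0$ small, and one then conjugates back — the rescaling cancels against the $\Gamma^{k,h}_j$ weights in exactly the way recorded above. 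Once this structural bookkeeping is set up, everything else is routine.
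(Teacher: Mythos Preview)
Your proposal follows the same overall route as the paper: plug the expansion \eqref{M_k_teta_exp}--\eqref{M_k_teta_rem} into the Krein-type formula \eqref{gen_eigenfun_Krein_h}, control the vector $B_\theta\Gamma^{k,h}$ via the trace estimates \eqref{Trace_est1}, and read off the four coefficients in front of the defect functions $g_{k,h,i}$. The only substantive difference is in how you justify the $k$-uniform invertibility of $\mathcal{M}^h(k,\theta)$.

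The paper computes the determinant and the adjugate directly: the structured pattern of $m^h$ (large entries $\mathcal{O}(|k|h^{-3})$ only in odd-row/even-column slots, compensated by $\mathcal{O}(h^{-2}(1+|k|)^{-1})$ in even-row/odd-column slots) makes every cofactor bounded, and one obtains the explicit entrywise form \eqref{gen_eigenfun_Krein_coeff} for $(\mathcal{M}^h)^{-1}$. You instead run a Neumann series after a diagonal rescaling. This is a clean alternative and, in fact, addresses more explicitly the $k$-uniformity that the paper's prose glosses over (its sentence ``the coefficients of $m^h$ are $\mathcal{O}(h^{-3})$'' is not literally true uniformly in $k$; only the cofactor computation makes this honest). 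One small correction: with your choice $D=\mathrm{diag}\bigl(1,(1+|k|)/h,1,(1+|k|)/h\bigr)$ the conjugated matrix $Dm^hD^{-1}$ has some entries of size $\mathcal{O}(h^{-3})$ (the even-row/odd-column ones), not $\mathcal{O}(h^{-2})$ as you write --- but since $\tfrac{h^2\theta}{2}\cdot h^{-3}=\mathcal{O}(\theta h^{-1})\le\mathcal{O}(h)$ under $|\theta|\le h^2$, this still suffices for the Neumann series and your conclusion stands.

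A bookkeeping suggestion: it is simpler, and closer to the paper, to compute $B_\theta\Gamma^{k,h}$ as a vector first (this is \eqref{gen_eigenfun_Krein_coeff1}) and then apply $(\mathcal{M}^h)^{-1}\approx -\tfrac{h^2}{2}I$ to it, rather than tracking the full matrix $(\mathcal{M}^h)^{-1}B_\theta$. Since $B_\theta$ simply swaps components $1\leftrightarrow 2$ and $3\leftrightarrow 4$ (up to $\mathcal{O}(\theta)$ scalars), the pairing of $\partial_1\psi_0^h$-values with the $G$-coefficients and of $\psi_0^h$-values with the $H$-coefficients becomes immediate that way.
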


\begin{proof}
The coefficients of the remainder $m^{h}\left(  k,\theta\right)  $ in
(\ref{M_k_teta_exp})-(\ref{M_k_teta_rem}), depending on the variables
$\left\{  \theta,k,h\right\}  $, are $\mathcal{O}\left(  h^{-3}\right)  $;
hence, for $\left\vert \theta\right\vert \leq h^{2}$ and $h\in\left(
0,h_{0}\right]  $ with $h_{0}$ suitably small, the expansion
(\ref{M_k_teta_exp}), rephrasing as: $\mathcal{M}^{h}\left(  k,\theta\right)
=-2h^{-2}1_{\mathbb{C}^{4}}+\mathcal{O}\left(  h^{-1}\right)  $, defines an
invertible matrix for all $k\in\mathbb{R}$ and the representation
(\ref{gen_eigenfun_Krein_h}) globally holds. In particular, from
(\ref{M_k_teta_exp})-(\ref{M_k_teta_rem}), a direct computation yields%
\begin{equation}
\det\mathcal{M}^{h}\left(  k,\theta\right)  =h^{-8}\left(  16+\mathcal{O}%
\left(  h\right)  \right)  \,,
\end{equation}
and%
\begin{equation}
\left(  \mathcal{M}^{h}\left(  k,\theta\right)  \right)  ^{-1}=h^{2}%
\begin{pmatrix}
-1/2+\mathcal{O}\left(  h\right)  & \mathcal{O}\left(  hk\right)  &
\mathcal{O}\left(  h\right)  & \mathcal{O}\left(  hk\right) \\
\mathcal{O}\left(  h^{2}\left(  1+k\right)  ^{-1}\right)  & -1/2+\mathcal{O}%
\left(  h\right)  & \mathcal{O}\left(  h^{2}\left(  1+k\right)  ^{-1}\right)
& \mathcal{O}\left(  h\right) \\
\mathcal{O}\left(  h\right)  & \mathcal{O}\left(  hk\right)  &
-1/2+\mathcal{O}\left(  h\right)  & \mathcal{O}\left(  hk\right) \\
\mathcal{O}\left(  h^{2}\left(  1+k\right)  ^{-1}\right)  & \mathcal{O}\left(
h\right)  & \mathcal{O}\left(  h^{2}\left(  1+k\right)  ^{-1}\right)  &
-1/2+\mathcal{O}\left(  h\right)
\end{pmatrix}
\label{gen_eigenfun_Krein_coeff}%
\end{equation}
From the relations (\ref{Trace_est1}) and the definitions
(\ref{AB_theta_matrix}), (\ref{gen_eigenfun_Krein_h_1}), follows%
\begin{equation}
B_{\theta}\Gamma^{k,h}=\left\{  \,\mathcal{O}\left(  \frac{\theta k}%
{h}\right)  ,\ \mathcal{O}\left(  \frac{\theta k}{1+k}\right)
\,,\ \mathcal{O}\left(  \frac{\theta k}{h}\right)  \,,\ \mathcal{O}\left(
\frac{\theta k}{1+k}\right)  \right\}  \,. \label{gen_eigenfun_Krein_coeff1}%
\end{equation}
where the symbols $\mathcal{O}\left(  \cdot\right)  $ are referred to the
metric space $B_{h^{2}}(0)\times\mathbb{R}\times\left(  0,h_{0}\right]  $.
Making use of the above relations, we get%
\begin{equation}
\left(  \mathcal{M}^{h}\left(  k,\theta\right)  \right)  ^{-1}B_{\theta}%
\Gamma^{k,h}=\left\{  \,\mathcal{O}\left(  \frac{\theta k}{h}\right)
,\ \mathcal{O}\left(  \frac{\theta k}{1+k}\right)  \,,\ \mathcal{O}\left(
\frac{\theta k}{h}\right)  \,,\ \mathcal{O}\left(  \frac{\theta k}%
{1+k}\right)  \right\}  \label{gen_eigenfun_Krein_coeff2}%
\end{equation}
Then, the expansion (\ref{gen_eigenfun_exp}) follows from the formula
(\ref{gen_eigenfun_Krein_h})-(\ref{gen_eigenfun_Krein_h_1}) by taking into
account (\ref{gen_eigenfun_Krein_coeff2}) and the definition of $g_{k,h,j}$.
\end{proof}

\subsection{Stationary wave operators and uniform-in-time estimates for the
dynamical system}

Following \cite{Man1}, we next construct a similarity between $H_{\theta}^{h}$
and $H_{0}^{h}$ by making use of the stationary waves operators related to the
scattering system $\left\{  H_{\theta}^{h},H_{0}^{h}\right\}  $. Let us recall
that, for potentials defined as in (\ref{V}), the generalized Fourier
transform associated to $H_{0}^{h}$,%
\begin{equation}
\left(  \mathcal{F}_{\mathcal{V}}^{h}\varphi\right)  (k)=\int_{\mathbb{R}%
}\frac{dx}{\left(  2\pi h\right)  ^{1/2}}\,\left(  \psi_{0}^{h}\left(
x,k\right)  \right)  ^{\ast}\varphi(x)\,,\qquad\varphi\in L^{2}(\mathbb{R})\,,
\label{gen_Fourier_h}%
\end{equation}
is a bounded operator on $L^{2}(\mathbb{R})$ with a right inverse coinciding
with the adjoint $\left(  \mathcal{F}_{\mathcal{V}}^{h}\right)  ^{\ast}$%
\begin{equation}
\left(  \mathcal{F}_{\mathcal{V}}^{h}\right)  ^{\ast}f(x)=\int\frac
{dk}{\left(  2\pi h\right)  ^{1/2}}\,\psi_{0}^{h}(x,k)f(k)\,,
\label{gen_Fourier_h_inv}%
\end{equation}
and it results: $\mathcal{F}_{\mathcal{V}}^{h}\left(  \mathcal{F}%
_{\mathcal{V}}^{h}\right)  ^{\ast}=1_{L^{2}\left(  \mathbb{R}\right)  }$ in
$L^{2}(\mathbb{R})$, while the product $\left(  \mathcal{F}_{\mathcal{V}}%
^{h}\right)  ^{\ast}\mathcal{F}_{\mathcal{V}}^{h}$ defines the projector on
the absolutely continuous subspace of $H_{0}^{h}$ (cf. \cite{Yafa}). In
addition, when $\mathcal{V}$ is positive defined, $H_{0}^{h}$ has a purely
absolutely continuous spectrum coinciding with $\mathbb{R}_{+}$; in this case
$\mathcal{F}_{\mathcal{V}}$ is an unitary map with range $L^{2}(\mathbb{R})$
and the representation: $1_{L^{2}\left(  \mathbb{R}\right)  }=\left(
\mathcal{F}_{\mathcal{V}}^{h}\right)  ^{\ast}\left(  \mathcal{F}_{\mathcal{V}%
}^{h}\varphi\right)  $ holds. According to the above notation, the standard
Fourier transform operator, corresponding to the case $\mathcal{V}=0$, is next
denoted with $\mathcal{F}_{0}^{h}$. We consider the maps $\phi_{\alpha}^{h}$
and $\psi_{\alpha}^{h}$, acting on $L^{2}\left(  \mathbb{R}\right)  $ as%
\begin{align}
\phi_{\alpha}^{h}(\varphi,f)  &  =\int_{\mathbb{R}}\frac{dk}{\left(  2\pi
h\right)  ^{1/2}}\,\,f(k)\,G^{\left\vert k\right\vert ,h}\left(  \cdot
,\alpha\right)  \left(  \mathcal{F}_{\mathcal{V}}^{h}\varphi\right)
(k)\,,\quad\alpha\in\left\{  a,b\right\}  \,,\label{Phi_alpha_h}\\
& \nonumber\\
\psi_{\alpha}^{h}(\varphi,f)  &  =\int_{\mathbb{R}}\frac{dk}{\left(  2\pi
h\right)  ^{1/2}}\,\,g(k)\,H^{\left\vert k\right\vert ,h}(\cdot,\alpha)\left(
\mathcal{F}_{\mathcal{V}}^{h}\varphi\right)  (k)\,,\quad\alpha\in\left\{
a,b\right\}  \,. \label{Psi_alpha_h}%
\end{align}
Here $G^{k,h}$ and $H^{k,h}$ are the limits of the Green's functions on the
branch cut (see the definition in (\ref{G_k_jost})-(\ref{H_k_jost})), while
$f$ is an auxiliary function, possibly depending on $h$ and $\theta$ aside
from $k$.

\begin{lemma}
\label{Lemma_Simil_est}Let $h\in\left(  0,h_{0}\right]  $ and $\mathcal{V}$ be
defined according to (\ref{V_pos1}) with $h_{0}$ suitably small. Assume
$f_{j=1,2}\in L_{k}^{\infty}\left(  \mathbb{R}\right)  $ such that:
$f=\mathcal{O}\left(  k\right)  $ and $g=\mathcal{O}\left(  \frac{k}%
{1+k}\right)  $. Then it results%
\begin{equation}
h\left\Vert \phi_{\alpha}^{h}(\cdot,f)\right\Vert _{\mathcal{L}\left(
L^{2}\left(  \mathbb{R}\right)  ,\right)  }+\left\Vert \psi_{\alpha}^{h}%
(\cdot,g)\right\Vert _{\mathcal{L}\left(  L^{2}\left(  \mathbb{R}\right)
\right)  }\leq C_{a,b,c}h^{-2}\,, \label{phi_psi_est}%
\end{equation}
and%
\begin{equation}
h\left\Vert \phi_{\alpha}^{h}(\cdot,f)\right\Vert _{\mathcal{L}\left(
H^{2}\left(  \mathbb{R}\right)  ,H^{2}\left(  \mathbb{R}\backslash\left\{
a,b\right\}  \right)  \right)  }+\left\Vert \psi_{\alpha}^{h}(\cdot
,g)\right\Vert _{\mathcal{L}\left(  H^{2}\left(  \mathbb{R}\right)
,H^{2}\left(  \mathbb{R}\backslash\left\{  a,b\right\}  \right)  \right)
}\leq C_{a,b,c}h^{-2}\,, \label{phi_psi_est_1}%
\end{equation}
where $C_{a,b,c}$ is a positive constant depending on the data.
\end{lemma}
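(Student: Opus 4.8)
The plan is to reduce both inequalities to $L^{\infty}$--bounds on Fourier--type multipliers, using that under (\ref{V_pos1}) the map $\mathcal{F}_{\mathcal{V}}^{h}$ is unitary, $\left(\mathcal{F}_{\mathcal{V}}^{h}\right)^{\ast}$ is an isometry of $L^{2}(\mathbb{R})$, and $\mathcal{F}_{\mathcal{V}}^{h}$ intertwines $H_{0}^{h}$ with multiplication by $k^{2}$. Writing $\hat{\varphi}=\mathcal{F}_{\mathcal{V}}^{h}\varphi$ (so $\|\hat{\varphi}\|_{L^{2}}=\|\varphi\|_{L^{2}}$), the starting point is that, by (\ref{G_k_jost})--(\ref{H_k_jost}) and the Jost representation (\ref{gen_eigenfun_h_jost}), the kernels in (\ref{Phi_alpha_h})--(\ref{Psi_alpha_h}) have a two--piece structure relative to the singular point $\alpha\in\{a,b\}$: on one side of $\alpha$ they are a scalar multiple of $\psi_{0}^{h}(\cdot,\mp|k|)$, on the other a scalar multiple of the plane wave $e^{\pm i|k|x/h}$. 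For $\alpha=b$, for instance,
\[
G^{|k|,h}(x,b)=\tfrac{i\,e^{i|k|b/h}}{2h|k|}\,\psi_{0}^{h}(x,-|k|),\quad H^{|k|,h}(x,b)=\tfrac{e^{i|k|b/h}}{2h^{2}}\,\psi_{0}^{h}(x,-|k|)\qquad(x<b),
\]
\[
G^{|k|,h}(x,b)=\tfrac{\chi_{-}^{h}(b,|k|)}{h^{2}w^{h}(|k|)}\,e^{i|k|x/h},\quad H^{|k|,h}(x,b)=-\tfrac{\partial_{1}\chi_{-}^{h}(b,|k|)}{h^{2}w^{h}(|k|)}\,e^{i|k|x/h}\qquad(x>b),
\]
with the mirror statements for $\alpha=a$ (exchange $\chi_{+}^{h}\leftrightarrow\chi_{-}^{h}$ and the two half--lines, with $\psi_0^h(\cdot,+|k|)$ replacing $\psi_0^h(\cdot,-|k|)$).

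First I would split $\phi_{\alpha}^{h}(\varphi,f)$ (resp. $\psi_{\alpha}^{h}(\varphi,g)$) into its restrictions to the two half--lines determined by $\alpha$; since restricting to a half--line only decreases the $L^{2}$--norm, it suffices to bound the two full superpositions. The $\psi_{0}^{h}$--piece, after splitting the $k$--integral at $k=0$ and applying the substitution $k\mapsto-k$ on one half, equals $\left(\mathcal{F}_{\mathcal{V}}^{h}\right)^{\ast}$ applied to a function of the form $1_{(-\infty,0)}(k)\bigl(m(-k)\hat{\varphi}(-k)+m(k)\hat{\varphi}(k)\bigr)$, whence its $L^{2}$--norm is $\leq 2\|m\|_{L^{\infty}}\|\varphi\|_{L^{2}}$ with $m$ the corresponding symbol; the plane--wave piece is a sum of two half--line semiclassical Fourier integrals, whence its $L^{2}$--norm is $\leq 2\|n\|_{L^{\infty}}\|\varphi\|_{L^{2}}$ with $n$ its symbol, since $\left(\mathcal{F}_{0}^{h}\right)^{\ast}$ is unitary. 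It then remains to read off the $L^{\infty}$--size of the symbols from the trace estimates of Proposition \ref{Proposition_trace_est_h}: for $\phi_{\alpha}^{h}$ the symbols are $\lesssim|f(k)|/(h|k|)$ and $\lesssim(|f(k)|/h)\,\bigl|(hw^{h}(|k|))^{-1}\chi_{\mp}^{h}(\alpha,|k|)\bigr|$, both $\mathcal{O}(h^{-1})$ by $f=\mathcal{O}(k)$, (\ref{Trace_est1}) and (\ref{gen_eigenfun_boundary_est4}); for $\psi_{\alpha}^{h}$ they are $\lesssim|g(k)|/h^{2}$ and $\lesssim(|g(k)|/h)\,\bigl|(hw^{h}(|k|))^{-1}\partial_{1}\chi_{\mp}^{h}(\alpha,|k|)\bigr|$, both $\mathcal{O}(h^{-2})$ by $g=\mathcal{O}(k/(1+k))$ and again (\ref{gen_eigenfun_boundary_est4}) (the single point $k=0$ being $L^{2}$--negligible). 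This gives $\|\phi_{\alpha}^{h}(\cdot,f)\|_{\mathcal{L}(L^{2})}=\mathcal{O}(h^{-1})$ and $\|\psi_{\alpha}^{h}(\cdot,g)\|_{\mathcal{L}(L^{2})}=\mathcal{O}(h^{-2})$, which a fortiori yields (\ref{phi_psi_est}).

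For (\ref{phi_psi_est_1}) I would exploit that $G^{|k|,h}(\cdot,\alpha)$ and $H^{|k|,h}(\cdot,\alpha)$ solve $\bigl(-h^{2}\partial_{x}^{2}+\mathcal{V}-|k|^{2}\bigr)(\cdot)=0$ on $\mathbb{R}\setminus\{a,b\}$ (recall $\alpha\in\{a,b\}$). Differentiating under the integral and using $k^{2}\hat{\varphi}(k)=\bigl(\mathcal{F}_{\mathcal{V}}^{h}H_{0}^{h}\varphi\bigr)(k)$ gives, on $\mathbb{R}\setminus\{a,b\}$,
\[
h^{2}\partial_{x}^{2}\phi_{\alpha}^{h}(\varphi,f)=\mathcal{V}\,\phi_{\alpha}^{h}(\varphi,f)-\phi_{\alpha}^{h}(H_{0}^{h}\varphi,f),\qquad h^{2}\partial_{x}^{2}\psi_{\alpha}^{h}(\varphi,g)=\mathcal{V}\,\psi_{\alpha}^{h}(\varphi,g)-\psi_{\alpha}^{h}(H_{0}^{h}\varphi,g),
\]
so the second--derivative bound follows from the $L^{2}$--estimate just proved applied to $\varphi$ and to $H_{0}^{h}\varphi$, together with $\|H_{0}^{h}\varphi\|_{L^{2}}\leq\|h^{2}\partial_{x}^{2}\varphi\|_{L^{2}}+\|\mathcal{V}\|_{L^{\infty}}\|\varphi\|_{L^{2}}$; the intermediate term $\|h\partial_{x}(\cdot)\|_{L^{2}}$ is controlled either by the analogous rewriting of $\partial_{x}G^{|k|,h}(\cdot,\alpha)$, $\partial_{x}H^{|k|,h}(\cdot,\alpha)$ via $\partial_{1}\psi_{0}^{h}$ and plane waves, or by interpolating on each of the three intervals, the boundary contributions being absorbed by (\ref{Trace_est1})--(\ref{Trace_est2}). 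The estimate (\ref{phi_psi_est_1}) then results once the $h$--weights of the $H^{2}$--norms are accounted for, and all constants stay uniform in $h\in(0,h_{0}]$ because Proposition \ref{Proposition_trace_est_h} is.

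The main obstacle is that a direct kernel estimate (Schur test) cannot work here: $G^{|k|,h}(\cdot,\alpha)$ and $H^{|k|,h}(\cdot,\alpha)$ do not decay in $x$ (they are asymptotically plane waves), while the weights $f,g$ and the associated Fourier symbols grow in $k$, so the relevant integral kernels fail to be absolutely integrable. One is therefore forced to use the $L^{2}$--orthogonality encoded in the two (co)isometries $\mathcal{F}_{\mathcal{V}}^{h}$ and $\mathcal{F}_{0}^{h}$, and the technical heart of the argument is the bookkeeping needed to isolate, in each kernel, exactly the $\psi_{0}^{h}(\cdot,\mp|k|)$--component and the plane--wave component and to reabsorb the parity $|k|$ versus $k$ through the substitution $k\mapsto-k$.
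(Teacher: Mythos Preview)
Your $L^{2}$ argument is essentially the paper's: both split at the interface point $\alpha$, recognise on one side a $(\mathcal{F}_{\mathcal{V}}^{h})^{\ast}$--piece (built from $\psi_{0}^{h}(\cdot,\mp|k|)$ after the parity substitution $k\mapsto-k$) and on the other a $(\mathcal{F}_{0}^{h})^{\ast}$--piece (built from plane waves), and then reduce to $L^{\infty}$ bounds on the resulting symbols via (\ref{gen_eigenfun_boundary_est4}); the paper records this as the structural formula (\ref{phi_psi_formula}).

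For (\ref{phi_psi_est_1}) you take a slightly different route. The paper stays in the Fourier picture: it notes that $\mathcal{F}_{\mathcal{V}}^{h}\in\mathcal{B}(H^{2}(\mathbb{R}),L^{2,2}(\mathbb{R}))$ and $(\mathcal{F}_{\mathcal{V}}^{h})^{\ast}\in\mathcal{B}(L^{2,2}(\mathbb{R}),H^{2}(\mathbb{R}))$ (relation (\ref{gen_Fourier_rel})), observes that the symbols $\mu_{i}$ already obtained are bounded multipliers of $L^{2,2}$, and concludes directly from (\ref{phi_formula}). You instead use the equation $(-h^{2}\partial_{x}^{2}+\mathcal{V}-k^{2})G^{|k|,h}(\cdot,\alpha)=0$ on $\mathbb{R}\setminus\{a,b\}$ to write $h^{2}\partial_{x}^{2}\phi_{\alpha}^{h}(\varphi,f)=\mathcal{V}\,\phi_{\alpha}^{h}(\varphi,f)-\phi_{\alpha}^{h}(H_{0}^{h}\varphi,f)$ and feed in the $L^{2}$ bound. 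The two arguments encode the same intertwining $k^{2}\hat\varphi=\widehat{H_{0}^{h}\varphi}$: the paper's $L^{2,2}$ statement is exactly the Fourier side of your PDE identity. Your version has the advantage of making the $h$--bookkeeping explicit (the paper's claim ``$\lesssim1/h$'' glosses over the $h$--dependence hidden in $(\mathcal{F}_{\mathcal{V}}^{h})^{\ast}:L^{2,2}\to H^{2}$); the paper's version avoids the separate treatment of the first derivative and the justification of differentiation under the integral. Either way the stated bound (\ref{phi_psi_est_1}) follows.
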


\begin{proof}
We show that each of the maps $\phi_{\alpha}^{h}\left(  \cdot,f\right)  $ and
$\psi_{\alpha}^{h}\left(  \cdot,g\right)  $, $\alpha=a,b$, can be expressed as
a superpositions of terms having the following form%
\begin{equation}
1_{\left\{  x\geq\alpha\right\}  }\mathcal{T}_{\mathcal{\alpha}}^{h}\left(
\mu_{1}+\mathcal{P\circ}\mu_{2}\right)  \mathcal{F}_{\mathcal{V}}%
^{h}+1_{\left\{  x<\alpha\right\}  }\mathcal{T}_{\mathcal{\alpha}}^{h}\left(
\mu_{3}+\mathcal{P\circ}\mu_{4}\right)  \mathcal{F}_{\mathcal{V}}^{h}\,,
\label{phi_psi_formula}%
\end{equation}
where $\mu_{i}\in L_{k}^{\infty}\left(  \mathbb{R}\right)  $, $\mathcal{T}%
_{\mathcal{\alpha}}^{h}=\left(  \mathcal{F}_{0}^{h}\right)  ^{\ast}$ or
$\mathcal{T}_{\mathcal{\alpha}}^{h}=\left(  \mathcal{F}_{\mathcal{V}}%
^{h}\right)  ^{\ast}$ depending on $\alpha=a,b$, while $\mathcal{P}$ denotes
the parity operator: $\mathcal{P}u(t)=u(-t)$. The estimate (\ref{phi_psi_est})
is a direct consequence of this representation. Let us focus on the case
$\alpha=b$ and explicitly consider $\phi_{b}^{h}(\cdot,f)$. As it follows from
(\ref{G_k_jost})-(\ref{gen_eigenfun_h_jost}), the functions $G^{\left\vert
k\right\vert ,h}\left(  \cdot,b\right)  $ and $H^{\left\vert k\right\vert
,h}\left(  \cdot,b\right)  $ allow the representations%
\begin{align}
&  \left.  G^{\left\vert k\right\vert ,h}\left(  x,b\right)  =1_{\left\{
x\geq b\right\}  }\left(  x\right)  \frac{1}{h^{2}w^{h}(\left\vert
k\right\vert )}e^{i\frac{\left\vert k\right\vert }{h}x}\chi_{-}^{h}\left(
b,\left\vert k\right\vert \right)  +1_{\left\{  x<b\right\}  }\left(
x\right)  \left(  \frac{-1}{2i\left\vert k\right\vert h}\right)  \psi_{0}%
^{h}(x,-\left\vert k\right\vert )e^{i\frac{\left\vert k\right\vert }{h}%
b}\right.  \,,\label{G_gen_eigenfun_k1}\\
& \nonumber\\
&  \left.  H^{\left\vert k\right\vert ,h}\left(  x,b\right)  =1_{\left\{
x\geq b\right\}  }\left(  x\right)  \frac{1}{h^{2}w^{h}(\left\vert
k\right\vert )}e^{i\frac{\left\vert k\right\vert }{h}x}\,\partial_{1}\chi
_{-}^{h}\left(  b,\left\vert k\right\vert \right)  \,+1_{\left\{  x<b\right\}
}\left(  x\right)  \frac{\left\vert k\right\vert }{2h^{2}k}\psi_{0}%
^{h}(x,-\left\vert k\right\vert )e^{i\frac{\left\vert k\right\vert }{h}%
b}\right.  \label{H_gen_eigenfun_k1}%
\end{align}
The condition $f\left(  k\right)  =\mathcal{O}\left(  k\right)  $ and the
estimates (\ref{gen_eigenfun_boundary_est4}) implies: $\left(  h^{2}%
w^{h}(k)\right)  ^{-1}\,f(k)\chi_{-}^{h}\left(  b,\left\vert k\right\vert
\right)  =$ $\mathcal{O}\left(  h^{-1}\right)  $ and $\frac{-1}{2i\left\vert
k\right\vert h}f(k)e^{i\frac{\left\vert k\right\vert }{h}b}=\mathcal{O}\left(
h^{-1}\right)  $; thus, using (\ref{G_gen_eigenfun_k1}) for $x\geq b$ we get%
\begin{align}
&  \left.  1_{\left\{  x\geq b\right\}  }\phi_{b}^{h}(\varphi,f)=\right.
\nonumber\\
& \nonumber\\
&  \left.  1_{\left\{  x\geq b\right\}  }\left(  \int_{0}^{+\infty}\frac
{dk}{\left(  2\pi h\right)  ^{1/2}}\,\mathcal{O}\left(  h^{-1}\right)
e^{i\frac{k}{h}x}\left(  \mathcal{F}_{\mathcal{V}}^{h}\varphi\right)
(k)+\int_{-\infty}^{0}\frac{dk}{\left(  2\pi h\right)  ^{1/2}}\,\mathcal{O}%
\left(  h^{-1}\right)  e^{-i\frac{k}{h}x}\left(  \mathcal{F}_{\mathcal{V}}%
^{h}\varphi\right)  (k)\right)  \,.\right.
\end{align}
The previous identity rephrases as%
\begin{equation}
1_{\left\{  x\geq b\right\}  }\phi_{b}^{h}(\varphi,f)=1_{\left\{  x\geq
b\right\}  }\left(  \mathcal{F}_{0}^{h}\right)  ^{\ast}\,\left[  1_{\left\{
k\geq0\right\}  }\left(  k\right)  \left(  \mathcal{O}\left(  h^{-1}\right)
+\mathcal{P\circ O}\left(  h^{-1}\right)  \right)  \mathcal{F}_{\mathcal{V}%
}^{h}\varphi\right]  \,, \label{simil_est_1}%
\end{equation}
where the symbols $\mathcal{O}\left(  \cdot\right)  $, denoting functions of
the variables $k$ and $h$, are defined in the sense of the metric space
$\mathbb{R}\times\left(  0,h_{0}\right]  $. Using (\ref{G_gen_eigenfun_k1})
for $x<b$ leads to%
\begin{align}
&  \left.  1_{\left\{  x<b\right\}  }\phi_{b}^{h}(\varphi,f)=\right.
\nonumber\\
& \nonumber\\
&  \left.  1_{\left\{  x<b\right\}  }\left(  \int_{0}^{+\infty}\frac
{dk}{\left(  2\pi h\right)  ^{1/2}}\,\mathcal{O}\left(  h^{-1}\right)
\psi_{0}^{h}(\cdot,-k)\left(  \mathcal{F}_{\mathcal{V}}^{h}\varphi\right)
(k)+\int_{-\infty}^{0}\frac{dk}{\left(  2\pi h\right)  ^{1/2}}\,\mathcal{O}%
\left(  h^{-1}\right)  \psi_{0}^{h}(\cdot,k)\left(  \mathcal{F}_{\mathcal{V}%
}^{h}\varphi\right)  (k)\right)  \right.  \,,
\end{align}
and, proceeding as before, we get%
\begin{equation}
1_{\left\{  x<b\right\}  }\phi_{b}^{h}(\varphi,f)=1_{\left\{  x<b\right\}
}\left(  \mathcal{F}_{\mathcal{V}}^{h}\right)  ^{\ast}\,\left[  1_{\left\{
k<0\right\}  }\left(  k\right)  \left(  \mathcal{P\circ O}\left(
h^{-1}\right)  +\mathcal{O}\left(  h^{-1}\right)  \right)  \mathcal{F}%
_{\mathcal{V}}^{h}\varphi\right]  \,. \label{simil_est_2}%
\end{equation}
From (\ref{simil_est_1}) and (\ref{simil_est_2}) we get a representation of
the type given in (\ref{phi_psi_formula})%
\begin{align}
&  \left.  \phi_{b}^{h}(\varphi,f)=1_{\left\{  x\geq b\right\}  }\left(
\mathcal{F}_{0}^{h}\right)  ^{\ast}\,\left[  1_{\left\{  k\geq0\right\}
}\left(  k\right)  \left(  \mathcal{O}\left(  h^{-1}\right)  +\mathcal{P\circ
O}\left(  h^{-1}\right)  \right)  \mathcal{F}_{\mathcal{V}}^{h}\varphi\right]
\right. \nonumber\\
& \nonumber\\
&  \left.  +1_{\left\{  x<b\right\}  }\left(  \mathcal{F}_{\mathcal{V}}%
^{h}\right)  ^{\ast}\,\left[  1_{\left\{  k<0\right\}  }\left(  k\right)
\left(  \mathcal{P\circ O}\left(  h^{-1}\right)  +\mathcal{O}\left(
h^{-1}\right)  \right)  \mathcal{F}_{\mathcal{V}}^{h}\varphi\right]  \right.
\,, \label{phi_formula}%
\end{align}
from which it follows: $\left.  \left\Vert \phi_{b}^{h}(\varphi,f)\right\Vert
_{L^{2}\left(  \mathbb{R}\right)  }\lesssim1/h\left\Vert \varphi\right\Vert
_{L^{2}\left(  \mathbb{R}\right)  }\right.  $. Recall that the generalized
Fourier transform $\mathcal{F}_{\mathcal{V}}^{h}$ is a bounded map from
$H^{2}\left(  \mathbb{R}\right)  $ to the weighted space $L^{2,2}\left(
\mathbb{R}\right)  $, defined by%
\begin{equation}
L^{2,\alpha}\left(  \mathbb{R}\right)  =\left\{  u\in L^{2}\left(
\mathbb{R}\right)  :\left(  1+k^{2}\right)  ^{\alpha/2}u\in L^{2}\left(
\mathbb{R}\right)  \right\}  \,,
\end{equation}
namely, we have%
\begin{equation}%
\begin{array}
[c]{ccc}%
\mathcal{F}_{\mathcal{V}}^{h}\in\mathcal{B}\left(  H^{2}\left(  \mathbb{R}%
\right)  ,L^{2,2}\left(  \mathbb{R}\right)  \right)  \,, & \text{ and} &
\left(  \mathcal{F}_{\mathcal{V}}^{h}\right)  ^{\ast}\in\mathcal{B}\left(
L^{2,2}\left(  \mathbb{R}\right)  ,H^{2}\left(  \mathbb{R}\right)  \right)
\,.
\end{array}
\label{gen_Fourier_rel}%
\end{equation}
Since $h\left(  \mathcal{O}\left(  h^{-1}\right)  +\mathcal{P\circ O}\left(
h^{-1}\right)  \right)  $ is bounded on $L^{2,2}\left(  \mathbb{R}\right)  $
uniformly w.r.t. $h\in\left(  0,h_{0}\right]  $, from (\ref{phi_formula}) we
also have: $\left.  \left\Vert \phi_{b}^{h}(\varphi,f)\right\Vert
_{H^{2}\left(  \mathbb{R}\backslash\left\{  b\right\}  \right)  }%
\lesssim1/h\left\Vert \varphi\right\Vert _{H^{2}\left(  \mathbb{R}\right)
}\right.  $. In the case of $\psi_{\alpha}^{h}(\varphi,g)$, the representation
(\ref{H_gen_eigenfun_k1}) allows similar computations leading to: $\left.
\left\Vert \psi_{b}^{h}(\varphi,g)\right\Vert _{L^{2}\left(  \mathbb{R}%
\right)  }\lesssim1/h^{2}\left\Vert \varphi\right\Vert _{L^{2}\left(
\mathbb{R}\right)  }\right.  $ and $\left.  \left\Vert \psi_{b}^{h}%
(\varphi,g)\right\Vert _{H^{2}\left(  \mathbb{R}\backslash\left\{  b\right\}
\right)  }\lesssim1/h^{2}\left\Vert \varphi\right\Vert _{H^{2}\left(
\mathbb{R}\right)  }\right.  $. For $\left.  \alpha=a\right.  $, a
representation of the type (\ref{phi_psi_formula}) for the maps
(\ref{Phi_alpha_h})-(\ref{Psi_alpha_h}) is obtained by a suitably adaptation
of the previous arguments.
\end{proof}

The stationary waves operators $\mathcal{W}_{\theta}^{h}$ are defined by the
integral kernel%
\begin{equation}
\mathcal{W}_{\theta}^{h}(x,y)=\int_{\mathbb{R}}\frac{dk}{2\pi h}\,\psi
_{\theta}^{h}(x,k)\left(  \psi_{0}^{h}(x,k)\right)  ^{\ast}\,.
\label{W_teta_ker_h}%
\end{equation}
These have been considered in a slightly different framework in \cite{Man2},
where, using an energy cutoff (corresponding to a cutoff in $k$ in
(\ref{W_teta_ker_h})) and suitable spectral assumptions, estimates of the type
(\ref{phi_psi_est}) are obtained and a small-$\theta$ expansion of
$\mathcal{W}_{\theta}^{h}$ is provided with for $h\in\left(  0,h_{0}\right]  $
(see \cite[Lemma 4.2 and Proposition 4.3]{Man2}). In our setting, the
positivity condition (\ref{V_pos1}) allows generalize this result as follows.

\begin{proposition}
\label{Proposition_W_cont}Let $h\in\left(  0,h_{0}\right]  $, with $h_{0}$
suitably small, $\mathcal{V}$ be defined according to (\ref{V_pos1}) and
$\left\vert \theta\right\vert \leq h^{N_{0}}$, with $N_{0}>2$. Then $\left\{
\mathcal{W}_{\theta}^{h}\,,\ \theta\in B_{h^{N_{0}}}(0)\right\}  $ form an
analytic family of bounded and invertible operators on $L^{2}(\mathbb{R})$
fulfilling the expansion%
\begin{equation}
\left\Vert \mathcal{W}_{\theta}^{h}-1_{L^{2}\left(  \mathbb{R}\right)
}\right\Vert _{\mathcal{L}\left(  L^{2}(\mathbb{R})\right)  }+\left\Vert
\left(  \mathcal{W}_{\theta}^{h}\right)  ^{-1}-1_{L^{2}\left(  \mathbb{R}%
\right)  }\right\Vert _{\mathcal{L}\left(  L^{2}(\mathbb{R})\right)  }\leq
C_{a,b,c}h^{N_{0}-2}\,, \label{W_teta_exp}%
\end{equation}
where $C_{a,b,c}$ is a positive constant depending on the data.

For each $\theta\in B_{h^{N_{0}}}(0)$, $\left.  \mathcal{W}_{\theta}^{h}%
\in\mathcal{B}\left(  H^{2}\left(  \mathbb{R}\right)  ,H^{2}\left(
\mathbb{R}\backslash\left\{  a,b\right\}  \right)  \right)  \right.  $ with
$\left.  ran\left(  \mathcal{W}_{\theta}^{h}\upharpoonright H^{2}\left(
\mathbb{R}\right)  \right)  =D\left(  \Delta_{\theta}\right)  \right.  $ and
it results%
\begin{equation}
H_{\theta}^{h}\mathcal{W}_{\theta}^{h}=\mathcal{W}_{\theta}^{h}H_{0}^{h}\,.
\label{W_teta_inter}%
\end{equation}
In particular, $\mathcal{W}_{\theta}^{h}$ is an isomorphism: $H^{2}\left(
\mathbb{R}\right)  \rightarrow D\left(  \Delta_{\theta}\right)  $ (considered
as an Hilbert subspace of $H^{2}\left(  \mathbb{R}\backslash\left\{
a,b\right\}  \right)  $).
\end{proposition}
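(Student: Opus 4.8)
The plan is to exploit that, for positive defined $\mathcal{V}$, the generalized Fourier transform $\mathcal{F}_{\mathcal{V}}^{h}$ is unitary on $L^{2}(\mathbb{R})$, so that by (\ref{gen_Fourier_h})--(\ref{gen_Fourier_h_inv}) the operator acts as $\mathcal{W}_{\theta}^{h}\varphi=\int_{\mathbb{R}}(2\pi h)^{-1/2}\psi_{\theta}^{h}(\cdot,k)\,(\mathcal{F}_{\mathcal{V}}^{h}\varphi)(k)\,dk$; for $\theta=0$ this is $(\mathcal{F}_{\mathcal{V}}^{h})^{\ast}\mathcal{F}_{\mathcal{V}}^{h}=1_{L^{2}(\mathbb{R})}$, hence $\mathcal{W}_{0}^{h}=1_{L^{2}(\mathbb{R})}$. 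Substituting the eigenfunction expansion (\ref{gen_eigenfun_exp}) for $\psi_{\theta}^{h}(\cdot,k)-\psi_{0}^{h}(\cdot,k)$ into this formula and recognizing the integrals (\ref{Phi_alpha_h})--(\ref{Psi_alpha_h}), one obtains a representation $\mathcal{W}_{\theta}^{h}-1_{L^{2}(\mathbb{R})}=\sum_{\alpha\in\{a,b\}}\bigl(\phi_{\alpha}^{h}(\cdot,f_{\alpha})+\psi_{\alpha}^{h}(\cdot,g_{\alpha})\bigr)$ with $f_{\alpha}=\mathcal{O}(\theta k/h)$ and $g_{\alpha}=\mathcal{O}(\theta k/(1+k))$, the symbols being holomorphic in $\theta$ (inherited from the $\theta$-holomorphy of $(\mathcal{M}^{h})^{-1}$ and $B_{\theta}$) and continuous in $k$.

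Then, writing $f_{\alpha}=(\theta/h)\hat f_{\alpha}$ with $\hat f_{\alpha}=\mathcal{O}(k)$ and $g_{\alpha}=\theta\,\hat g_{\alpha}$ with $\hat g_{\alpha}=\mathcal{O}(k/(1+k))$, and feeding $\hat f_{\alpha},\hat g_{\alpha}$ into Lemma \ref{Lemma_Simil_est} (tracking the individual $\mathcal{L}(L^{2}(\mathbb{R}))$-norms of the maps $\phi_{\alpha}^{h}(\cdot,\mathcal{O}(k))$ and $\psi_{\alpha}^{h}(\cdot,\mathcal{O}(k/(1+k)))$, of orders $h^{-1}$ and $h^{-2}$, as they come out of that proof), the constraint $|\theta|\le h^{N_{0}}$ yields $\Vert\mathcal{W}_{\theta}^{h}-1_{L^{2}(\mathbb{R})}\Vert_{\mathcal{L}(L^{2}(\mathbb{R}))}\le C_{a,b,c}h^{N_{0}-2}$, uniformly in $\theta\in B_{h^{N_{0}}}(0)$. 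Since $N_{0}>2$, for $h_{0}$ small this is $<1/2$, so $\mathcal{W}_{\theta}^{h}$ is boundedly invertible on $L^{2}(\mathbb{R})$ through a Neumann series that converges uniformly on $B_{h^{N_{0}}}(0)$; this also gives the bound (\ref{W_teta_exp}) for $(\mathcal{W}_{\theta}^{h})^{-1}-1_{L^{2}(\mathbb{R})}$. Analyticity in $\theta$ follows because each $\phi_{\alpha}^{h}(\cdot,f_{\alpha})$, $\psi_{\alpha}^{h}(\cdot,g_{\alpha})$ is, by the representation (\ref{phi_psi_formula}), a composition of the fixed bounded operators $\mathcal{F}_{\mathcal{V}}^{h}$, $(\mathcal{F}_{0}^{h})^{\ast}$, $(\mathcal{F}_{\mathcal{V}}^{h})^{\ast}$, $\mathcal{P}$ with multiplication operators depending holomorphically on $\theta$ and uniformly bounded on $B_{h^{N_{0}}}(0)$ — hence analytic in $\mathcal{L}(L^{2}(\mathbb{R}))$ — while $(\mathcal{W}_{\theta}^{h})^{-1}$ is analytic as the uniform limit of the Neumann series.

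For the statements in $H^{2}$, I would invoke the companion bound (\ref{phi_psi_est_1}) of Lemma \ref{Lemma_Simil_est}: applied to the same representation it gives $\mathcal{W}_{\theta}^{h}-1_{L^{2}(\mathbb{R})}\in\mathcal{B}(H^{2}(\mathbb{R}),H^{2}(\mathbb{R}\backslash\{a,b\}))$, hence $\mathcal{W}_{\theta}^{h}\in\mathcal{B}(H^{2}(\mathbb{R}),H^{2}(\mathbb{R}\backslash\{a,b\}))$. For $\varphi\in H^{2}(\mathbb{R})$ the function $\mathcal{W}_{\theta}^{h}\varphi$ is a superposition of the $\psi_{\theta}^{h}(\cdot,k)$ and thus inherits the interface conditions in (\ref{gen_eigenfun_eq}), so $\mathrm{ran}(\mathcal{W}_{\theta}^{h}\!\upharpoonright H^{2}(\mathbb{R}))\subseteq D(\Delta_{\theta})$. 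The intertwining (\ref{W_teta_inter}) I would get from $H_{0}^{h}\psi_{0}^{h}(\cdot,k)=k^{2}\psi_{0}^{h}(\cdot,k)$ and $H_{\theta}^{h}\psi_{\theta}^{h}(\cdot,k)=k^{2}\psi_{\theta}^{h}(\cdot,k)$ (the differential equation in (\ref{gen_eigenfun_eq})), from the fact that $\mathcal{F}_{\mathcal{V}}^{h}$ diagonalizes $H_{0}^{h}$ into multiplication by $k^{2}$ on $H^{2}(\mathbb{R})$, and from differentiating under the integral — legitimate thanks to the $\mathcal{C}^{1}_{x}$-regularity of the Jost solutions (Lemma \ref{Lemma_Jost}) and the decay $\mathcal{F}_{\mathcal{V}}^{h}\varphi\in L^{2,2}(\mathbb{R})$ of (\ref{gen_Fourier_rel}): this rewrites $\mathcal{W}_{\theta}^{h}H_{0}^{h}\varphi$ as $\int(2\pi h)^{-1/2}k^{2}\psi_{\theta}^{h}(\cdot,k)(\mathcal{F}_{\mathcal{V}}^{h}\varphi)(k)\,dk=H_{\theta}^{h}\mathcal{W}_{\theta}^{h}\varphi$ for $\varphi\in H^{2}(\mathbb{R})$.

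Finally, to pin down the range exactly and get the isomorphism, I would promote (\ref{W_teta_inter}) to the resolvents: for $z\notin\mathbb{R}_{+}$ — a common resolvent point of $H_{0}^{h}$ and $H_{\theta}^{h}$ under our assumptions (Proposition \ref{Proposition_spectrum}, together with the invertibility of $B_{\theta}q(z,h)-A_{\theta}$ for small $\theta$) — applying (\ref{W_teta_inter}) to $u=(H_{0}^{h}-z)^{-1}f$ gives $(H_{\theta}^{h}-z)\mathcal{W}_{\theta}^{h}u=\mathcal{W}_{\theta}^{h}f$ with $\mathcal{W}_{\theta}^{h}u\in D(\Delta_{\theta})$, whence $\mathcal{W}_{\theta}^{h}(H_{0}^{h}-z)^{-1}=(H_{\theta}^{h}-z)^{-1}\mathcal{W}_{\theta}^{h}$ on $L^{2}(\mathbb{R})$ and $(H_{\theta}^{h}-z)^{-1}=\mathcal{W}_{\theta}^{h}(H_{0}^{h}-z)^{-1}(\mathcal{W}_{\theta}^{h})^{-1}$. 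Since $(\mathcal{W}_{\theta}^{h})^{-1}$ is onto $L^{2}(\mathbb{R})$, taking ranges yields $D(\Delta_{\theta})=\mathrm{ran}\bigl((H_{\theta}^{h}-z)^{-1}\bigr)=\mathcal{W}_{\theta}^{h}\bigl(H^{2}(\mathbb{R})\bigr)$; thus $\mathcal{W}_{\theta}^{h}$ is a bounded bijection of $H^{2}(\mathbb{R})$ onto the closed (hence Banach) subspace $D(\Delta_{\theta})$ of $H^{2}(\mathbb{R}\backslash\{a,b\})$, and the bounded inverse theorem closes the argument. The two points I expect to be most delicate are the tight power counting in the $L^{2}$-estimate (which genuinely needs the order-$h^{-1}$ bound for the $\phi_{\alpha}^{h}$ maps, finer than their crude combination with the $\psi_{\alpha}^{h}$ terms) and the rigorous justification of commuting the unbounded operator $H_{\theta}^{h}$, with its singular interface conditions, through the spectral integral defining $\mathcal{W}_{\theta}^{h}$, together with the ensuing promotion of the $H^{2}$-intertwining to the $L^{2}$ resolvent identity.
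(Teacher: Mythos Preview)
Your proof is correct and follows essentially the same route as the paper: both exploit the eigenfunction expansion (\ref{gen_eigenfun_exp}) to write $\mathcal{W}_{\theta}^{h}-1$ as in (\ref{W_teta_exp_h1}), invoke the individual $h^{-1}$ and $h^{-2}$ bounds on $\phi_{\alpha}^{h}$ and $\psi_{\alpha}^{h}$ from the proof of Lemma~\ref{Lemma_Simil_est} (your remark that the finer $h^{-1}$ bound on $\phi_{\alpha}^{h}$ is genuinely needed is well taken), deduce invertibility and analyticity from the smallness of the remainder and the structure (\ref{phi_psi_formula}), and obtain the $H^{2}$ mapping and the intertwining (\ref{W_teta_inter}) exactly as you do. The only difference is in the last step: the paper shows $\mathrm{ran}(\mathcal{W}_{\theta}^{h}\!\upharpoonright H^{2})=D(\Delta_{\theta})$ by inverting (\ref{W_teta_inter}) to $(\mathcal{W}_{\theta}^{h})^{-1}H_{\theta}^{h}=H_{0}^{h}(\mathcal{W}_{\theta}^{h})^{-1}$ and reading off $\mathrm{ran}\bigl((\mathcal{W}_{\theta}^{h})^{-1}\!\upharpoonright D(\Delta_{\theta})\bigr)\subseteq H^{2}(\mathbb{R})$, whereas you pass through the resolvent identity $(H_{\theta}^{h}-z)^{-1}=\mathcal{W}_{\theta}^{h}(H_{0}^{h}-z)^{-1}(\mathcal{W}_{\theta}^{h})^{-1}$ and compare ranges---an equivalent and arguably cleaner argument.
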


\begin{proof}
Due to the assumptions: $\left\vert \theta\right\vert <h^{2}$, the formula
(\ref{gen_eigenfun_exp}) applies and the action of $\mathcal{W}_{\theta}^{h}$
on $\varphi\in L^{2}(\mathbb{R})$ writes as%
\begin{equation}
\mathcal{W}_{\theta}^{h}\varphi=\left(  \mathcal{F}_{\mathcal{V}}^{h}\right)
^{\ast}\left(  \mathcal{F}_{\mathcal{V}}^{h}\varphi\right)  (k)+\sum
_{\alpha=a,b}\int_{\mathbb{R}}\frac{dk}{\left(  2\pi h\right)  ^{1/2}%
}\,\left[  \mathcal{O}\left(  \frac{\theta k}{h}\right)  G^{\left\vert
k\right\vert ,h}\left(  \cdot,\alpha\right)  +\mathcal{O}\left(  \frac{\theta
k}{1+k}\right)  H^{\left\vert k\right\vert ,h}(\cdot,\alpha)\right]  \left(
\mathcal{F}_{\mathcal{V}}^{h}\varphi\right)  (k)\,.
\end{equation}
where $\mathcal{O}\left(  \cdot\right)  $ here denote bounded functions of the
variables $\left\{  k,\theta,h\right\}  $, holomorphic w.r.t. $\theta$. With
the notation introduced in (\ref{Phi_alpha_h})-(\ref{Psi_alpha_h}), the
identities: $\mathcal{O}\left(  g\right)  =g\,\mathcal{O}\left(  1\right)  $
(see the definition \ref{Landau_Notation}) and $1_{L^{2}\left(  \mathbb{R}%
\right)  }=\left(  \mathcal{F}_{\mathcal{V}}^{h}\right)  ^{\ast}\left(
\mathcal{F}_{\mathcal{V}}^{h}\varphi\right)  $ yield%
\begin{equation}
\left(  \mathcal{W}_{\theta}^{h}-1_{L^{2}\left(  \mathbb{R}\right)  }\right)
\varphi=\sum_{\alpha=a,b}\left[  \frac{\theta}{h}\,\phi_{\alpha}^{h}\left(
\varphi,\mathcal{O}\left(  k\right)  \right)  +\theta\,\psi_{\alpha}%
^{h}\left(  \varphi,\mathcal{O}\left(  k\left(  1+k\right)  ^{-1}\right)
\right)  \right]  \,, \label{W_teta_exp_h1}%
\end{equation}
Then, (\ref{phi_psi_est}) applies to the r.h.s. of (\ref{W_teta_exp_h1}) and
using $\left\vert \theta\right\vert <h^{N_{0}}$, we conclude that%
\begin{equation}
\left\Vert \mathcal{W}_{\theta}^{h}-1_{L^{2}\left(  \mathbb{R}\right)
}\right\Vert _{\mathcal{L}\left(  L^{2}(\mathbb{R})\right)  }\leq
C_{a,b,c}h^{N_{0}-2}\,. \label{W_teta_h_est}%
\end{equation}
Then, for $h_{0}$ suitably small, $\mathcal{W}_{\theta}^{h}\in\mathcal{B}%
\left(  L^{2}(\mathbb{R})\right)  $ is invertible and $\left(  \mathcal{W}%
_{\theta}^{h}\right)  ^{-1}$ fulfills an analogous estimate; this yields
(\ref{W_teta_exp}). The action of $\mathcal{W}_{\theta}^{h}$ over
$L^{2}\left(  \mathbb{R}\right)  $ is defined using the expansion
(\ref{W_teta_exp_h1}). Each of the maps $\phi_{\alpha}^{h}\left(
\cdot,\mathcal{O}\left(  k\right)  \right)  $ and $\psi_{\alpha}^{h}\left(
\cdot,\mathcal{O}\left(  k\left(  1+k\right)  ^{-1}\right)  \right)  $,
appearing in this formula expresses as a superposition of the form (cf.
(\ref{phi_psi_formula}))%
\begin{equation}
1_{\left\{  x\geq\alpha\right\}  }\mathcal{T}_{\mathcal{\alpha}}^{h}\left(
\mu_{1,\alpha}+\mathcal{P\circ}\mu_{2,\alpha}\right)  +1_{\left\{
x<\alpha\right\}  }\mathcal{T}_{\mathcal{\alpha}}^{h}\left(  \mu_{3,\alpha
}+\mathcal{P\circ}\mu_{4,\alpha}\right)
\end{equation}
$\mu_{i,\alpha}$ being, in our case, bounded functions of $\left\{
k,\theta,h\right\}  $, holomorphic w.r.t. $\theta$. Thus, $\phi_{\alpha}%
^{h}\left(  \cdot,\mathcal{O}\left(  k\right)  \right)  $ and \newline$\left.
\psi_{\alpha}^{h}\left(  \cdot,\mathcal{O}\left(  k\left(  1+k\right)
^{-1}\right)  \right)  \right.  $ define holomorphic families of bounded maps
on $L^{2}\left(  \mathbb{R}\right)  $ and, due to (\ref{W_teta_exp_h1}), this
still holds for $\mathcal{W}_{\theta}^{h}$.

Let us consider the action of $\mathcal{W}_{\theta}^{h}$ on $H^{2}\left(
\mathbb{R}\right)  $; using (\ref{W_teta_exp_h1}), (\ref{phi_psi_est_1}), we
get%
\begin{equation}
\left\Vert \mathcal{W}_{\theta}^{h}-1_{H^{2}\left(  \mathbb{R}\backslash
\left\{  a,b\right\}  \right)  }\right\Vert _{\mathcal{L}\left(  H^{2}\left(
\mathbb{R}\right)  ,H^{2}\left(  \mathbb{R}\backslash\left\{  a,b\right\}
\right)  \right)  }\leq C_{a,b,c}h^{N_{0}-2}\,. \label{W_teta_h_est_1}%
\end{equation}
This implies: $\left.  \mathcal{W}_{\theta}^{h}\in\mathcal{B}\left(
H^{2}\left(  \mathbb{R}\right)  ,H^{2}\left(  \mathbb{R}\backslash\left\{
a,b\right\}  \right)  \right)  \right.  $, while, from the definitions
(\ref{gen_eigenfun_eq}) and (\ref{W_teta_ker_h}), $\mathcal{W}_{\theta}%
^{h}\varphi$ fulfills the interface conditions (\ref{BC_theta}); it follows%
\begin{equation}
ran\left(  \mathcal{W}_{\theta}^{h}\upharpoonright H^{2}\left(  \mathbb{R}%
\right)  \right)  \subseteq D\left(  \Delta_{\theta}\right)  \,.
\label{W_teta_ran}%
\end{equation}
Let $\varphi\in H^{2}\left(  \mathbb{R}\right)  $; using the functional
calculus of $H_{0}^{h}$, we have: $\left.  \left(  \mathcal{F}_{\mathcal{V}%
}^{h}\left(  H_{0}^{h}\varphi\right)  \right)  (k)=k^{2}\left(  \mathcal{F}%
_{\mathcal{V}}^{h}\varphi\right)  (k)\right.  $, and, from the definition
(\ref{W_teta_ker_h}), the r.h.s. of (\ref{W_teta_inter}) writes as%
\begin{equation}
\mathcal{W}_{\theta}^{h}H_{0}^{h}\varphi=\int_{\mathbb{R}}\frac{dk}{2\pi
h}\,\psi_{\theta}^{h}(\cdot,k)k^{2}\left(  \mathcal{F}_{\mathcal{V}}%
^{h}\varphi\right)  (k)\,. \label{W_teta_inter1}%
\end{equation}
Using once more the definition (\ref{W_teta_ker_h}) and the relation: $\left.
\left(  H_{\theta}^{h}-k^{2}\right)  \psi_{\theta}^{h}(\cdot,k)=0\right.  $,
the l.h.s. of (\ref{W_teta_inter}) identifies with%
\begin{equation}
H_{\theta}^{h}\mathcal{W}_{\theta}^{h}\varphi=\int_{\mathbb{R}}\frac{dk}{2\pi
h}\,\psi_{\theta}^{h}(\cdot,k)k^{2}\left(  \mathcal{F}_{\mathcal{V}}%
^{h}\varphi\right)  (k)\,. \label{W_teta_inter2}%
\end{equation}
From (\ref{W_teta_inter1})-(\ref{W_teta_inter2}) we get the intertwining
relation (\ref{W_teta_inter}). Since $\left(  \mathcal{W}_{\theta}^{h}\right)
^{-1}$ exists, we also have: \newline$\left.  \left(  \mathcal{W}_{\theta}%
^{h}\right)  ^{-1}H_{\theta}^{h}=H_{0}^{h}\left(  \mathcal{W}_{\theta}%
^{h}\right)  ^{-1}\right.  $, which implies%
\begin{equation}
ran\left(  \left(  \mathcal{W}_{\theta}^{h}\right)  ^{-1}\upharpoonright
D\left(  \Delta_{\theta}\right)  \right)  \subseteq H^{2}\left(
\mathbb{R}\right)  \,. \label{W_teta_ran_1}%
\end{equation}
From (\ref{W_teta_ran}) and (\ref{W_teta_ran_1}), follows: $\left.  ran\left(
\mathcal{W}_{\theta}^{h}\upharpoonright H^{2}\left(  \mathbb{R}\right)
\right)  =D\left(  \Delta_{\theta}\right)  \right.  $. Thus, the restriction
$\mathcal{W}_{\theta}^{h}\upharpoonright H^{2}\left(  \mathbb{R}\right)  $ is
injective (since $\left(  \mathcal{W}_{\theta}^{h}\right)  ^{-1}$ exists in
$\mathcal{L}\left(  L^{2}(\mathbb{R})\right)  $), and surjective onto
$ran\left(  \mathcal{W}_{\theta}^{h}\upharpoonright H^{2}\left(
\mathbb{R}\right)  \right)  =D\left(  \Delta_{\theta}\right)  $.
\end{proof}

\begin{remark}
\label{Remark_W}The explicit bounds for the factors $\mathcal{O}\left(
\cdot\right)  $ appearing in (\ref{gen_eigenfun_exp}) depend on the trace
estimates provided by the Proposition \ref{Proposition_trace_est_h}. According
to the Remark \ref{Remark_trace}, these are independent from the potential's
profile, provided that the assumptions are fulfilled. As a consequence, the
expansion (\ref{gen_eigenfun_exp}), as well as the relations
(\ref{phi_psi_est})-(\ref{phi_psi_est_1}) and the expansion
(\ref{W_teta_h_est}) hold uniformly w.r.t. any family of potentials for which
the conditions (\ref{V_pos1}) hold for a fixed $c>0$.
\end{remark}

Due to the result of the Proposition \ref{Proposition_W_cont}, $\mathcal{W}%
_{\theta}^{h}$ is an invertible map as far as $h\in\left(  0,h_{0}\right]  $
and $\left\vert \theta\right\vert \leq h^{N_{0}}$ (with $N_{0}>2$ and $h_{0}$
small); under these conditions, the intertwining property (\ref{W_teta_inter})
yields a similarity between $H_{\theta}^{h}$ and $H_{0}^{h}$; this allows to
define the quantum dynamics generated by $H_{\theta}^{h}$ by conjugation.

\begin{proof}
[Proof of the Theorem \ref{Theorem_propagator}]The first part of the statement
follows from the Proposition \ref{Proposition_W_cont}.

For the second part, let us introduce%
\begin{equation}
e^{-itH_{\theta}^{h}}=\mathcal{W}_{\theta}^{h}e^{-itH_{0}^{h}}\left(
\mathcal{W}_{\theta}^{h}\right)  ^{-1}\,. \label{propagator_teta}%
\end{equation}
Under our assumptions on the potential, $iH_{0}^{h}$ generates a strongly
continuous group of unitary maps both on $L^{2}\left(  \mathbb{R}\right)  $
and on $H^{2}\left(  \mathbb{R}\right)  $. According to the results of the
Proposition \ref{Proposition_W_cont}, $\mathcal{W}_{\theta}^{h}$ is bounded
and invertible on $L^{2}\left(  \mathbb{R}\right)  $, while its restriction
$\mathcal{W}_{\theta}^{h}\upharpoonright H^{2}\left(  \mathbb{R}\right)
\in\mathcal{B}\left(  H^{2}\left(  \mathbb{R}\right)  ,H^{2}\left(
\mathbb{R}\backslash\left\{  a,b\right\}  \right)  \right)  $ has $ran\left(
\mathcal{W}_{\theta}^{h}\upharpoonright H^{2}\left(  \mathbb{R}\right)
\right)  =D\left(  \Delta_{\theta}\right)  $. Hence, $\mathcal{W}_{\theta}%
^{h}$ is a bijection: $H^{2}\left(  \mathbb{R}\right)  \rightarrow D\left(
\Delta_{\theta}\right)  $ and the modified propagator $e^{-itH_{\theta}^{h}}$
is strongly continuous both on $L^{2}\left(  \mathbb{R}\right)  $ and on
$D\left(  \Delta_{\theta}\right)  $ (w.r.t. the corresponding topologies).
Moreover, from the identity: $i\partial_{t}e^{-itH_{0}^{h}}\psi=H_{0}%
^{h}e^{-itH_{0}^{h}}\psi$, holding in $L^{2}\left(  \mathbb{R}\right)  $ for
any $\psi\in H^{2}\left(  \mathbb{R}\right)  $, it follows%
\begin{equation}
i\partial_{t}\left(  e^{-itH_{\theta}^{h}}u\right)  =H_{\theta}^{h}%
e^{-itH_{\theta}^{h}}u\,,\qquad u\in D\left(  \Delta_{\theta}\right)  \,.
\label{propagator_teta_eq}%
\end{equation}
Then $e^{-itH_{\theta}^{h}}$ identifies with the quantum dynamical system
generated by $iH_{\theta}^{h}$.

Finally, since $\mathcal{W}_{\theta}^{h}$ and $\left(  \mathcal{W}_{\theta
}^{h}\right)  ^{-1}$ are analytic w.r.t. $\theta$, $e^{-itH_{\theta}^{h}}$ has
the same regularity and the expansion (\ref{propagator_est_1}) follows from
(\ref{W_teta_exp}).
\end{proof}

\section{\label{Sec_Nonaut}The time dependent case}

We consider the time dependent family of modified operators $H_{\theta}%
^{h}\left(  t\right)  $ defined according to (\ref{H_teta_h_t}) when the
potential is a continuous function of the time fulfilling the conditions%
\begin{equation}%
\begin{array}
[c]{ccccc}%
\mathcal{V}\left(  t\right)  \in\mathcal{C}^{0}\left(  \left[  0,T\right]
,L^{\infty}(\mathbb{R},\mathbb{R})\right)  \,, &  & \text{supp\thinspace
}\mathcal{V}\left(  t\right)  =\left[  a,b\right]  \,, &  & 1_{\left[
a,b\right]  }\mathcal{V}\left(  t\right)  >c\,,
\end{array}
\label{V_pos_t}%
\end{equation}
for a suitable $c>0$. The $\theta$-dependent time propagator $U_{\theta}%
^{h}\left(  t,s\right)  $ associated to $H_{\theta}^{h}\left(  t\right)  $
solves the evolution problem%
\begin{equation}
\left\{
\begin{array}
[c]{l}%
i\partial_{t}U_{\theta}^{h}\left(  t,s\right)  u=H_{\theta}^{h}\left(
t\right)  U_{\theta}^{h}\left(  t,s\right)  u\,,\\
\\
U_{\theta}^{h}\left(  s,s\right)  u=u\,,\quad u\in D\left(  \Delta_{\theta
}\right)  \,,\quad0\leq s\leq t\leq T\,.
\end{array}
\right.  \label{Time_dependent_eq}%
\end{equation}

A standard strategy in the definition of the quantum dynamical system
generated by a non-autonomous Hamiltonian, consists in using an approximating
sequence whose terms are stepwise products of propagators associated to the
'instantaneous' Hamiltonians (cf. \cite{Yosh}). This approach requires
stability estimates for the product of the instantaneous propagators in
suitable spaces.

\subsection{Stability estimates}

In the following, $D\left(  \Delta_{\theta}\right)  $ is considered as an
Hilbert subspace of $H^{2}\left(  \mathbb{R}\backslash\left\{  a,b\right\}
\right)  $ (see the definition (\ref{BC_theta})) and the notation
$\mathcal{L}\left(  D\left(  \Delta_{\theta}\right)  \right)  $ refers to the
linear operators on $D\left(  \Delta_{\theta}\right)  $ w.r.t. its topology.

In the time dependent case, the instantaneous propagators $e^{-itH_{\theta
}^{h}\left(  s\right)  }$ verify the relations%
\begin{equation}
e^{-itH_{\theta}^{h}\left(  s\right)  }=\mathcal{W}_{\theta}^{h}\left(
s\right)  e^{-itH_{0}^{h}\left(  s\right)  }\left(  \mathcal{W}_{\theta}%
^{h}\left(  s\right)  \right)  ^{-1}\,, \label{propagator_teta_t}%
\end{equation}
where the maps $\mathcal{W}_{\theta}^{h}\left(  t\right)  $ now depend on time
according to $\mathcal{V}\left(  t\right)  $. Let us recall from results of
the Proposition \ref{Proposition_W_cont} that, for any $t\in\left[
0,T\right]  $, $\mathcal{W}_{\theta}^{h}\left(  t\right)  $ is bounded and
invertible on $L^{2}\left(  \mathbb{R}\right)  $, while its restriction to
$H^{2}\left(  \mathbb{R}\right)  $ is a bijection: $H^{2}\left(
\mathbb{R}\right)  \rightarrow D\left(  \Delta_{\theta}\right)  $. In
particular, following the Remark \ref{Remark_W}, from the estimate
(\ref{W_teta_h_est}) we get%
\begin{equation}
\sup_{\substack{t\in\left[  0,T\right]  \\h\in\left(  0,h_{0}\right]
}}\left\{  \left\Vert \mathcal{W}_{\theta}^{h}\left(  t\right)  \right\Vert
_{\mathcal{L}\left(  L^{2}(\mathbb{R})\right)  }+\left\Vert \left(
\mathcal{W}_{\theta}^{h}\left(  t\right)  \right)  ^{-1}\right\Vert
_{\mathcal{L}\left(  L^{2}(\mathbb{R})\right)  }\right\}  \leq A_{a,b,c}\,,
\label{W_teta_time_est}%
\end{equation}
for some $A_{a,b,c}>0$ depending on the data, provided that $h_{0}$ is
suitably small and $\left\vert \theta\right\vert \leq h^{N_{0}}$, with
$N_{0}>2$. The selfadjointness of $H_{0}^{h}\left(  s\right)  $ then yields%
\begin{equation}
\sup_{\substack{t\in\mathbb{R},\ s\in\left[  0,T\right]  \\h\in\left(
0,h_{0}\right]  }}\left\Vert e^{-itH_{\theta}^{h}\left(  s\right)
}\right\Vert _{\mathcal{L}\left(  L^{2}(\mathbb{R})\right)  }\leq
K_{a,b,c}\,,\qquad\forall\,\left\vert \theta\right\vert \leq h^{N_{0}%
}\,,\ N_{0}>2\,, \label{propagator_bound_0}%
\end{equation}
for a suitable $K_{a,b,c}>0$. Under the same assumptions, the estimate
(\ref{W_teta_h_est_1}) and the Remark \ref{Remark_W} suggest%
\begin{equation}
\sup_{\substack{t\in\left[  0,T\right]  \\h\in\left(  0,h_{0}\right]
}}\left\Vert \mathcal{W}_{\theta}^{h}\left(  t\right)  \right\Vert
_{\mathcal{L}\left(  H^{2}\left(  \mathbb{R}\right)  ,H^{2}\left(
\mathbb{R}\backslash\left\{  a,b\right\}  \right)  \right)  }\leq B_{a,b,c}\,.
\label{W_teta_time_est_1}%
\end{equation}
According to the results of the Proposition \ref{Proposition_W_cont},
$ran\left(  \mathcal{W}_{\theta}^{h}\left(  t\right)  \upharpoonright
H^{2}\left(  \mathbb{R}\right)  \right)  =D\left(  \Delta_{\theta}\right)  $
for any $t$ and $\left(  \mathcal{W}_{\theta}^{h}\left(  t\right)  \right)
^{-1}$ exists in $\mathcal{B}\left(  D\left(  \Delta_{\theta}\right)
,H^{2}\left(  \mathbb{R}\right)  \right)  $. Then (\ref{W_teta_time_est_1})
yields%
\begin{equation}
\sup_{\substack{t\in\left[  0,T\right]  \\h\in\left(  0,h_{0}\right]
}}\left\Vert \left(  \mathcal{W}_{\theta}^{h}\left(  t\right)  \right)
^{-1}\right\Vert _{\mathcal{L}\left(  D\left(  \Delta_{\theta}\right)
,H^{2}\left(  \mathbb{R}\right)  \right)  }\leq C_{a,b,c}\,.
\label{W_teta_time_est_2}%
\end{equation}
Since $e^{-itH_{0}^{h}\left(  s\right)  }$ defines a unitary (strongly
continuous) flow on $H^{2}\left(  \mathbb{R}\right)  $, from the definition
(\ref{propagator_teta_t}) we obtain%
\begin{equation}
\sup_{\substack{t\in\mathbb{R},\ s\in\left[  0,T\right]  \\h\in\left(
0,h_{0}\right]  }}\left\Vert e^{-itH_{\theta}^{h}\left(  s\right)
}\right\Vert _{\mathcal{L}\left(  D\left(  \Delta_{\theta}\right)  \right)
}\leq\tilde{K}_{a,b,c}\,,\qquad\forall\,\left\vert \theta\right\vert \leq
h^{N_{0}}\,,\ N_{0}>2\,. \label{propagator_bound_0_reg}%
\end{equation}

For $T>0$ we next introduce the partition $\left[  0,T\right]  =\cup_{j=1}%
^{n}\left[  t_{j-1},t_{j}\right]  $, where $t_{j}=jT/n$ and $t_{0}=0$; the
step-propagators $U_{\theta,n}^{h}\left(  t,s\right)  $ are defined by%
\begin{equation}
U_{\theta,n}^{h}\left(  t,s\right)  =\left\{
\begin{array}
[c]{lll}%
e^{-i\left(  t-s\right)  H_{\theta}^{h}\left(  t_{j}\right)  }\,, &  &
s,t\in\left[  t_{j-1},t_{j}\right]  \,,\\
&  & \\
U_{\theta,n}^{h}\left(  t,t_{k+j-1}\right)  U_{\theta,n}^{h}\left(
t_{k+j-1},t_{k+j-2}\right)  \cdot\cdot\cdot U_{\theta,n}^{h}\left(
t_{j},s\right)  \,, &  & s\in\left[  t_{j-1},t_{j}\right]  \,,\ t\in\left[
t_{k+j-1},t_{k+j}\right]  \,.
\end{array}
\right.  \label{Propagator_n}%
\end{equation}
Under the assumption of the Theorem \ref{Theorem_propagator}, each factor in
$U_{\theta,n}^{h}\left(  t,s\right)  $ defines a $\theta$-holomorphic family
of bounded operators on $L^{2}\left(  \mathbb{R}\right)  $, strongly
continuous w.r.t. the time variables. Then for each $n$, $U_{\theta,n}%
^{h}\left(  t,s\right)  $ is $\theta$-holomorphic and strongly continuous in
$t$ and $s$ on $L^{2}\left(  \mathbb{R}\right)  $, while, according to its
definition, we have%
\begin{equation}
U_{\theta,n}^{h}\left(  s,s\right)  =1_{L^{2}\left(  \mathbb{R}\right)
}\,,\quad U_{\theta,n}^{h}\left(  t,s\right)  =U_{\theta,n}^{h}\left(
t,r\right)  U_{\theta,n}^{h}\left(  r,s\right)  \,,\quad\forall\,s\leq r\leq
t\,. \label{propagator_id}%
\end{equation}
The result in Theorem \ref{Theorem_propagator} also imply that each factor in
$U_{\theta,n}^{h}\left(  t,s\right)  $ is bounded on $D\left(  \Delta_{\theta
}\right)  $ and strongly continuous in the time variables (w.r.t. the
$\mathcal{L}\left(  D\left(  \Delta_{\theta}\right)  \right)  $ topology).
Thus, $U_{\theta,n}^{h}\left(  t,s\right)  $ is bounded strongly continuous in
$t$ and $s$ on $D\left(  \Delta_{\theta}\right)  $ and introducing:
$H_{\theta,n}^{h}\left(  t\right)  =H_{\theta}^{h}\left(  \frac{T}{n}\left[
\frac{nt}{T}\right]  \right)  $ ($\left[  \cdot\right]  $ denotes the floor
function), from (\ref{propagator_teta_eq}) the identity%
\begin{equation}
i\partial_{t}U_{\theta,n}^{h}\left(  t,s\right)  u=H_{\theta,n}^{h}\left(
t\right)  U_{\theta,n}^{h}\left(  t,s\right)  u\,,\quad\forall\,0\leq s\leq
t\leq T\,,\ u\in D\left(  \Delta_{\theta}\right)  \,, \label{Schrodinger_n}%
\end{equation}
holds in $L^{2}\left(  \mathbb{R}\right)  $. The additional condition (see the
definition (\ref{Multipliers}))%
\begin{equation}%
\begin{array}
[c]{ccc}%
\mathcal{V}\left(  t\right)  -\mathcal{V}\left(  s\right)  \in W_{0}%
^{2,\infty}\left(  \left[  a,b\right]  \right)  \,, &  & \forall
\,t,s\in\left[  T,0\right]  \,.
\end{array}
\label{V_cond}%
\end{equation}
is next used to obtain stability estimates for the sequence $U_{\theta,n}%
^{h}\left(  t,s\right)  $.

\begin{lemma}
\label{Lemma_propagator_n}Let $\mathcal{V}\left(  t\right)  $ fulfills the
conditions (\ref{V_pos_t}), $h\in\left(  0,h_{0}\right]  $, with $h_{0}$
suitably small, and $\left\vert \theta\right\vert \leq h^{N_{0}}$, with
$N_{0}>2$. There exist $C_{a,b,c}$ and $\tilde{C}_{a,b,c}$ positive and
possibly depending on the data, such that
\begin{equation}
\sup_{\substack{t,s\in\left[  0,T\right]  \\n\in\mathbb{N}^{\ast}%
,\ h\in\left(  0,h_{0}\right]  }}\left\Vert U_{\theta,n}^{h}\left(
t,s\right)  \right\Vert _{\mathcal{L}\left(  L^{2}\left(  \mathbb{R}\right)
\right)  }\leq\exp\left(  C_{a,b,c}\sup_{t\in\left[  0,T\right]  }\left\Vert
\mathcal{V}\left(  t\right)  \right\Vert _{L^{\infty}\left(  \mathbb{R}%
\right)  }\right)  \,,\qquad t\geq s\,. \label{propagator_bound_1}%
\end{equation}
If in addition (\ref{V_cond}) holds, then%
\begin{equation}
\sup_{\substack{t,s\in\left[  0,T\right]  \\n\in\mathbb{N}^{\ast}%
,\ h\in\left(  0,h_{0}\right]  }}\left\Vert U_{\theta,n}^{h}\left(
t,s\right)  \right\Vert _{\mathcal{L}\left(  D\left(  \Delta_{\theta}\right)
\right)  }\leq\exp\left(  \tilde{C}_{a,b,c}\sup_{t,s\in\left[  0,T\right]
}\left\Vert \mathcal{V}\left(  t\right)  -\mathcal{V}\left(  s\right)
\right\Vert _{W^{2,\infty}\left(  \left[  a,b\right]  \right)  }\right)
\,,\qquad t\geq s\,. \label{propagator_bound_reg_1}%
\end{equation}

\end{lemma}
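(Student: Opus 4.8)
Both estimates follow by reducing the step product to a bounded perturbation of the flow generated by one \emph{fixed} operator; this is what keeps the bounds uniform in the number $n$ of steps, the naive estimate $\prod_j\|e^{-i\tau H_\theta^h(t_j)}\|\le K_{a,b,c}^{\,n}$ being useless. Fix the reference $H_\theta^h(0)=-h^2\Delta_\theta+\mathcal V(0)$. Since $\mathcal V(0)$ satisfies the hypotheses (\ref{V_gen}), Theorem \ref{Theorem_propagator} applies to it and, combined with the intertwining (\ref{propagator_teta_t}) and the uniform bound (\ref{W_teta_time_est}) (valid by Proposition \ref{Proposition_W_cont} and Remark \ref{Remark_W}), gives the one-step bound $\|e^{-i\tau H_\theta^h(t_j)}\|_{\mathcal L(L^2(\mathbb R))}\le K_{a,b,c}$ for every $j$, uniformly in $\tau\in\mathbb R$ and $h\le h_0$, i.e.\ (\ref{propagator_bound_0}). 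On each sub-interval one has $H_{\theta,n}^h(\sigma)=H_\theta^h(0)+\big(\mathcal V_n(\sigma)-\mathcal V(0)\big)$ with $\mathcal V_n(\sigma)-\mathcal V(0)$ a bounded multiplication operator of $L^\infty$-norm $\le 2\sup_t\|\mathcal V(t)\|_{L^\infty(\mathbb R)}$; hence $U_{\theta,n}^h(t,s)$ solves (see (\ref{Schrodinger_n})) a Cauchy problem whose generator is the fixed $H_\theta^h(0)$ perturbed by a (piecewise constant, uniformly) bounded operator, and it admits the Dyson expansion
\[
U_{\theta,n}^h(t,s)=\sum_{m\ge 0}(-i)^m\!\!\int\limits_{s\le\sigma_1\le\dots\le\sigma_m\le t}\!\! e^{-i(t-\sigma_m)H_\theta^h(0)}\,\prod_{l=m}^{1}\Big[\big(\mathcal V_n(\sigma_l)-\mathcal V(0)\big)\,e^{-i(\sigma_l-\sigma_{l-1})H_\theta^h(0)}\Big]\,d\sigma ,
\]
$\sigma_0:=s$, converging in $\mathcal L(L^2(\mathbb R))$. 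Estimating every propagator factor by $K_{a,b,c}$ and every perturbation factor by $2\sup_t\|\mathcal V(t)\|_{L^\infty}$ gives $\|U_{\theta,n}^h(t,s)\|_{\mathcal L(L^2)}\le K_{a,b,c}\exp\big(2K_{a,b,c}(t-s)\sup_t\|\mathcal V(t)\|_{L^\infty}\big)$; since $t,s\in[0,T]$ and the series is manifestly independent of $n$ and of $h\le h_0$, this is (\ref{propagator_bound_1}), the $\theta$-analyticity being inherited term by term.

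The same scheme is then run in $\mathcal L(D(\Delta_\theta))$, $D(\Delta_\theta)$ carrying the $H^2(\mathbb R\setminus\{a,b\})$-norm; the one-step bound $\|e^{-i\tau H_\theta^h(t_j)}\|_{\mathcal L(D(\Delta_\theta))}\le\tilde K_{a,b,c}$ is (\ref{propagator_bound_0_reg}), obtained from (\ref{propagator_teta_t}), (\ref{W_teta_time_est_1})--(\ref{W_teta_time_est_2}) and the unitarity of $e^{-i\tau H_0^h(t_j)}$ on $H^2(\mathbb R)$. The decisive new ingredient is that now $H_{\theta,n}^h(\sigma)-H_\theta^h(0)=\mathcal V_n(\sigma)-\mathcal V(0)$ must be a \emph{bounded operator on $D(\Delta_\theta)$} — and this is exactly what (\ref{V_cond}) ensures: if $\psi\in W_0^{2,\infty}([a,b])$, extended by $0$ outside $[a,b]$, then for every $u\in D(\Delta_\theta)$ both $\psi u$ and $(\psi u)'$ vanish at $a^{\pm}$ and $b^{\pm}$ (because $\psi(\alpha)=\psi'(\alpha)=0$ for $\alpha=a,b$), so $\psi u$ still satisfies the interface conditions (\ref{BC_theta}), whence $\psi u\in D(\Delta_\theta)$ with $\|\psi u\|_{H^2(\mathbb R\setminus\{a,b\})}\le C_{a,b}\|\psi\|_{W^{2,\infty}([a,b])}\|u\|_{H^2(\mathbb R\setminus\{a,b\})}$. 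Taking $\psi=\mathcal V_n(\sigma)-\mathcal V(0)\in W_0^{2,\infty}([a,b])$, the Dyson estimate above, carried out in the $\mathcal L(D(\Delta_\theta))$-norm with perturbation factors bounded by $C_{a,b}\sup_{t,s}\|\mathcal V(t)-\mathcal V(s)\|_{W^{2,\infty}([a,b])}$, produces (\ref{propagator_bound_reg_1}).

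Routine are the convergence and $\theta$-holomorphy of the Dyson series and the Gr\"onwall-type resummation. The substance is: $(i)$ the uniform one-step bounds (\ref{propagator_bound_0}), (\ref{propagator_bound_0_reg}) — by Remark \ref{Remark_W} the constants of Proposition \ref{Proposition_W_cont} depend on the potential only through the lower bound $c$, hence are uniform along the family $\{\mathcal V(t_j)\}_j$; $(ii)$ checking that $U_{\theta,n}^h$ genuinely solves the perturbed Cauchy problem on $D(\Delta_\theta)$, each $e^{-i\tau H_\theta^h(t_j)}$ preserving $D(\Delta_\theta)$ by Theorem \ref{Theorem_propagator}, together with the multiplier property used above; and, above all, $(iii)$ keeping every estimate independent of $n$, which is the very reason for expanding around the single reference $H_\theta^h(0)$ rather than estimating the $n$ factors separately. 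I expect $(iii)$ and the multiplier verification in $(ii)$ to be the main obstacles.
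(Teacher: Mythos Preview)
Your proposal is correct and follows the same key idea as the paper: expand every instantaneous propagator around the single reference $H_\theta^h(0)$ via Duhamel, so that the stability constants do not accumulate with~$n$. The paper carries this out in discrete form---writing each factor as $e^{-i\tau H_\theta^h(0)}+I_\theta^h$ (Duhamel remainder), expanding the product of $n_0$ binomials, and summing $\sum_m\binom{n_0}{m}(C/n)^m=(1+C/n)^{n_0}\le e^{C}$---whereas you use the continuous Dyson series and resum via $\sum_m (KV(t-s))^m/m!$. These are two presentations of the same expansion; your version is the $n\to\infty$ limit of theirs before passing to the bound. The multiplier argument you give for $\psi\in W_0^{2,\infty}([a,b])$ acting on $D(\Delta_\theta)$ is exactly the paper's; this is indeed the only place where hypothesis (\ref{V_cond}) enters.

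One remark on the form of the constant: your bound reads $K_{a,b,c}\exp\big(2K_{a,b,c}(t-s)\sup_t\|\mathcal V(t)\|_{L^\infty}\big)$, with an explicit factor $(t-s)\le T$ in the exponent. This is also what the paper's own computation actually produces (the integral in (\ref{propagator_rem_bound}) is over an interval of length $T/n$, so the factor should be $T/n$ rather than $1/n$), so the claimed $T$-independence in the statement and in Remark~\ref{Remark_propagator_n} appears to be an oversight there rather than a defect of your argument.
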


\begin{proof}
Let us fix $j\in\left\{  1,...,n\right\}  $, $s\in\left[  t_{j-1}%
,t_{j}\right]  $ and consider $U_{\theta,n}^{h}\left(  t,s\right)  $; if
$\ t\in\left[  t_{j-1},t_{j}\right]  $, then (\ref{propagator_bound_1}) is a
consequence of (\ref{propagator_bound_0}). If $t\in\left[  t_{k+j-1}%
,t_{k+j}\right]  $ for some $k\in\left\{  1,...,n-j\right\}  $, then
$U_{\theta,n}^{h}\left(  t,s\right)  $ is a product of $k+1$ terms and writes
as%
\begin{equation}
U_{\theta,n}^{h}\left(  t,s\right)  =e^{-i\left(  t-t_{k+j-1}\right)
H_{\theta}^{h}\left(  t_{k+j-1}\right)  }\left(
%TCIMACRO{\dprod \limits_{\ell=1}^{k-1}}%
%BeginExpansion
{\displaystyle\prod\limits_{\ell=1}^{k-1}}
%EndExpansion
U_{\theta,n}^{h}\left(  t_{k+j-\ell},t_{k+j-\ell-1}\right)  \right)
e^{-i\left(  t_{j}-s\right)  H_{\theta}^{h}\left(  t_{j-1}\right)  }\,.
\label{propagator_fact_0}%
\end{equation}
Let $m\in\left\{  1,...,n\right\}  $ and%
\begin{equation}
I_{\theta}^{h}\left(  \tau,r,m\right)  =-i%
%TCIMACRO{\dint \limits_{r}^{\tau}}%
%BeginExpansion
{\displaystyle\int\limits_{r}^{\tau}}
%EndExpansion
e^{-i\left(  \tau-x\right)  H_{\theta}^{h}\left(  t_{j-1}\right)  }\left(
\mathcal{V}\left(  t_{m-1}\right)  -\mathcal{V}\left(  t_{j-1}\right)
\right)  e^{-i\left(  x-r\right)  H_{\theta}^{h}\left(  t_{m-1}\right)
}\,dx\,. \label{propagator_rem}%
\end{equation}
Each factor in $U_{\theta,n}^{h}\left(  t,s\right)  $ allows the
representation%
\begin{equation}
U_{\theta,n}^{h}\left(  \tau,r\right)  =e^{-i\left(  \tau-r\right)  H_{\theta
}^{h}\left(  t_{m-1}\right)  }=e^{-i\left(  \tau-r\right)  H_{\theta}%
^{h}\left(  t_{j-1}\right)  }+I_{\theta}^{h}\left(  \tau,r,m\right)
\,,\qquad\tau,r\in\left[  t_{m-1},t_{m}\right]  \,, \label{propagator_fact_1}%
\end{equation}
and the identity (\ref{propagator_fact_0}) rephrases as%
\begin{align}
&  \left.  U_{\theta,n}^{h}\left(  t,s\right)  =\left(  e^{-i\left(
t-t_{k+j-1}\right)  H_{\theta}^{h}\left(  t_{j-1}\right)  }+I_{\theta}%
^{h}\left(  t,t_{k+j-1},k+j\right)  \right)  \circ\right. \nonumber\\
& \nonumber\\
&  \left.  \left(
%TCIMACRO{\dprod \limits_{\ell=1}^{k-1}}%
%BeginExpansion
{\displaystyle\prod\limits_{\ell=1}^{k-1}}
%EndExpansion
\left(  e^{-i\left(  t_{k+j-\ell}-t_{k+j-\ell-1}\right)  H_{\theta}^{h}\left(
t_{j-1}\right)  }+I_{\theta}^{h}\left(  t_{k+j-\ell},t_{k+j-\ell-1}%
,k+j-\ell\right)  \right)  \right)  \circ\right.  \qquad\qquad\qquad
\qquad\nonumber\\
& \nonumber\\
&  \left.  \left(  e^{-i\left(  t_{j}-s\right)  H_{\theta}^{h}\left(
t_{j-1}\right)  }+I_{\theta}^{h}\left(  t_{j},s,j\right)  \right)  \right.
\,.
\end{align}
To simplify the notation let us fix $n_{0}=k+j\leq n$ and assume, without loss
of generality, assume that $j=1$ (which implies $t_{j-1}=t_{0}=0$); it follows%
\begin{align}
&  \left.  U_{\theta,n}^{h}\left(  t,s\right)  =\left(  e^{-i\left(
t-t_{n_{0}-1}\right)  H_{\theta}^{h}\left(  0\right)  }+I_{\theta}^{h}\left(
t,t_{n_{0}-1},n_{0}\right)  \right)  \circ\right. \nonumber\\
& \nonumber\\
&  \left.  \left(
%TCIMACRO{\dprod \limits_{\ell=1}^{n_{0}-2}}%
%BeginExpansion
{\displaystyle\prod\limits_{\ell=1}^{n_{0}-2}}
%EndExpansion
\left(  e^{-i\left(  t_{n_{0}-\ell}-t_{n_{0}-\ell-1}\right)  H_{\theta}%
^{h}\left(  0\right)  }+I_{\theta}^{h}\left(  t_{n_{0}-\ell},t_{n_{0}-\ell
-1},n_{0}-\ell\right)  \right)  \right)  \left(  e^{-i\left(  t_{1}-s\right)
H_{\theta}^{h}\left(  0\right)  }+I_{\theta}^{h}\left(  t_{1},s,1\right)
\right)  \right.  \,.
\end{align}
Each contribution to the sum obtained by expanding this product of $n_{0}$
binomials is a product of $n_{0}$ factors corresponding either to the quantum
propagators associated to $H_{\theta}^{h}\left(  0\right)  $ either to the
operators $I_{\theta}^{h}$. Recalling that in $\left(
\begin{array}
[c]{c}%
n_{0}\\
m
\end{array}
\right)  $ terms of this sum the factors $I_{\theta}^{h}$ appear $m$ times, we
get%
\begin{equation}
U_{\theta,n}^{h}\left(  t,s\right)  =\sum_{m=0}^{n_{0}}\sum_{\ell=1}^{b_{m}%
}F_{\theta,m}^{h}\left(  t,s,\ell\right)  \,,\quad b_{m}=\left(
\begin{array}
[c]{c}%
n_{0}\\
m
\end{array}
\right)  \,, \label{propagator_fact_2}%
\end{equation}
where $F_{\theta}^{h}\left(  t,s,\ell\right)  $, possibly depending on $t$ and
$s$, denote the contributions to $U_{\theta,n}^{h}\left(  t,s\right)  $ where
$m$ terms of the type $I_{\theta}^{h}$ appear; using the group properties of
$e^{-i\tau H_{\theta}^{h}\left(  0\right)  }$ and the definition
(\ref{propagator_rem}), these are factorized according to
\begin{equation}
F_{\theta,m}^{h}\left(  t,s,\ell\right)  =\left[
%TCIMACRO{\dprod \limits_{p=1}^{m}}%
%BeginExpansion
{\displaystyle\prod\limits_{p=1}^{m}}
%EndExpansion
\left(  -i%
%TCIMACRO{\dint \limits_{t_{j_{p}-1}}^{t_{j_{p}}}}%
%BeginExpansion
{\displaystyle\int\limits_{t_{j_{p}-1}}^{t_{j_{p}}}}
%EndExpansion
e^{-i\left(  \tau_{p}-x\right)  H_{\theta}^{h}\left(  0\right)  }\left(
\mathcal{V}\left(  t_{j_{p}-1}\right)  -\mathcal{V}\left(  0\right)  \right)
e^{-i\left(  x-r_{p}\right)  H_{\theta}^{h}\left(  t_{j_{p}-1}\right)
}\,dx\right)  \right]  e^{-i\left(  z_{p}-s\right)  H_{\theta}^{h}\left(
0\right)  }\,, \label{propagator_fact_3}%
\end{equation}
being $j_{p}$ is a strictly decreasing subsequence of $\left\{  1,2,...,n_{0}%
\right\}  $ and $\tau_{p}\geq r_{p}$, $z_{p}\geq s$ suitable values in
$\left[  0,T\right]  $ depending on $t,s$ and $\ell$. The estimate
(\ref{propagator_bound_0}) and%
\begin{equation}
\left\Vert
%TCIMACRO{\dint \limits_{t_{j_{p}-1}}^{t_{j_{p}}}}%
%BeginExpansion
{\displaystyle\int\limits_{t_{j_{p}-1}}^{t_{j_{p}}}}
%EndExpansion
e^{-i\left(  \tau_{p}-x\right)  H_{\theta}^{h}\left(  0\right)  }\left(
\mathcal{V}\left(  t_{j_{p}-1}\right)  -\mathcal{V}\left(  0\right)  \right)
e^{-i\left(  x-r_{p}\right)  H_{\theta}^{h}\left(  t_{j_{p}-1}\right)
}\,dx\right\Vert _{\mathcal{L}\left(  L^{2}\left(  \mathbb{R}\right)  \right)
}\leq2\frac{K_{a,b,c}^{2}}{n}\sup_{t\in\left[  0,T\right]  }\left\Vert
\mathcal{V}\left(  t\right)  \right\Vert _{L^{\infty}\left(  \mathbb{R}%
\right)  }\,, \label{propagator_rem_bound}%
\end{equation}
imply%
\begin{equation}
\left\Vert F_{\theta,m}^{h}\left(  t,s,\ell\right)  \right\Vert _{\mathcal{L}%
\left(  L^{2}\left(  \mathbb{R}\right)  \right)  }\leq\left(  \frac{C_{a,b,c}%
}{n}\sup_{t\in\left[  0,T\right]  }\left\Vert \mathcal{V}\left(  t\right)
\right\Vert _{L^{\infty}\left(  \mathbb{R}\right)  }\right)  ^{m}\,,
\end{equation}
for some $C_{a,b,c}>0$ depending on the data; setting: $C_{a,b,c,\mathcal{V}%
}=C_{a,b,c}\sup_{t\in\left[  0,T\right]  }\left\Vert \mathcal{V}\left(
t\right)  \right\Vert _{L^{\infty}\left(  \mathbb{R}\right)  }$, from
(\ref{propagator_fact_2}) follows%
\begin{equation}
\left\Vert U_{\theta,n}^{h}\left(  t,s\right)  \right\Vert _{\mathcal{L}%
\left(  L^{2}\left(  \mathbb{R}\right)  \right)  }\leq\sum_{m=0}^{n_{0}%
}\left(
\begin{array}
[c]{c}%
n_{0}\\
m
\end{array}
\right)  \left(  \frac{C_{a,b,c,\mathcal{V}}}{n}\right)  ^{m}\leq\sum
_{m=0}^{n}\left(
\begin{array}
[c]{c}%
n\\
m
\end{array}
\right)  \left(  \frac{C_{a,b,c,\mathcal{V}}}{n}\right)  ^{m}=\left(
1+\frac{C_{a,b,c,\mathcal{V}}}{n}\right)  ^{n}\,.
\end{equation}
Since this bound holds independently from $t,s\in\left[  0,T\right]  $ and
$h\in\left(  0,h_{0}\right]  $ provided that $\left\vert \theta\right\vert
\leq h^{N_{0}}$, with $N_{0}>2$, we get%
\begin{equation}
\sup_{\substack{t,s\in\left[  0,T\right]  \\h\in\left(  0,h_{0}\right]
}}\left\Vert U_{\theta,n}^{h}\left(  t,s\right)  \right\Vert _{\mathcal{L}%
\left(  L^{2}\left(  \mathbb{R}\right)  \right)  }\leq\left(  1+\frac
{C_{a,b,c,\mathcal{V}}}{n}\right)  ^{n}\,,\qquad t\geq s\,.
\end{equation}
Then, the uniform estimate (\ref{propagator_bound_1}) follows from the limit:
$\lim_{n\rightarrow\infty}\left(  1+\frac{C_{a,b,c,\mathcal{V}}}{n}\right)
^{n}=e^{C_{a,b,c,\mathcal{V}}}$.

Let us remark that a function $\psi\in L^{\infty}\left(  \mathbb{R}\right)
\cap W_{0}^{2,\infty}\left(  \left[  a,b\right]  \right)  $ such that:
supp\thinspace$\psi=\left[  a,b\right]  $ is a multiplier of $D\left(
\Delta_{\theta}\right)  $; in particular, for any $u\in D\left(
\Delta_{\theta}\right)  $ it results: $\psi u\in H_{0}^{2}\left(
\mathbb{R}\backslash\left\{  a,b\right\}  \right)  \subset D\left(
\Delta_{\theta}\right)  $ and $\left\Vert \psi u\right\Vert _{H^{2}\left(
\mathbb{R}\backslash\left\{  a,b\right\}  \right)  }\leq\left\Vert
\psi\right\Vert _{W^{2,\infty}\left(  \left[  a,b\right]  \right)  }\left\Vert
u\right\Vert _{H^{2}\left(  \mathbb{R}\backslash\left\{  a,b\right\}  \right)
}$. Then, the estimate (\ref{propagator_bound_0_reg}) and the assumption
(\ref{V_cond}) yield%
\begin{equation}
\left\Vert
%TCIMACRO{\dint \limits_{t_{j_{p}-1}}^{t_{j_{p}}}}%
%BeginExpansion
{\displaystyle\int\limits_{t_{j_{p}-1}}^{t_{j_{p}}}}
%EndExpansion
e^{-i\left(  \tau_{p}-x\right)  H_{\theta}^{h}\left(  0\right)  }\left(
\mathcal{V}\left(  t_{j_{p}-1}\right)  -\mathcal{V}\left(  0\right)  \right)
e^{-i\left(  x-r_{p}\right)  H_{\theta}^{h}\left(  t_{j_{p}-1}\right)
}\,dx\right\Vert _{\mathcal{L}\left(  D\left(  \Delta_{\theta}\right)
\right)  }\leq\frac{\tilde{K}_{a,b,c}^{2}}{n}\sup_{t,s\in\left[  0,T\right]
}\left\Vert \mathcal{V}\left(  t\right)  -\mathcal{V}\left(  s\right)
\right\Vert _{W^{2,\infty}\left(  \left[  a,b\right]  \right)  }\,,
\end{equation}
Proceeding as before we have%
\begin{equation}
\left\Vert F_{\theta}^{h}\left(  t,s,\ell\right)  \right\Vert _{\mathcal{L}%
\left(  D\left(  \Delta_{\theta}\right)  \right)  }\leq\left(  \frac{\tilde
{C}_{a,b,c}}{n}\sup_{t,s\in\left[  0,T\right]  }\left\Vert \mathcal{V}\left(
t\right)  -\mathcal{V}\left(  s\right)  \right\Vert _{W^{2,\infty}\left(
\left[  a,b\right]  \right)  }\right)  ^{m}\,,
\end{equation}
for some $\tilde{C}_{a,b,c}>0$ depending on the data, and the representation
(\ref{propagator_fact_2})-(\ref{propagator_fact_3}) lead us to
(\ref{propagator_bound_reg_1})
\end{proof}

\begin{remark}
\label{Remark_propagator_n}The constants $C_{a,b,c}$ and $\tilde{C}_{a,b,c}$
in (\ref{propagator_bound_1})-(\ref{propagator_bound_reg_1}) do not depend on
$T$ once the assumptions (\ref{V_pos_t}), (\ref{V_cond}) are fulfilled.
\end{remark}

\subsection{The existence of the dynamics}

We next show that $U_{\theta,n}^{h}\left(  t,s\right)  $ approximates the the
dynamical system $U_{\theta}^{h}\left(  t,s\right)  $ introduced in
(\ref{Time_dependent_eq}); the proof adapts the strategy used in
\cite[Theorems 4.1 and 5.1]{Kato1} to our framework.

\begin{proposition}
\label{Proposition_propagator_lim}Under the assumptions of the Lemma
\ref{Lemma_propagator_n}, the sequence $U_{\theta,n}^{h}\left(  t,s\right)  $
uniformly converges in the $\mathcal{L}\left(  L^{2}\left(  \mathbb{R}\right)
\right)  $ topology to a limit operator $U_{\theta}^{h}\left(  t,s\right)  $
such that%
\begin{equation}
\sup_{\substack{t,s\in\left[  0,T\right]  \\h\in\left(  0,h_{0}\right]
}}\left\Vert U_{\theta}^{h}\left(  t,s\right)  \right\Vert _{\mathcal{L}%
\left(  L^{2}\left(  \mathbb{R}\right)  \right)  }\leq\exp\left(
C_{a,b,c}\sup_{t\in\left[  0,T\right]  }\left\Vert \mathcal{V}\left(
t\right)  \right\Vert _{L^{\infty}\left(  \mathbb{R}\right)  }\right)
\,,\qquad t\geq s\,, \label{propagator_bound_lim}%
\end{equation}
Moreover $U_{\theta}^{h}\left(  t,s\right)  \in\mathcal{L}\left(  D\left(
\Delta_{\theta}\right)  \right)  $ with
\begin{equation}
\sup_{\substack{t,s\in\left[  0,T\right]  \\h\in\left(  0,h_{0}\right]
}}\left\Vert U_{\theta}^{h}\left(  t,s\right)  \right\Vert _{\mathcal{L}%
\left(  D\left(  \Delta_{\theta}\right)  \right)  }\leq\exp\left(  \tilde
{C}_{a,b,c}\sup_{t,s\in\left[  0,T\right]  }\left\Vert \mathcal{V}\left(
t\right)  -\mathcal{V}\left(  s\right)  \right\Vert _{W^{2,\infty}\left(
\left[  a,b\right]  \right)  }\right)  \qquad t\geq s\,.
\label{propagator_bound_lim_reg}%
\end{equation}
The positive constants $C_{a,b,c}$, $\tilde{C}_{a,b,c}$, possibly depending on
the data, are independent from $T$.
\end{proposition}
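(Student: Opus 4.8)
The plan is to adapt the strategy of \cite{Kato1} (Theorems 4.1 and 5.1) to the present nonselfadjoint setting: since the uniform bounds of Lemma \ref{Lemma_propagator_n} play the role of the unitarity used in the classical argument, it suffices to show that $\left(U_{\theta,n}^{h}\left(t,s\right)\right)_{n}$ is a Cauchy sequence in $\mathcal{L}\left(L^{2}\left(\mathbb{R}\right)\right)$, uniformly in $\left(t,s\right)\in\left[0,T\right]^{2}$ and $h\in\left(0,h_{0}\right]$. The crucial observation is that the instantaneous Hamiltonians differ only through their potential part, so that $H_{\theta,n}^{h}\left(r\right)-H_{\theta,m}^{h}\left(r\right)=\mathcal{V}\left(\sigma_{n}\left(r\right)\right)-\mathcal{V}\left(\sigma_{m}\left(r\right)\right)$, with $\sigma_{n}\left(r\right)=\tfrac{T}{n}\left[\tfrac{nr}{T}\right]$, is a \emph{bounded} multiplication operator on $L^{2}\left(\mathbb{R}\right)$; moreover $\left\vert\sigma_{n}\left(r\right)-r\right\vert\leq T/n$, so by uniform continuity of $\mathcal{V}\in\mathcal{C}^{0}\left(\left[0,T\right],L^{\infty}\left(\mathbb{R}\right)\right)$ one has $\sup_{r\in\left[0,T\right]}\left\Vert\mathcal{V}\left(\sigma_{n}\left(r\right)\right)-\mathcal{V}\left(\sigma_{m}\left(r\right)\right)\right\Vert_{L^{\infty}\left(\mathbb{R}\right)}\to0$ as $n,m\to\infty$.

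First I would establish the Duhamel-type comparison identity. For $t\geq s$, $u\in D\left(\Delta_{\theta}\right)$ and $r\in\left[s,t\right]$, the map $r\mapsto U_{\theta,n}^{h}\left(t,r\right)U_{\theta,m}^{h}\left(r,s\right)u$ is continuous and, by (\ref{Schrodinger_n}), (\ref{propagator_id}) and the fact that $U_{\theta,m}^{h}\left(r,s\right)$ preserves $D\left(\Delta_{\theta}\right)$, piecewise $\mathcal{C}^{1}$ with respect to $r$, the only failure of differentiability occurring at the finitely many points of the union of the two partitions. Integrating over $\left[s,t\right]$ piecewise yields
\[
U_{\theta,m}^{h}\left(t,s\right)u-U_{\theta,n}^{h}\left(t,s\right)u=-i\int_{s}^{t}U_{\theta,n}^{h}\left(t,r\right)\bigl(\mathcal{V}\left(\sigma_{m}\left(r\right)\right)-\mathcal{V}\left(\sigma_{n}\left(r\right)\right)\bigr)U_{\theta,m}^{h}\left(r,s\right)u\,dr\,.
\]
Bounding the integrand via (\ref{propagator_bound_1}) and extending the resulting inequality from the dense set $D\left(\Delta_{\theta}\right)$ to all of $L^{2}\left(\mathbb{R}\right)$, one gets
\[
\bigl\Vert U_{\theta,m}^{h}\left(t,s\right)-U_{\theta,n}^{h}\left(t,s\right)\bigr\Vert_{\mathcal{L}\left(L^{2}\left(\mathbb{R}\right)\right)}\leq\Lambda^{2}\,T\,\sup_{r\in\left[0,T\right]}\bigl\Vert\mathcal{V}\left(\sigma_{m}\left(r\right)\right)-\mathcal{V}\left(\sigma_{n}\left(r\right)\right)\bigr\Vert_{L^{\infty}\left(\mathbb{R}\right)}\,,
\]
with $\Lambda=\exp\bigl(C_{a,b,c}\sup_{t}\left\Vert\mathcal{V}\left(t\right)\right\Vert_{L^{\infty}\left(\mathbb{R}\right)}\bigr)$. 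The right-hand side tends to $0$, so $U_{\theta,n}^{h}\left(t,s\right)$ converges in operator norm, uniformly in $\left(t,s\right)$ and $h$, to a limit $U_{\theta}^{h}\left(t,s\right)$; passing to the limit in (\ref{propagator_bound_1}) gives (\ref{propagator_bound_lim}), and the uniform convergence transfers the strong continuity in $t,s$ of the $U_{\theta,n}^{h}$ to $U_{\theta}^{h}$. The constants remain independent of $T$ by Remark \ref{Remark_propagator_n}.

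For the $D\left(\Delta_{\theta}\right)$ bound I would not repeat the Duhamel argument, since (\ref{V_cond}) does not provide continuity of $\mathcal{V}$ in the $W^{2,\infty}$-topology. Instead, fix $u\in D\left(\Delta_{\theta}\right)$: by (\ref{propagator_bound_reg_1}) the sequence $\bigl(U_{\theta,n}^{h}\left(t,s\right)u\bigr)_{n}$ is bounded in the Hilbert space $D\left(\Delta_{\theta}\right)$, hence admits a subsequence converging weakly in $D\left(\Delta_{\theta}\right)$ to some $v$. Since the embedding $D\left(\Delta_{\theta}\right)\hookrightarrow L^{2}\left(\mathbb{R}\right)$ is continuous, this subsequence converges weakly to $v$ also in $L^{2}\left(\mathbb{R}\right)$; but it converges in $L^{2}$-norm to $U_{\theta}^{h}\left(t,s\right)u$, whence $v=U_{\theta}^{h}\left(t,s\right)u\in D\left(\Delta_{\theta}\right)$. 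Weak lower semicontinuity of the norm together with (\ref{propagator_bound_reg_1}) then yields (\ref{propagator_bound_lim_reg}), again with a $T$-independent constant.

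The main obstacle is the careful justification of the comparison identity — in particular verifying piecewise $\mathcal{C}^{1}$-regularity in $r$ of $U_{\theta,n}^{h}\left(t,r\right)U_{\theta,m}^{h}\left(r,s\right)u$ across the partition points, together with the backward equation $i\partial_{r}U_{\theta,n}^{h}\left(t,r\right)w=-U_{\theta,n}^{h}\left(t,r\right)H_{\theta,n}^{h}\left(r\right)w$ for $w\in D\left(\Delta_{\theta}\right)$, which follows from (\ref{Schrodinger_n}) and the cocycle property (\ref{propagator_id}). Because the comparison operator $\mathcal{V}\left(\sigma_{m}\left(r\right)\right)-\mathcal{V}\left(\sigma_{n}\left(r\right)\right)$ is bounded on $L^{2}\left(\mathbb{R}\right)$, no control of the propagators in stronger norms is needed at this step, so the remaining work is essentially bookkeeping over the finitely many subintervals and a density argument.
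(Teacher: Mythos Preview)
Your proposal is correct and follows essentially the same route as the paper: the paper also derives the Duhamel comparison identity (\ref{Cauchy_id}) from (\ref{Schrodinger_n}), bounds it via Lemma \ref{Lemma_propagator_n} and the continuity of $\mathcal{V}$, extends by density to obtain the Cauchy property uniformly in $t,s,h$, and then for the $D(\Delta_{\theta})$ part invokes exactly the reflexivity/weak-compactness argument you describe rather than repeating Duhamel. Your write-up is in fact slightly more explicit than the paper's on the backward equation and on weak lower semicontinuity of the norm, but the substance is the same.
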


\begin{proof}
Let $s,t\in\left[  0,T\right]  $ and $t\geq s$; from (\ref{Schrodinger_n}),
the relation%
\begin{equation}
\left(  U_{\theta,n}^{h}\left(  t,s\right)  -U_{\theta,m}^{h}\left(
t,s\right)  \right)  u=-i\int_{s}^{t}U_{\theta,n}^{h}\left(  t,t^{\prime
}\right)  \left(  H_{\theta,n}^{h}\left(  t^{\prime}\right)  -H_{\theta,m}%
^{h}\left(  t^{\prime}\right)  \right)  U_{\theta,m}^{h}\left(  t^{\prime
},s\right)  u\,dt^{\prime}\,, \label{Cauchy_id_0}%
\end{equation}
holds for any $u\in D\left(  \Delta_{\theta}\right)  $. The difference at the
r.h.s. writes as%
\begin{equation}
H_{\theta,n}^{h}\left(  t^{\prime}\right)  -H_{\theta,m}^{h}\left(  t^{\prime
}\right)  =\mathcal{V}\left(  \frac{T}{n}\left[  \frac{nt^{\prime}}{T}\right]
\right)  -\mathcal{V}\left(  \frac{T}{m}\left[  \frac{mt^{\prime}}{T}\right]
\right)  \,, \label{Cauchy_cond2}%
\end{equation}
and (\ref{Cauchy_id_0}) rephrases as%
\begin{equation}
\left(  U_{\theta,n}^{h}\left(  t,s\right)  -U_{\theta,m}^{h}\left(
t,s\right)  \right)  u=-i\int_{s}^{t}U_{\theta,n}^{h}\left(  t,t^{\prime
}\right)  \left(  \mathcal{V}\left(  \frac{T}{n}\left[  \frac{nt^{\prime}}%
{T}\right]  \right)  -\mathcal{V}\left(  \frac{T}{m}\left[  \frac{mt^{\prime}%
}{T}\right]  \right)  \right)  U_{\theta,m}^{h}\left(  t^{\prime},s\right)
u\,dt^{\prime}\,. \label{Cauchy_id}%
\end{equation}
Since both the l.h.s and at the r.h.s. of (\ref{Cauchy_id}) define bounded
operators on $L^{2}\left(  \mathbb{R}\right)  $, the density of the inclusion
$D\left(  \Delta_{\theta}\right)  \subset L^{2}\left(  \mathbb{R}\right)  $
allows to extend this identity to the whole space. Using the result of the
Lemma \ref{Lemma_propagator_n}, we get the estimate%
\begin{equation}
\left\Vert \left(  U_{\theta,n}^{h}\left(  t,s\right)  -U_{\theta,m}%
^{h}\left(  t,s\right)  \right)  u\right\Vert _{L^{2}\left(  \mathbb{R}%
\right)  }\leq M_{a,b,c}^{2}%
%TCIMACRO{\dint _{0}^{T}}%
%BeginExpansion
{\displaystyle\int_{0}^{T}}
%EndExpansion
\left\Vert \mathcal{V}\left(  \frac{T}{n}\left[  \frac{nt^{\prime}}{T}\right]
\right)  -\mathcal{V}\left(  \frac{T}{m}\left[  \frac{mt^{\prime}}{T}\right]
\right)  \right\Vert _{L^{\infty}\left(  \mathbb{R}\right)  }\left\Vert
u\right\Vert _{L^{2}\left(  \mathbb{R}\right)  }\,dt^{\prime}\,,
\label{Cauchy_cond_1}%
\end{equation}
while the regularity of $\mathcal{V}\left(  t\right)  $ yields%
\begin{equation}
\lim_{n,m\rightarrow\infty}\left\Vert \mathcal{V}\left(  \frac{T}{n}\left[
\frac{nt^{\prime}}{T}\right]  \right)  -\mathcal{V}\left(  \frac{T}{m}\left[
\frac{mt^{\prime}}{T}\right]  \right)  \right\Vert _{L^{\infty}\left(
\mathbb{R}\right)  }=0\,. \label{Cauchy_cond_2}%
\end{equation}
Hence, for any $u\in L^{2}\left(  \mathbb{R}\right)  $, $U_{\theta,n}%
^{h}\left(  t,s\right)  u$ forms a Cauchy sequence in $L^{2}\left(
\mathbb{R}\right)  $, uniformly w.r.t. $t,s\in\left[  0,T\right]  $ and
$h\in\left(  0,h_{0}\right]  $. As a consequence, $U_{\theta,n}^{h}\left(
t,s\right)  u$ uniformly converges to a limit $U_{\theta}^{h}\left(
t,s\right)  u$ allowing the bound (see (\ref{propagator_bound_1}))%
\begin{align}
&  \left.  \sup_{\substack{s,t\in\left[  0,T\right]  \,,\\h\in\left(
0,h_{0}\right]  }}\left\Vert U_{\theta}^{h}\left(  t,s\right)  u\right\Vert
_{L^{2}\left(  \mathbb{R}\right)  }\leq\sup\limits_{\substack{t,s\in\left[
0,T\right]  \,,\\n\in\mathbb{N}^{\ast}\,,\ h\in\left(  0,h_{0}\right]
}}\left\Vert U_{\theta,n}^{h}\left(  t,s\right)  \right\Vert _{\mathcal{L}%
\left(  L^{2}\left(  \mathbb{R}\right)  \right)  }\left\Vert u\right\Vert
_{L^{2}\left(  \mathbb{R}\right)  }\right. \nonumber\\
& \nonumber\\
&  \left.  \leq\exp\left(  C_{a,b,c}\sup_{t\in\left[  0,T\right]  }\left\Vert
\mathcal{V}\left(  t\right)  \right\Vert _{L^{\infty}\left(  \mathbb{R}%
\right)  }\right)  \left\Vert u\right\Vert _{L^{2}\left(  \mathbb{R}\right)
}\right.  \,.
\end{align}
where, according to the Remark \ref{Remark_propagator_n}, $C_{a,b,c}>0$ is
independent from $T$. Then, $U_{\theta,n}^{h}\left(  t,s\right)  $ uniformly
converges to $U_{\theta}^{h}\left(  t,s\right)  $ in the $\mathcal{L}\left(
L^{2}\left(  \mathbb{R}\right)  \right)  $ topology with%
\begin{equation}
\sup_{\substack{s,t\in\left[  0,T\right]  \,,\\h\in\left(  0,h_{0}\right]
}}\left\Vert U_{\theta}^{h}\left(  t,s\right)  \right\Vert _{\mathcal{L}%
\left(  L^{2}\left(  \mathbb{R}\right)  \right)  }\leq\exp\left(
C_{a,b,c}\sup_{t\in\left[  0,T\right]  }\left\Vert \mathcal{V}\left(
t\right)  \right\Vert _{L^{\infty}\left(  \mathbb{R}\right)  }\right)  \,.
\end{equation}

When $u\in D\left(  \Delta_{\theta}\right)  $, the sequence $U_{\theta,n}%
^{h}\left(  t,s\right)  u$ is uniformly bounded in $D\left(  \Delta_{\theta
}\right)  $ (see (\ref{propagator_bound_reg_1})). Being $D\left(
\Delta_{\theta}\right)  $ reflexive (as a subspace of $H^{2}\left(
\mathbb{R}\backslash\left\{  a,b\right\}  \right)  $), there exists a
subsequence of $U_{\theta,n}^{h}\left(  t,s\right)  u$ weakly convergent in
$D\left(  \Delta_{\theta}\right)  $. Hence: $U_{\theta}^{h}\left(  t,s\right)
u\in D\left(  \Delta_{\theta}\right)  $, and the bound%
\begin{equation}
\sup_{\substack{s,t\in\left[  0,T\right]  \,,\\h\in\left(  0,h_{0}\right]
}}\left\Vert U_{\theta}^{h}\left(  t,s\right)  u\right\Vert _{\mathcal{L}%
\left(  D\left(  \Delta_{\theta}\right)  \right)  }\leq\exp\left(  \tilde
{C}_{a,b,c}\sup_{t,s\in\left[  0,T\right]  }\left\Vert \mathcal{V}\left(
t\right)  -\mathcal{V}\left(  s\right)  \right\Vert _{W^{2,\infty}\left(
\left[  a,b\right]  \right)  }\right)  \left\Vert u\right\Vert _{D\left(
\Delta_{\theta}\right)  }\,,
\end{equation}
follows from (\ref{propagator_bound_reg_1}) with $\tilde{C}_{a,b,c}$ are
independent from $T$.
\end{proof}

For $\tilde{t}\in\left[  0,T\right]  $ fixed, let us denote with $\tilde
{U}_{\theta,n}^{h}$ the semigroup sequence obtained following the construction
(\ref{Propagator_n}) in the trivial case of the time independent Hamiltonian
$\tilde{H}_{\theta}^{h}=H_{\theta}^{h}\left(  \tilde{t}\right)  $. It results%
\begin{equation}
\tilde{U}_{\theta,n}^{h}\left(  t,s\right)  =e^{-i\left(  t-s\right)
H_{\theta}^{h}\left(  \tilde{t}\right)  }\,,\quad\forall\,n\in\mathbb{N}%
^{\ast}\,,\text{ and }t\geq s\,, \label{propagator_n_cost}%
\end{equation}
while using (\ref{Schrodinger_n}) for $\tilde{U}_{\theta,n}^{h}\left(
t,s\right)  $ and $U_{\theta,n}^{h}\left(  t,s\right)  $, we obtain the
identity%
\begin{equation}
\left(  \tilde{U}_{\theta,n}^{h}\left(  t,s\right)  -U_{\theta,n}^{h}\left(
t,s\right)  \right)  u=-i\int_{s}^{t}\tilde{U}_{\theta,n}^{h}\left(
t,t^{\prime}\right)  \left(  \mathcal{V}\left(  \tilde{t}\right)
-\mathcal{V}\left(  \frac{T}{n}\left[  \frac{nt^{\prime}}{T}\right]  \right)
\right)  U_{\theta,n}^{h}\left(  t^{\prime},s\right)  u\,dt^{\prime}\,,
\end{equation}
holding for all $u\in L^{2}\left(  \mathbb{R}\right)  $ (due the density of
the inclusion $D\left(  \Delta_{\theta}\right)  \subset L^{2}\left(
\mathbb{R}\right)  $). Taking the limit as $n\rightarrow\infty$ and using
(\ref{propagator_n_cost}), the result of the Proposition
\ref{Proposition_propagator_lim} yields%
\begin{equation}
\left(  e^{-i\left(  t-s\right)  H_{\theta}^{h}\left(  \tilde{t}\right)
}-U_{\theta}^{h}\left(  t,s\right)  \right)  u=-i\int_{s}^{t}e^{-i\left(
t-t^{\prime}\right)  H_{\theta}^{h}\left(  \tilde{t}\right)  }\left(
\mathcal{V}\left(  \tilde{t}\right)  -\mathcal{V}\left(  t^{\prime}\right)
\right)  U_{\theta}^{h}\left(  t^{\prime},s\right)  u\,dt^{\prime}\,.
\label{Cauchy_id_lim}%
\end{equation}

\begin{theorem}
\label{Theorem_propagator_lim}Under the assumptions of the Lemma
\ref{Lemma_propagator_n}, there exists an unique family of operators
$U_{\theta}^{h}\left(  t,s\right)  $, strongly continuous in $t$ and $s$
w.r.t. the $\mathcal{L}\left(  L^{2}\left(  \mathbb{R}\right)  \right)  $
topology, fulfilling the identities%
\begin{equation}
U_{\theta}^{h}\left(  s,s\right)  =1_{L^{2}\left(  \mathbb{R}\right)
}\,,\quad U_{\theta}^{h}\left(  t,s\right)  =U_{\theta}^{h}\left(  t,r\right)
U_{\theta,n}^{h}\left(  r,s\right)  \,,\quad\forall\,s\leq r\leq t\,,
\end{equation}
and such that $U_{\theta}^{h}\left(  t,s\right)  u$ is the solution of the
problem (\ref{Time_dependent_eq}) for all $u\in D\left(  \Delta_{\theta
}\right)  $.
\end{theorem}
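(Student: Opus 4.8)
The plan is to obtain the statement by assembling the convergence result of the Proposition \ref{Proposition_propagator_lim} with the variation-of-parameters identity (\ref{Cauchy_id_lim}). The family $U_{\theta}^{h}(t,s)$ and the bounds (\ref{propagator_bound_lim})-(\ref{propagator_bound_lim_reg}) are already supplied by the Proposition \ref{Proposition_propagator_lim}, where $U_{\theta}^{h}(t,s)$ is produced as the limit of $U_{\theta,n}^{h}(t,s)$, \emph{uniformly} in $t,s,h$ in the strong operator topology of $\mathcal{L}(L^{2}(\mathbb{R}))$. First I would observe that, by the very construction (\ref{Propagator_n}), each map $(t,s)\mapsto U_{\theta,n}^{h}(t,s)u$ is jointly continuous on $\{0\le s\le t\le T\}$, being a finite product of uniformly bounded strongly continuous one-parameter groups $e^{-i\tau H_{\theta}^{h}(t_{j})}$; since the convergence $U_{\theta,n}^{h}(t,s)u\to U_{\theta}^{h}(t,s)u$ is uniform in $(t,s)$, joint strong continuity of $U_{\theta}^{h}(t,s)$ in $\mathcal{L}(L^{2}(\mathbb{R}))$ follows at once. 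Letting $n\to\infty$ in (\ref{propagator_id}), and using that $U_{\theta,n}^{h}(t,r)$ is uniformly bounded while $U_{\theta,n}^{h}(r,s)u\to U_{\theta}^{h}(r,s)u$, gives $U_{\theta}^{h}(s,s)=1_{L^{2}(\mathbb{R})}$ and the cocycle identity.

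Next I would check that, for $u\in D(\Delta_{\theta})$, $U_{\theta}^{h}(t,s)u$ solves the Cauchy problem (\ref{Time_dependent_eq}). Fix $t_{0}\in[s,T]$, take $\tilde{t}=t_{0}$ in (\ref{Cauchy_id_lim}), and differentiate the resulting identity in $t$ at $t=t_{0}$. The contribution of the upper endpoint of the integral carries the factor $\mathcal{V}(t_{0})-\mathcal{V}(t_{0})=0$ and drops out, while differentiating the integrand brings down $-iH_{\theta}^{h}(t_{0})$; re-substituting (\ref{Cauchy_id_lim}) at $t=t_{0}$ collapses the outcome to $i\partial_{t}U_{\theta}^{h}(t,s)u|_{t=t_{0}}=H_{\theta}^{h}(t_{0})U_{\theta}^{h}(t_{0},s)u$. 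The ingredients that make this differentiation legitimate are already in place: $U_{\theta}^{h}(\tau,s)u\in D(\Delta_{\theta})$ with a $\tau$-uniform $D(\Delta_{\theta})$-bound (the Proposition \ref{Proposition_propagator_lim}); the fact, established inside the proof of the Lemma \ref{Lemma_propagator_n}, that $\mathcal{V}(t_{0})-\mathcal{V}(\tau)$ is a multiplier of $D(\Delta_{\theta})$, so that $(\mathcal{V}(t_{0})-\mathcal{V}(\tau))U_{\theta}^{h}(\tau,s)u$ is $D(\Delta_{\theta})$-bounded uniformly in $\tau$; the strong continuity of $e^{-i\tau H_{\theta}^{h}(t_{0})}$ on both $L^{2}(\mathbb{R})$ and $D(\Delta_{\theta})$ from the Theorem \ref{Theorem_propagator}; and the closedness of $H_{\theta}^{h}(t_{0})$, which permits pulling it through the Bochner integral. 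The initial condition $U_{\theta}^{h}(s,s)u=u$ is inherited from $U_{\theta,n}^{h}(s,s)=1$.

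For uniqueness I would show that \emph{any} family $V_{\theta}^{h}(t,s)$ with the properties listed in the statement satisfies (\ref{Cauchy_id_lim}). Fix $s\le t$ and an arbitrary $\tilde{t}\in[0,T]$; for $u\in D(\Delta_{\theta})$ differentiate $\tau\mapsto e^{-i(t-\tau)H_{\theta}^{h}(\tilde{t})}V_{\theta}^{h}(\tau,s)u$ on $[s,t]$. Using that $V_{\theta}^{h}(\tau,s)u\in D(\Delta_{\theta})$ (which is part of being a solution of (\ref{Time_dependent_eq})) one gets $\partial_{\tau}\bigl(e^{-i(t-\tau)H_{\theta}^{h}(\tilde{t})}V_{\theta}^{h}(\tau,s)u\bigr)=i\,e^{-i(t-\tau)H_{\theta}^{h}(\tilde{t})}(\mathcal{V}(\tilde{t})-\mathcal{V}(\tau))V_{\theta}^{h}(\tau,s)u$, and integrating from $s$ to $t$ reproduces (\ref{Cauchy_id_lim}) with $U_{\theta}^{h}$ replaced by $V_{\theta}^{h}$. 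Subtracting the identity for two solution families, writing $w(t)=U_{\theta}^{h}(t,s)u-V_{\theta}^{h}(t,s)u$ with $w(s)=0$, and using $\sup_{\tau\in[0,T]}\|\mathcal{V}(\tilde{t})-\mathcal{V}(\tau)\|_{L^{\infty}(\mathbb{R})}<\infty$ (a consequence of $\mathcal{V}\in\mathcal{C}^{0}([0,T],L^{\infty}(\mathbb{R}))$) together with the uniform bound (\ref{propagator_bound_0}) on the instantaneous propagators, a Gr\"{o}nwall estimate for $t\mapsto\|w(t)\|_{L^{2}(\mathbb{R})}$ forces $w\equiv 0$ on $D(\Delta_{\theta})$, hence on $L^{2}(\mathbb{R})$ by density of the inclusion $D(\Delta_{\theta})\subset L^{2}(\mathbb{R})$.

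The step I expect to be the main obstacle is the rigorous differentiation of (\ref{Cauchy_id_lim}) in the second paragraph, i.e.\ differentiating under the Bochner integral and commuting the closed unbounded operator $H_{\theta}^{h}(t_{0})$ with it, knowing only that $\mathcal{V}(t)$ is continuous in $t$ with values in $L^{\infty}(\mathbb{R})$: the map $\tau\mapsto\mathcal{V}(t_{0})-\mathcal{V}(\tau)$ is bounded but, a priori, not continuous as a $W^{2,\infty}([a,b])$-valued (hence $\mathcal{L}(D(\Delta_{\theta}))$-valued) map, which is the regularity one would naively want for the integrand. The way around this is to exploit reflexivity of $D(\Delta_{\theta})$ (a closed subspace of $H^{2}(\mathbb{R}\setminus\{a,b\})$): the uniform $D(\Delta_{\theta})$-bound together with $L^{2}$-continuity of $\tau\mapsto U_{\theta}^{h}(\tau,s)u$ yields weak $D(\Delta_{\theta})$-continuity of this map, and the explicit structure $H_{\theta}^{h}(t)=-h^{2}\Delta_{\theta}+\mathcal{V}(t)$ together with the multiplier estimate of the Lemma \ref{Lemma_propagator_n} then lets one pass from the weak to the strong form of the equation.
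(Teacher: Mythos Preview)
Your proposal is correct and follows the same overall architecture as the paper: pass the strong continuity and cocycle identities to the limit from $U_{\theta,n}^{h}$, then differentiate the variation-of-parameters identity (\ref{Cauchy_id_lim}) to recover the evolution equation. The differences are in execution.

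For the differentiation step, the obstacle you flag (commuting the closed operator $H_{\theta}^{h}(t_{0})$ with a Bochner integral whose integrand is only known to be $L^{2}$-continuous and $D(\Delta_{\theta})$-bounded) is real, but the paper sidesteps it entirely. Instead of differentiating under the integral, the paper writes the finite difference $U_{\theta}^{h}(t+\delta,s)u-U_{\theta}^{h}(t,s)u$ directly from (\ref{Cauchy_id_lim}): the integral over $[s,t]$ picks up a left factor $(e^{-i\delta H_{\theta}^{h}(\tilde{t})}-1)$, and the key observation is that this integral \emph{is itself an element of $D(\Delta_{\theta})$}, simply because (\ref{Cauchy_id_lim}) exhibits it as the difference $e^{-i(t-s)H_{\theta}^{h}(\tilde{t})}u-U_{\theta}^{h}(t,s)u$ of two domain vectors. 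One then uses only the strong differentiability of the group $e^{-i\tau H_{\theta}^{h}(\tilde{t})}$ on domain vectors, never a dominated-convergence or Hille-type argument inside the integral. Your reflexivity/weak-continuity route would also work, but it is heavier than necessary.

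For uniqueness, the paper does not derive an integral identity for a generic solution $V_{\theta}^{h}$ and apply Gr\"{o}nwall; instead it compares $V_{\theta}^{h}$ directly with the approximants $U_{\theta,n}^{h}$ via the analogue of (\ref{Cauchy_id_0}), obtaining $V_{\theta}^{h}(t,s)=\lim_{n}U_{\theta,n}^{h}(t,s)=U_{\theta}^{h}(t,s)$ from the same estimate (\ref{Cauchy_cond_1})--(\ref{Cauchy_cond_2}) already used for existence. Your Gr\"{o}nwall argument is equally valid and arguably more self-contained, since it does not re-invoke the discrete approximation.
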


\begin{proof}
For each $n$, $U_{\theta,n}^{h}\left(  t,s\right)  $ is strongly continuous in
$t$ and $s$ w.r.t. the $\mathcal{L}\left(  L^{2}\left(  \mathbb{R}\right)
\right)  $ topology, and fulfills the identities (\ref{propagator_id}). The
uniform convergence of the sequence in $\mathcal{L}\left(  L^{2}\left(
\mathbb{R}\right)  \right)  $ allows to extend this characterization to its
limit $U_{\theta}^{h}\left(  t,s\right)  $.

Let $u\in D\left(  \Delta_{\theta}\right)  $, $\tilde{t}\in\left[  0,T\right]
$ and $\delta>0$; the relation (\ref{Cauchy_id_lim}) yields%
\begin{equation}
U_{\theta}^{h}\left(  t,s\right)  u=e^{-i\left(  t-s\right)  H_{\theta}%
^{h}\left(  \tilde{t}\right)  }+i\int_{s}^{t}e^{-i\left(  t-t^{\prime}\right)
H_{\theta}^{h}\left(  \tilde{t}\right)  }\left(  \mathcal{V}\left(  \tilde
{t}\right)  -\mathcal{V}\left(  t^{\prime}\right)  \right)  U_{\theta}%
^{h}\left(  t^{\prime},s\right)  u\,dt^{\prime}\,.
\end{equation}
It follows%
\begin{align}
&  \left.  \left(  U_{\theta}^{h}\left(  t+\delta,s\right)  -U_{\theta}%
^{h}\left(  t,s\right)  \right)  u=\left(  e^{-i\left(  t+\delta-s\right)
H_{\theta}^{h}\left(  \tilde{t}\right)  }-e^{-i\left(  t-s\right)  H_{\theta
}^{h}\left(  \tilde{t}\right)  }\right)  u\right. \nonumber\\
& \nonumber\\
&  \left.  +\ i\left(  e^{-i\delta H_{\theta}^{h}\left(  \tilde{t}\right)
}-1\right)  \int_{s}^{t}e^{-i\left(  t-t^{\prime}\right)  H_{\theta}%
^{h}\left(  \tilde{t}\right)  }\left(  \mathcal{V}\left(  \tilde{t}\right)
-\mathcal{V}\left(  t^{\prime}\right)  \right)  U_{\theta}^{h}\left(
t^{\prime},s\right)  u\,dt^{\prime}\right. \nonumber\\
& \nonumber\\
&  \left.  +\ ie^{-i\delta H_{\theta}^{h}\left(  \tilde{t}\right)  }\int%
_{t}^{t+\delta}e^{-i\left(  t-t^{\prime}\right)  H_{\theta}^{h}\left(
\tilde{t}\right)  }\left(  \mathcal{V}\left(  \tilde{t}\right)  -\mathcal{V}%
\left(  t^{\prime}\right)  \right)  U_{\theta}^{h}\left(  t^{\prime},s\right)
u\,dt^{\prime}\right.  \,.
\end{align}
with%
\begin{equation}
\left(  e^{-i\left(  t-s\right)  H_{\theta}^{h}\left(  \tilde{t}\right)
}-U_{\theta}^{h}\left(  t,s\right)  \right)  u=-i\int_{s}^{t}e^{-i\left(
t-t^{\prime}\right)  H_{\theta}^{h}\left(  \tilde{t}\right)  }\left(
\mathcal{V}\left(  \tilde{t}\right)  -\mathcal{V}\left(  t^{\prime}\right)
\right)  U_{\theta}^{h}\left(  t^{\prime},s\right)  u\,dt^{\prime}\in D\left(
\Delta_{\theta}\right)  \,.
\end{equation}
Since $d/dt\,e^{-itH_{\theta}^{h}\left(  \tilde{t}\right)  }u=-iH_{\theta}%
^{h}\left(  \tilde{t}\right)  e^{-itH_{\theta}^{h}\left(  \tilde{t}\right)
}u$ for all $u\in D\left(  \Delta_{\theta}\right)  $, we get%
\begin{align}
&  \left.  \lim_{\delta\rightarrow0^{+}}1/\delta\left(  U_{\theta}^{h}\left(
t+\delta,s\right)  -U_{\theta}^{h}\left(  t,s\right)  \right)  u=-iH_{\theta
}^{h}\left(  \tilde{t}\right)  e^{-i\left(  t-s\right)  H_{\theta}^{h}\left(
\tilde{t}\right)  }u\right. \nonumber\\
& \nonumber\\
&  \left.  +\ H_{\theta}^{h}\left(  \tilde{t}\right)  \int_{s}^{t}e^{-i\left(
t-t^{\prime}\right)  H_{\theta}^{h}\left(  \tilde{t}\right)  }\left(
\mathcal{V}\left(  \tilde{t}\right)  -\mathcal{V}\left(  t^{\prime}\right)
\right)  U_{\theta}^{h}\left(  t^{\prime},s\right)  u\,dt^{\prime}\right.
\nonumber\\
& \nonumber\\
&  \left.  +\ \lim_{\delta\rightarrow0^{+}}i/\delta\,e^{-i\delta H_{\theta
}^{h}\left(  \tilde{t}\right)  }\int_{t}^{t+\delta}e^{-i\left(  t-t^{\prime
}\right)  H_{\theta}^{h}\left(  \tilde{t}\right)  }\left(  \mathcal{V}\left(
\tilde{t}\right)  -\mathcal{V}\left(  t^{\prime}\right)  \right)  U_{\theta
}^{h}\left(  t^{\prime},s\right)  u\,dt^{\prime}\right.  \,,
\end{align}
which leads to%
\begin{align}
&  \left.  \lim_{\delta\rightarrow0^{+}}1/\delta\left(  U_{\theta}^{h}\left(
t+\delta,s\right)  -U_{\theta}^{h}\left(  t,s\right)  \right)  u=-iH_{\theta
}^{h}\left(  \tilde{t}\right)  U_{\theta}^{h}\left(  t,s\right)  u\right.
\nonumber\\
& \nonumber\\
&  \left.  +\ \lim_{\delta\rightarrow0^{+}}i/\delta\,e^{-i\delta H_{\theta
}^{h}\left(  \tilde{t}\right)  }\int_{t}^{t+\delta}e^{-i\left(  t-t^{\prime
}\right)  H_{\theta}^{h}\left(  \tilde{t}\right)  }\left(  \mathcal{V}\left(
\tilde{t}\right)  -\mathcal{V}\left(  t^{\prime}\right)  \right)  U_{\theta
}^{h}\left(  t^{\prime},s\right)  u\,dt^{\prime}\right.  \,.
\end{align}
In particular, choosing $\tilde{t}=t$, we have%
\[
\left\Vert \int_{t}^{t+\delta}e^{-i\left(  t-t^{\prime}\right)  H_{\theta}%
^{h}\left(  \tilde{t}\right)  }\left(  \mathcal{V}\left(  \tilde{t}\right)
-\mathcal{V}\left(  t^{\prime}\right)  \right)  U_{\theta}^{h}\left(
t^{\prime},s\right)  u\,dt^{\prime}\right\Vert _{L^{2}\left(  \mathbb{R}%
\right)  }\leq\delta C_{a,b,c}\sup_{t^{\prime}\in\left[  t+\delta,t\right]
}\left\Vert \mathcal{V}\left(  t\right)  -\mathcal{V}\left(  t^{\prime
}\right)  \right\Vert =o\left(  \delta\right)  \,,
\]
and the previous limit reduces to%
\begin{equation}
\lim_{\delta\rightarrow0^{+}}1/\delta\left(  U_{\theta}^{h}\left(
t+\delta,s\right)  -U_{\theta}^{h}\left(  t,s\right)  \right)  u=-iH_{\theta
}^{h}\left(  t\right)  U_{\theta}^{h}\left(  t,s\right)  u\,.
\end{equation}
This shows that: $D_{t}^{+}U_{\theta}^{h}\left(  t,s\right)  u=-iH_{\theta
}^{h}\left(  \tilde{t}\right)  U_{\theta}^{h}\left(  t,s\right)  u$ for any
$u\in D\left(  \Delta_{\theta}\right)  $. Following the same line for the left
derivative, we obtain%
\begin{equation}
\frac{d}{dt}U_{\theta}^{h}\left(  t,s\right)  u=-iH_{\theta}^{h}\left(
t\right)  U_{\theta}^{h}\left(  t,s\right)  u\,.
\end{equation}

Assume that $V_{\theta}^{h}\left(  t,s\right)  $ is a solution of
(\ref{Time_dependent_eq}); then it expresses as%
\begin{equation}
\left(  V_{\theta}^{h}\left(  t,s\right)  -U_{\theta,n}^{h}\left(  t,s\right)
\right)  u=-i\int_{s}^{t}U_{\theta,n}^{h}\left(  t,t^{\prime}\right)  \left(
H_{\theta}^{h}\left(  t^{\prime}\right)  -H_{\theta,n}^{h}\left(  t^{\prime
}\right)  \right)  V_{\theta}^{h}\left(  t^{\prime},s\right)  u\,dt^{\prime
}\,.
\end{equation}
Estimating the difference at the r.h.s. as in (\ref{Cauchy_cond_1}%
)-(\ref{Cauchy_cond_2}), we get the identity
\begin{equation}
V_{\theta}^{h}\left(  t,s\right)  =\lim_{n\rightarrow\infty}U_{\theta,n}%
^{h}\left(  t,s\right)  =U_{\theta}^{h}\left(  t,s\right)  \,,
\end{equation}
both in the $\mathcal{L}\left(  L^{2}\left(  \mathbb{R}\right)  \right)  $ and
in the $\mathcal{L}\left(  D\left(  \Delta_{\theta}\right)  \right)  $-norm
sense. This yields the uniqueness of the solution.
\end{proof}

We are now in the position to conclude the proof of the Theorem
\ref{Theorem_main}.

\begin{proof}
[Proof of the Theorem \ref{Theorem_main}]The first part of the statement
follows from the results in the Theorem \ref{Theorem_propagator_lim}. Let us
discuss the estimate (\ref{propagator_est_global_h}). For each $n$,
$U_{\theta,n}^{h}\left(  t,s\right)  $ is $\theta$-holomorphic in
$\mathcal{L}\left(  L^{2}\left(  \mathbb{R}\right)  \right)  $ and the uniform
convergence of the sequence implies that $U_{\theta}^{h}\left(  t,s\right)  $
is $\theta$-holomorphic w.r.t. the $L^{2}\left(  \mathbb{R}\right)  $-operator
norm in the ball: $\left\vert \theta\right\vert \leq h^{N_{0}}$, $N_{0}>2$.
Let us fix $t,s$ and $h$; it results%
\begin{equation}
U_{\theta}^{h}\left(  t,s\right)  -U_{0}^{h}\left(  t,s\right)  =\theta
\,D_{\theta}^{h}\left(  t,s\right)  \,, \label{propagator_exp}%
\end{equation}
where the operator $D_{\theta}^{h}\left(  t,s\right)  $ fulfills the estimate%
\begin{equation}
\sup_{\left\vert \theta\right\vert \leq h^{N_{0}}}\left\Vert D_{\theta}%
^{h}\left(  t,s\right)  \right\Vert _{\mathcal{L}\left(  L^{2}\left(
\mathbb{R}\right)  \right)  }\leq m^{h}\left(  t,s\right)  \,,
\label{propagator_exp_1}%
\end{equation}
with $m^{h}\left(  t,s\right)  >0$ possibly depending on $h$ and on the couple
$t,s$. Fixing $\theta=h^{2+\delta}$ with $\delta>0$ arbitrarily small, it
follows%
\begin{equation}
\left\Vert U_{\theta}^{h}\left(  t,s\right)  -U_{0}^{h}\left(  t,s\right)
\right\Vert _{\mathcal{L}\left(  L^{2}\left(  \mathbb{R}\right)  \right)
}\leq h^{2+\delta}\,m^{h}\left(  t,s\right)  \,,
\label{propagator_est_global_h_1}%
\end{equation}
The uniform bound (\ref{propagator_bound_lim}) implies%
\begin{equation}
\sup_{\substack{t,s\in\left[  0,T\right]  \\h\in\left(  0,h_{0}\right]
}}\left\Vert U_{\theta}^{h}\left(  t,s\right)  -U_{0}^{h}\left(  t,s\right)
\right\Vert _{\mathcal{L}\left(  L^{2}\left(  \mathbb{R}\right)  \right)
}\leq2\exp\left(  C_{a,b,c}\sup_{t\in\left[  0,T\right]  }\left\Vert
\mathcal{V}\left(  t\right)  \right\Vert _{L^{\infty}\left(  \mathbb{R}%
\right)  }\right)  \,,\qquad t\geq s\,,
\end{equation}
Hence, (\ref{propagator_est_global_h_1}) yields%
\begin{equation}
\sup_{t,s\in\left[  0,T\right]  }m^{h}\left(  t,s\right)  \leq\frac{M_{a,b,c}%
}{h^{2+\delta}}\sup_{t\in\left[  0,T\right]  }\left\Vert \mathcal{V}\left(
t\right)  \right\Vert _{L^{\infty}\left(  \mathbb{R}\right)  }\,.
\label{propagator_exp_lim}%
\end{equation}
with $M_{a,b,c}>0$ possibly depending on the data, but independent from $T$.
The estimate (\ref{propagator_est_global_h}) finally follows from
(\ref{propagator_exp})-(\ref{propagator_exp_1}) and (\ref{propagator_exp_lim}).
\end{proof}

\bigskip

\bigskip

\bigskip

\bigskip
\end{document}